\documentclass[11pt]{article}

\usepackage{amsfonts}
\usepackage{amsmath}
\usepackage{amssymb}
\usepackage{graphicx}
\usepackage{epstopdf}
\usepackage{float}
\usepackage{natbib}
\usepackage{theorem}

\renewcommand{\baselinestretch}{1.8}

\newtheorem{theorem}{Theorem}

\newtheorem{lemma}{Lemma}

\newtheorem{proposition}{Proposition}
\newcommand{\Perp}{\perp\!\!\!\perp}
\newcommand{\qed}{\hfill $\blacksquare$}
\newenvironment{proof}[1][Proof]{\begin{trivlist}
\item[\hskip \labelsep {\bfseries #1}]}{\end{trivlist}}

\begin{document}

\title{Pre-averaging estimators of the ex-post covariance matrix\\[-0.50cm]
in noisy diffusion models with non-synchronous data\thanks{Christensen and Podolskij were supported by CREATES, which is funded by the Danish National Research Foundation. Kinnebrock received financial support from the Rhodes Trust and Oxford-Man Institute of Quantitative Finance. We would like to thank the editor Ron Gallant, two anonymous referees, Neil Shephard and Mathias Vetter for valuable comments on previous drafts of this paper.}}
\author{Kim Christensen\thanks{CREATES, School of Economics and Management, Aarhus University, Building 1322, Bartholins All\'{e} 10, 8000 Aarhus, Denmark. E-mail \texttt{kchristensen@creates.au.dk}. Phone: +45 8942 1550. Corresponding author.} \and Silja Kinnebrock\thanks{Oxford-Man Institute of Quantitative Finance, University of Oxford, Eagle House, Walton Well Road, Oxford OX2 6ED, United Kingdom. E-mail: \texttt{silja.kinnebrock@merton.oxon.org}. } \and Mark Podolskij\thanks{ETH Z\"{u}rich, Department of Mathematics, R\"{a}mistrasse 101, CH-8092 Z\"{u}rich, Switzerland and affiliated with CREATES, Aarhus University, Denmark. E-mail: \texttt{mark.podolskij@math.ethz.ch}.}}
\date{April, 2010}
\maketitle

\vspace*{-1.0cm}

\begin{abstract}
We show how pre-averaging can be applied to the problem of measuring the ex-post covariance of financial asset returns under microstructure noise and non-synchronous trading. A pre-averaged realised covariance is proposed, and we present an asymptotic theory for this new estimator, which can be configured to possess an optimal convergence rate or to ensure positive semi-definite covariance matrix estimates. We also derive a noise-robust Hayashi-Yoshida estimator that can be implemented on the original data without prior alignment of prices. We uncover the finite sample properties of our estimators with simulations and illustrate their practical use on high-frequency equity data.

\bigskip \noindent \textbf{Keywords}: Central limit theorem; Diffusion models; High-frequency data; Market microstructure noise; Non-synchronous trading; Pre-averaging; Realised covariance.

\medskip \noindent \textbf{JEL Classification}: C10; C22; C80.

\end{abstract}

\vfill

\thispagestyle{empty}

\pagebreak

\renewcommand{\baselinestretch}{2.0}

\section{Introduction}
\setcounter{page}{1}

The theory of financial economics is often cast in multivariate settings, where the covariance structure of assets plays a key role to the solution of fundamental economic problems, such as optimal asset allocation and risk management. In recent years, a broader access to financial high-frequency data has improved our ability to accurately estimate and draw inference about financial covariation. The underlying idea is to use quadratic covariation, which we can estimate using realised covariance, as an ex-post measure, whose increments can be studied to learn about the properties of the true asset return covariation \citep*[e.g.][]{andersen-bollerslev-diebold-labys:03a,barndorff-nielsen-shephard:04a}.

In practice, implementing realised covariance is hampered by two empirical phenomena, namely the presence of market microstructure noise (e.g., price discreteness or bid-ask spread bounce) and non-synchronous trading. The impact of microstructure noise has received much attention in the univariate setting, where its effect on the realised variance has been well-documented. This builds on previous work in the noiseless case, including \citet*{andersen-bollerslev-diebold-labys:01a}, \citet*{barndorff-nielsen-shephard:02a}, or \citet*{mykland-zhang:06a,mykland-zhang:09a}. A key to understanding the nature of the noise and a possible tool of how to deal with it is that microstructure noise induces autocorrelation in high-frequency returns and this leads to a bias problem \citep*[see, e.g.,][]{zhou:96a,ait-sahalia-mykland-zhang:05a,hansen-lunde:06a}. Currently, there are three main univariate approaches, where the damage caused by the noise is explicitly fixed: the two-scale subsampler proposed by \citet*{zhang-mykland-ait-sahalia:05a} or its multi-scale version of \citet*{zhang:06a}, the realised kernel introduced in \citet*{barndorff-nielsen-hansen-lunde-shephard:08a}, which relies on autocovariance-based corrections, and finally the pre-averaging estimator of \citet*{podolskij-vetter:09a} and \citet*{jacod-li-mykland-podolskij-vetter:09a}.

The multivariate version of this problem is, however, more complicated in that not only does the estimator need to be robust against various types of noise, it also has to cope with non-synchronous trading \citep*[see, e.g.,][]{fisher:66a}. Asynchronicity causes high-frequency covariance estimates to be biased towards zero as the sampling frequency increases. This feature of the data, the so-called Epps effect, was highlighted by \citet*{epps:79a}. Intuitively, as the sampling frequency is increased, there are more and more zero-returns in the presence of non-synchronous trading, and this will dominate realised covariance and related statistics (e.g. realised correlation). \citet*{hayashi-yoshida:05a} introduced an estimator, which is capable of dealing with non-synchronous data, but not with market microstructure noise. More recently, \citet*{zhang:08a} extended the two-scale RV to integrated covariance estimation in the simultaneous presence of noise and non-synchronicity, while in concurrent and independent work \citet*{barndorff-nielsen-hansen-lunde-shephard:08c} proposed a multivariate realised kernel. Additional work in this growing line of research includes \citet*{malliavin-mancino:02a}, \citet*{martens:03a}, \citet*{reno:03a}, \citet*{bandi-russell:05a}, \citet*{griffin-oomen:06a}, \citet*{large:07a}, \citet*{voev-lunde:07a}, and \citet*{boudt-croux-laurent:08a}, among others.

In this paper, we propose to use a "modulated" realised covariance (MRC) to estimate the ex-post integrated covariance. The econometric technique employed here for dealing with microstructure noise relies on rather simple pre-averaging of the high-frequency data, which makes the estimator both intuitive to understand and trivial to implement. It relates to previous work in the univariate case, where pre-averaging has been suggested in \citet*{podolskij-vetter:09a} and \citet*{jacod-li-mykland-podolskij-vetter:09a}. The current article draws ideas from these papers, but the multivariate extension is challenging, as it faces the additional complexity of non-synchronous trading and requires that the resulting estimator be positive semi-definite.

The pre-averaging approach depends on a bandwidth parameter, or window length, that grows with the sample and dictates the amount of averaging to be carried out. In turn, the choice of this tuning parameter controls the influence of microstructure noise on the MRC and, hence, also its asymptotic properties. In the optimal case, called balanced pre-averaging, this leads to an efficient $n^{- 1 / 4}$ rate of convergence, which is known to be the fastest attainable \citep*[see, e.g.,][]{gloter-jacod:01a,gloter-jacod:01b}. This baseline MRC estimator, however, needs a bias-correction to be consistent for the integrated covariance. As a result, it is not guaranteed to be positive semi-definite in finite samples, though our empirical work indicates this shortcoming is not too much of a concern for more recent data. Nonetheless, as we show in the paper, it is straightforward to design a positive semi-definite estimator by increasing the pre-averaging window length slightly, which can also serve to make the MRC robust against more general noise processes.

The MRC is, in all its essence, a realised covariance computed on the back of pre-averaged high-frequency returns. As such, it depends on receiving synchronous observations as input, which clashes with the irregular spacing of real high-frequency data. We propose two distinct ways in which pre-averaging can be applied in the context of non-synchronous trading. First, we use traditional imputation schemes to map asynchronous data onto a common time grid, for example using previous-tick or refresh time, where the latter approach has been used in \citet*{barndorff-nielsen-hansen-lunde-shephard:08c}. An MRC computed from such returns will be asymptotically robust to non-synchronous trading. Second, we extend the \citet*{hayashi-yoshida:05a} estimator to the case of microstructure noise by using pre-averaging and show that it is consistent. This second estimator has the property that it can be implemented directly on the irregular, non-synchronous and noisy observations without any form of imputation. It therefore omits throwing away information in the sample and further avoids potential biases arising from artificially imputed returns.

An appealing feature of pre-averaging is that it is a general statistical tool that can be applied to many estimation problems. This proves useful in our setting, because as usual the mixed Gaussian central limit theorems feature an unknown conditional covariance matrix. In practice, this must be robustly estimated from sample data in the presence of noise and non-synchronous trading to make the distributional results feasible, such that confidence bands for elements of the integrated covariance matrix can be constructed. We outline how this can be done based on pre-averaged high-frequency data.

The paper progresses as follows. In section 2, we formulate the theoretical setup and define the MRC estimator. In section 3, we first show consistency of the MRC based on balanced pre-averaging and then derive its asymptotic distribution. As discussed above, this estimator needs a bias-correction, so we carry on to study a modified MRC estimator, in which the degree of pre-averaging is increased. We also discuss the application of the MRC to non-synchronous data, show its relation to the multivariate realised kernel, and finally we derive a pre-averaged version of the Hayashi-Yoshida estimator. In section 4, we propose an estimator of the conditional covariance matrix that appears in the central limit theorem of MRC, which can be used to transform infeasible limit results into feasible ones. In section 5, the focus is shifted towards regression and correlation analysis. A simulation study is undertaken in section 6 to uncover the finite sample properties of our estimators, while an empirical illustration is conducted in section 7. Section 8 draws conclusions and presents some ideas for future work. The appendix contains the derivations of all our theoretical results.

\section{Theoretical setup}
We consider a vector of log-prices $X$ defined on a probability space $(\Omega^0, \mathcal F^0, P^0)$ and equipped with an information filtration $(\mathcal F_t^0)_{t \geq 0}$. $X$ has dimension $d$ - the number of assets under consideration.

A standard no-arbitrage condition suggests security prices must be semimartingales \citep*[see, e.g.,][]{back:91a,delbaen-schachermayer:94a}. These processes obey the fundamental theorem of asset pricing and, as a result, are used extensively to model the evolution of asset prices through time. In accordance with this, we model $X$ as a semimartingale that follows the equation
\begin{equation}
\label{bsm}
X_{t} = X_{0} + \int_{0}^{t} a_{u} \text{d}u  + \int_{0}^{t} \sigma_{u} \text{d}W_{u}, \quad t \geq 0,
\end{equation}
where $a = \left( a_{t} \right)_{t \geq 0}$ is a $d$-dimensional predictable locally bounded drift vector, $\sigma = \left( \sigma_{t} \right)_{t \geq 0}$ an adapted c\`{a}dl\`{a}g $d \times d$ covolatility matrix and $W = \left( W_{t} \right)_{t \geq 0}$ is $d$-dimensional Brownian motion.

This model is a Brownian semimartingale, or stochastic volatility model with drift, which permeates financial economics \citep*[cf.,][for a review]{ghysels-harvey-renault:96a}. We think of this construct as governing an underlying efficient price process - the price that would prevail in the absence of market frictions, which we then subject to microstructure noise.

Of importance to our analysis is the quadratic covariation process of $X$, which is defined as
\begin{equation}
\label{qv}
\left[ X \right]_{t} = \underset{n \to \infty}{\text{p-lim}} \sum_{i = 1}^{n} \left( X_{t_{i}} - X_{t_{i - 1}} \right) \left( X_{t_{i}}  - X_{t_{i - 1}} \right)'
\end{equation}
for any sequence of deterministic partitions $0 = t_0  < t_1  < ... < t_n  = t$ with $\sup_{i} \left\{ t_{i} - t_{i - 1} \right\} \to 0$ for $n \to \infty$. In our setting, the quadratic covariation of $X$ is given by
\begin{equation}
\label{qvbsm}
\left[ X \right]_{t} = \int_{0}^{t} \Sigma_{u} \text{d}u,
\end{equation}
where $\Sigma = \sigma \sigma'$. The quadratic covariation is pivotal in financial economics \citep*[see, e.g., the reviews by][]{barndorff-nielsen-shephard:07a,andersen-bollerslev-diebold:09a}, and we thus take Eq. \eqref{qvbsm} as defining the target that we are interested in estimating. We note that for obvious reasons the matrix in Eq. \eqref{qvbsm} is also called the integrated covariance and both terms are used interchangeably.

Throughout the remainder of the paper, and without loss of generality, we restrict the clock $t$ to evolve in the unit interval $[0,1]$, which we think of as representing the passing of an economic event, for example a trading day.

\subsection{Microstructure noise}
In practice, market microstructure noise leads to a departure from the pure semimartingale model. Microstructure noise has many forms, including price discreteness and bid-ask spread bounce, which creates spurious variation in asset prices. As a result, we do not observe $X$ from Eq. \eqref{bsm} in the market but a process $Y$, which is the efficient price distorted by noise. More precisely, we consider the process $Y$, observed at time points $i / n$, $i = 0, 1, \ldots, n$, which is given as
\begin{equation}
\label{zsh}
Y_{t} = X_{t} + \epsilon_{t},
\end{equation}
where $(\epsilon_{t})$ is an i.i.d. process with $X \Perp \epsilon$ (the symbol $\Perp$ is used to denote stochastic independence).

The noise process can be constructed as follows. We define a second probability space $(\Omega^1, \mathcal F^{1}, ( \mathcal{F}_{t}^{1})_{t \geq 0}, P^1)$, where $\Omega^{1}$ denotes $\mathbb{R}^{[0,1]}$ and $\mathcal{F}^{1}$ the product Borel-$\sigma$-field on $\Omega^{1}$. Next, let $Q$ be a probability measure on $\mathbb{R}$ ($Q$ is the marginal distribution of $\epsilon$). For any $t\in [0,1]$, $P_t^{1}=Q$ and $P^1$ denotes the product $\otimes_{t \in [0,1]} P_t^{1}$. The filtered probability space $(\Omega, \mathcal F, (\mathcal{F}_{t})_{t \geq 0}, P)$, on which we define the process $Y$, is given as
\begin{equation}
\left.
\begin{array}{l}
\Omega = \Omega^{0} \times \Omega^{1}, \qquad \mathcal{F} = \mathcal{F}^{0} \times \mathcal{F}^{1}, \qquad \mathcal{F}_{t} = \bigcap_{s>t} \mathcal{F}_s^{0} \times \mathcal{F}_s^{1}, \\
P = P^{0} \otimes P^{1}.
\end{array}
\right\}
\end{equation}
The multivariate noise process $\epsilon$ is assumed to satisfy:
\begin{equation}
\label{assnoise2} \mathbb{E} \left( \epsilon_t \right) = 0, \qquad \mathbb{E} \left( \epsilon_t \epsilon_t' \right) = \Psi,
\end{equation}
where $\Psi$ is a positive definite $d \times d$-matrix.

\noindent \textbf{Remark} \ The empirical results found by \citet*{hansen-lunde:06a} show that both the i.i.d. assumption on $(\epsilon_t)$ and the independence $X \Perp \epsilon$ can be called into question when sampling the data at very high frequencies, e.g., below the 1-minute mark \citep*[see also][]{diebold-strasser:08a}. \citet*{jacod-li-mykland-podolskij-vetter:09a} consider more general types of (1-dimensional) noise processes. Roughly speaking, they assume that the errors $\epsilon_t$'s are, conditionally on $X$, centered and independent. The asymptotic theory developed in this paper still holds true for the multivariate version of such noise processes, but we restrict attention to models of the form in Eq. \eqref{zsh} to ease the exposition.

\subsection{Pre-averaging of high-frequency data}
\label{preavg} It is intuitive that under mean zero i.i.d. microstructure noise some form of smoothing of the observed log-price $Y$ should tend to diminish the impact of the noise. Effectively, we are going to approximate $X_{t}$, $X$ being a continuous function of $t$, by an average of observations of $Y$ in a neighborhood of $t$, the noise being averaged away.

Here, we describe in more detail how to conduct the pre-averaging. In particular, we consider a sequence of integers, $k_{n}$, and a number $\theta \in (0, \infty)$ such that
\begin{equation}
\label{kthetarelation}
\frac{k_{n}}{\sqrt{n}} = \theta + o \bigl( n^{- 1 / 4} \bigr).
\end{equation}
An example of this would be $k_{n} = \left\lfloor \theta \sqrt{n} \right \rfloor$.

We also choose a function $g$ on $[0,1]$, which is continuous, piecewise continuously differentiable with a piecewise Lipschitz derivative $g'$ with $g(0) = g(1) = 0$ and which satisfies $\int_0^1 {g^2 \left( s \right) \text{d}s > 0}$. Furthermore, we introduce the following functions and numbers that are associated with $g$:
\begin{equation*}
\phi_1\left( s \right) = \int_s^1 {g'\left( u \right)g'\left( {u - s} \right)\text{d}u},
\qquad
\phi_2\left( s \right) = \int_s^1 {g\left( u \right)g\left( {u - s} \right)\text{d}u}, \qquad
\psi _1  = \phi_1 \left( 0 \right),
\qquad
\psi _2  = \phi_2 \left( 0 \right),
\end{equation*}
\begin{equation*}
\Phi _{11} = \int_0^1 {\phi_1^2 \left( s \right)\text{d}s},
\qquad
\Phi _{12} = \int_0^1 {\phi_1 \left( s \right)\phi_2 \left( s \right)\text{d}s},
\qquad
\Phi _{22} = \int_0^1 {\phi_2^2 \left( s \right)\text{d}s}.
\end{equation*}
The functions $\phi_1$ and $\phi_2$ are assumed to be $0$ outside the interval $[0,1]$.

Next, with any process $V = (V_t)_{t \ge 0}$ we associate the following random variables
\begin{equation*}
\Delta _i^n V = V_{ \frac{i}{n}}  - V_{ \frac{i - 1}{n}}, \ \text{for } i = 1, \ldots, n
\qquad
\bar{V}_{i}^{n} = \sum_{j = 1}^{k_{n} - 1} g \left( \frac{j}{k_{n}} \right) \Delta_{i + j}^{n} V, \ \text{for } i = 0, \ldots, n - k_{n} + 1.
\end{equation*}
Applying this notation to $Y$, it can be seen that $\Delta_{i}^{n} Y$ represents the noisy high-frequency returns, while $\bar{Y}_{i}^{n}$ is the pre-averaged return data, using the weight function $g$. It follows that the stochastic order of $\bar{Y}_{i}^{n} = \bar{X}_{i}^{n} + \bar{ \epsilon}_{i}^{n}$ is controlled by the sequence $k_{n}$, since
\begin{equation}
\label{Eqn:uorder}
\bar{X}_{i}^{n} = O_p \left( \sqrt{ \frac{k_{n}}{n}} \right), \qquad
\bar{ \epsilon}_{i}^{n} = O_p \left( \sqrt{ \frac{1}{k_{n}}} \right).
\end{equation}
Thus, taking $k_{n} = O( \sqrt{n})$ implies that the orders of the two terms in Eq. \eqref{Eqn:uorder} are equal, so that $\bar{Y}_{i}^{n} = O_{p} \left( n^{-1/4} \right)$. This is called balanced pre-averaging and delivers the best rate of convergence. As shown below, it is also useful to look at cases in which a higher order of $k_{n}$ is chosen. This results in a suboptimal rate of convergence, but it has some potentially valuable side-effects on the robustness and finite sample properties of our estimator.

The pre-averaging window length, $k_{n}$, depends on the tuning parameter $\theta$, which needs to be chosen by the user. We will later discuss how to sensibly make this choice.

\subsection{Modulated realised covariance}
The core statistic of this paper is the multivariate extension of the estimator, which was introduced in \citet*{jacod-li-mykland-podolskij-vetter:09a}. We call it the modulated realised covariance (MRC) and define it as
\begin{equation}
\label{MMRC}
MRC \left[ Y \right]_{n} = \frac{n}{n - k_{n} + 2} \frac{1}{ \psi_{2} k_{n}}  \sum_{i = 0}^{n - k_{n} + 1} \bar{Y}_{i}^{n} \left( \bar{Y}_{i}^{n} \right)'.
\end{equation}
The factor $n / (n - k_{n} + 2)$ is a finite sample correction for the true number of summands in $MRC \left[ Y \right]_{n}$ relative to the sample size $n$. It is sometimes left out in the presentation below, but it is always included in implementations on data.

\noindent \textbf{Remark} \ The sum of outer products in Eq. \eqref{MMRC} is a realised covariance based on pre-averaged data. To build some intuition for our approach, we explain the usage of $\bar Y_i^{n}$ in more detail. Suppose $k_{n}$ is an even number and write
\begin{equation*}
\hat Y_{i}^{n} = \frac{2}{k_{n}} \sum_{j = 0}^{k_{n} / 2 - 1} Y_{ \frac{i + j}{n}},
\end{equation*}
which is a simple average of $Y$ over $k_{n} / 2$ terms. Because of this pre-averaging, $\hat Y_i^{n}$ will be closer to the efficient price $X_{ \frac{i}{n}}$. Next, we compute the realised covariation estimator based on these filtered increments by setting
\begin{equation*}
\bar Y_i^{n} = \frac{1}{2} (\hat Y_{i + \frac{k_{n}}{2}}^{n} - \hat Y_{i}^{n}) = \frac{1}{k_{n}} \left( \sum_{j = k_{n} / 2}^{k_{n} - 1} Y_{ \frac{i + j}{n}} - \sum_{j = 0}^{k_{n} / 2 - 1} Y_{ \frac{i + j}{n}} \right).
\end{equation*}
(However, as we shall see this induces a bias, which is a function of $\Psi$). This method was originally proposed by \citet*{podolskij-vetter:09a} and using the above definition of $\bar Y_i^{n}$ corresponds to choosing the weight function
\begin{equation}
g \left( x \right) = \min \left( x, 1 - x \right),
\end{equation}
which is the most intuitive example. Here we explicitly give the numerical values of the asymptotic constants for this choice of function $g$, as it is the one used for all our simulations and empirical work:
\begin{equation*}
\psi _1  = 1,\qquad \psi _2  = \frac{1}{{12}},\qquad \Phi _{11}  = \frac{1}{6},\qquad \Phi _{12}  = \frac{1}{{96}},\qquad \Phi _{22}  = \frac{{151}}{{80640}}.
\end{equation*}

\noindent \textbf{Remark} \ As noted in \citet*{jacod-li-mykland-podolskij-vetter:09a}, to avoid biases in small samples we should replace the asymptotic constants and functions $\psi_{1}$, $\psi_{2}$, $\phi_{1}$, $\phi_{2}$, $\Phi_{11}$, $\Phi _{12}$, and $\Phi _{22}$ by their Riemann approximations:
\begin{equation*}
\psi_{1}^{k_{n}} = k_{n} \sum_{i = 1}^{k_{n}} \left( g \left( \frac{i}{k_{n}} \right) - g \left( \frac{i - 1}{k_{n}} \right) \right)^{2}, \quad \psi_{2}^{k_{n}} = \frac{1}{k_{n}} \sum_{i = 1}^{k_{n} - 1} g^{2} \left( \frac{i}{k_{n}} \right),
\end{equation*}
\begin{equation*}
\phi_{1}^{k_{n}} \left( j \right) = \sum_{i = j + 1}^{k_{n} - 1} \left( g \left( \frac{i - 1}{k_{n}} \right) - g \left( \frac{i}{k_{n}} \right) \right) \left( g \left( \frac{i - j - 1}{k_{n}} \right) - g \left( \frac{i - j}{k_{n}} \right) \right), \quad
\phi_{2}^{k_{n}} \left( j \right) = \sum_{i = j + 1}^{k_{n} - 1} g \left( \frac{i}{k_{n}} \right) - g \left( \frac{i - j}{k_{n}} \right),
\end{equation*}
\begin{equation*}
\Phi_{11}^{k_{n}} = k_{n} \left( \sum_{j = 0}^{k_{n} - 1} \left( \phi_{1}^{k_{n}} \left( j \right) \right)^{2} - \frac{1}{2} \left( \phi_{1}^{k_{n}} \left( 0 \right) \right)^{2} \right), \quad \Phi_{12}^{k_{n}} = \frac{1}{k_{n}} \left( \sum_{j = 0}^{k_{n} - 1} \phi_{1}^{k_{n}} \left( j \right) \phi_{2}^{k_{n}} \left( j \right) - \frac{1}{2} \phi_{1}^{k_{n}} \left( 0 \right) \phi_{2}^{k_{n}} \left( 0 \right) \right),
\end{equation*}
\begin{equation*}
\Phi_{22}^{k_{n}} = \frac{1}{k_{n}^{3}} \left( \sum_{j = 0}^{k_{n} - 1} \left( \phi_{2}^{k_{n}} \left( j \right) \right)^{2} - \frac{1}{2} \left( \phi_{2}^{k_{n}} \left( 0 \right) \right)^{2} \right).
\end{equation*}
These are the actual terms that appear in the computations of the MRC. Note that for all appropriate indices of $i$ and $j$, $\psi_{i}^{k_{n}} \to \psi_{i}$, $\phi_{i}^{k_{n}} \to \phi_{i}$, $\Phi_{ij}^{k_{n}} \to \Phi_{ij}$ as $n \to \infty$ at smaller order than $n^{-1 / 4}$, so the finite sample versions can be replaced with the asymptotic constants in all the limit theorems given below, including the central limit theorems.

\section{Asymptotic properties of MRC}
\subsection{Consistency}
Our first result inspects the probability limit of $MRC \left[ Y \right]_{n}$.
\begin{theorem}
\label{consistency}
Assume that $\mathbb{E} \left( | \epsilon^{j} |^{4} \right) < \infty$ for all $j = 1,...,d$ and $(k_{n}, \theta)$ satisfy Eq. \eqref{kthetarelation}. As $n \to \infty$, it holds that
\begin{equation}
\label{stochcon2}
MRC \left[ Y \right]_{n} \overset{p}{ \to} \int_{0}^{1} \Sigma_{s} \text{\upshape{d}}s + \frac{ \psi_{1}}{ \theta^{2} \psi_{2}} \Psi.
\end{equation}
\end{theorem}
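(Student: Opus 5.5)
Decompose $\bar Y_i^n=\bar X_i^n+\bar\epsilon_i^n$, so that
\[
\sum_i \bar Y_i^n(\bar Y_i^n)' = \sum_i \bar X_i^n(\bar X_i^n)' + \sum_i \bigl(\bar X_i^n(\bar\epsilon_i^n)'+\bar\epsilon_i^n(\bar X_i^n)'\bigr) + \sum_i \bar\epsilon_i^n(\bar\epsilon_i^n)'.
\]
Treat the three sums separately after multiplying by $1/(\psi_2 k_n)$; the factor $n/(n-k_n+2)\to1$ is harmless. Since the statement only concerns convergence in probability, a standard localization reduces everything to the case where $a$ and $\sigma$ are bounded. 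I work componentwise; by polarization, $\bar V^{l}\bar V^{m}$ can be handled like squares, though direct bilinear computations are just as easy.

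\textbf{Efficient-price term.} Approximate $\bar X_i^n$ by $\sigma_{i/n}\bar W_i^n$. The drift contributes $O(k_n/n)$ per term, and the error from freezing $\sigma$ is $o_p(\sqrt{k_n/n})$ uniformly in $L^2$ by c\`adl\`ag-ness. Summed over $n$ terms and divided by $k_n$, these errors vanish because $n\cdot(k_n/n)\cdot o(1)/k_n\to0$.

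For the frozen version, compute
\[
\mathbb{E}\bigl(\bar W_i^n(\bar W_i^n)'\bigr)=\frac{1}{n}\sum_j g(j/k_n)^2\, I_d=\frac{k_n}{n}\psi_2^{k_n} I_d.
\]
So the sum of conditional expectations, $\frac{1}{\psi_2 k_n}\sum_i \Sigma_{i/n}\frac{k_n}{n}\psi_2$, is a Riemann sum converging to $\int_0^1\Sigma_s\,ds$. The centered part has mean zero. Terms at distance at least $k_n$ are uncorrelated, and each term has second moment $O(k_n^2/n^2)$, so the variance is $O(n\cdot k_n\cdot k_n^2/n^2)/k_n^2=O(k_n/n)\to0$. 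To handle the predictability of $\sigma$, condition on $\mathcal F_{i/n}$ and use a martingale-difference argument over blocks.

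\textbf{Noise term.} Write $\bar\epsilon_i^n=-\sum_j\bigl(g(\tfrac{j+1}{k_n})-g(\tfrac{j}{k_n})\bigr)\epsilon_{(i+j)/n}$. Then
\[
\mathbb{E}\bigl(\bar\epsilon_i^n(\bar\epsilon_i^n)'\bigr)=\sum_j\Bigl(g(\tfrac{j+1}{k_n})-g(\tfrac{j}{k_n})\Bigr)^2\Psi=\frac{\psi_1^{k_n}}{k_n}\Psi.
\]
Summing over about $n$ indices and dividing by $\psi_2k_n$ gives $\frac{n}{k_n^2}\frac{\psi_1}{\psi_2}\Psi\to\frac{\psi_1}{\theta^2\psi_2}\Psi$ by Eq. \eqref{kthetarelation}.

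For the variance, $\bar\epsilon_i^n$ and $\bar\epsilon_{i'}^n$ are independent when $|i-i'|\ge k_n$. Each product $\bar\epsilon^l\bar\epsilon^m$ has variance $O(1/k_n^2)$, which is where the fourth moment assumption $\mathbb{E}|\epsilon^j|^4<\infty$ enters. Hence the variance of the normalized sum is $O(n k_n\cdot k_n^{-2})/k_n^2=O(n/k_n^3)=O(n^{-1/2})\to0$.

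\textbf{Cross term.} By independence $X\Perp\epsilon$ and $\mathbb{E}\epsilon=0$, the cross term has mean zero. Its second moment is bounded using the $k_n$-dependence: $O\bigl(n k_n\cdot\tfrac{k_n}{n}\cdot\tfrac{1}{k_n}\bigr)/k_n^2=O(1/k_n)\to0$.

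\textbf{Main obstacle.} The main obstacle is the efficient-price term. There the overlapping windows and the non-constant, merely c\`adl\`ag $\sigma$ require a careful approximation step: bounding $\sum_i\mathbb{E}\|\bar X_i^n-\sigma_{i/n}\bar W_i^n\|^2$ by $(k_n/n)\sum_i\mathbb{E}\sup_{s\in[i/n,(i+k_n)/n]}\|\sigma_s-\sigma_{i/n}\|^2$, which is $o(k_n)$ by dominated convergence. By comparison, the noise and cross terms are routine moment computations.
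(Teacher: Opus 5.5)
Your argument is correct, but it is organised differently from the paper's. After reducing to $d=1$ by polarization, the paper rewrites the overlapping-window sum as an average $\frac{1}{k_n}\sum_{l=0}^{k_n-1}MRC[Y]_n^l$ of $k_n$ subsampled estimators. Each of these is built only from the non-overlapping pre-averaged returns $\bar Y^n_{l+jk_n}$. The paper applies the consistency argument of Podolskij and Vetter to each subsample and then averages over $l$. That averaging step requires the convergence to be uniform in $l$, which the paper attributes to the boundedness of $a$ and $\sigma$. Its appendix also subtracts $\frac{\psi_1}{\theta^2\psi_2}\hat\Psi_n$, so what it literally concludes is the bias-corrected version.

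You instead keep the overlapping windows and control them directly. You compute the first moments $\frac{k_n}{n}\psi_2^{k_n}\Sigma_{i/n}$ and $\frac{\psi_1^{k_n}}{k_n}\Psi$ explicitly. Your variance bounds use only the finite-range dependence of the summands: lag-$k_n$ martingale differences for the Brownian part, and independence beyond lag $k_n$ for the noise. Hence only $O(nk_n)$ pairs contribute.

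The paper's route is shorter and sets up the block structure it reuses for the CLT, but it delegates the actual moment computations to the cited work. Yours is self-contained and needs no uniformity over a growing number of subsamples. It also proves exactly the uncorrected statement as written, and it gives explicit $L^2$ rates $O(k_n/n)$, $O(1/k_n)$ and $O(n^{-1/2})$ for the centred signal, cross and noise terms.

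One small imprecision: when $\sigma$ jumps, the error from freezing $\sigma$ is not $o(\sqrt{k_n/n})$ uniformly in $i$, only on average over $i$. What the Cauchy--Schwarz step actually needs is the averaged bound in your final paragraph. Combined with the pathwise fact that $\frac1n\sum_i\sup_{s\in[i/n,(i+k_n)/n]}\|\sigma_s-\sigma_{i/n}\|^2\to0$ for c\`adl\`ag paths, that bound suffices.
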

\begin{proof}
See appendix. \qed
\end{proof}
A couple of points are worth highlighting. First, as Theorem \ref{consistency} shows $MRC \left[ Y \right]_{n}$ is consistent for $\int_{0}^{1} \Sigma_{s} \text{d}s$ up to a bias-correction. The bias term depends on the unknown $\Psi$, which must be estimated from the data.

We set
\begin{equation}
\hat{ \Psi}_{n} = \frac{1}{2n} \sum_{i = 1}^{n} \Delta_{i}^{n} Y \left( \Delta_{i}^{n} Y \right)',
\end{equation}
which is linked to the univariate estimator proposed by \citet*{ait-sahalia-mykland-zhang:05a} and \citet*{bandi-russell:06a,bandi-russell:08a}. Then, we obtain the convergence $\hat{ \Psi}_{n} \overset{p}{ \to} \Psi$, such that
\begin{equation*}
MRC \left[ Y \right]_{n} - \frac{ \psi_{1}^{k_{n}}}{ \theta^{2} \psi_{2}^{k_{n}}} \hat{ \Psi}_{n} \overset{p}{ \to} \int_{0}^{1} \Sigma_{s} \text{\upshape{d}}s.
\end{equation*}
Hence, for the remainder of the paper, we shall incorporate the bias-correction term into the definition of $MRC \left[ Y \right]_{n}$.\footnote{Since $\sum_{i = 1}^{n} \Delta_{i}^{n} Y \left( \Delta_{i}^{n} Y \right)' = 2 n \Psi + \int_{0}^{1} \Sigma_{s} \text{d}s + o_{p}(n^{-1})$, where the error of this approximation has expectation zero, the bias-corrected $MRC \left[ Y \right]_{n}$ actually estimates $\left(1 - \frac{ \psi_{1}^{k_{n}}}{ \theta^{2} \psi_{2}^{k_{n}}} \frac{1}{2n} \right) \int_{0}^{1} \Sigma_{s} \text{d}s$ and thus needs to be rescaled by $1 / \left(1 - \frac{ \psi_{1}^{k_{n}}}{ \theta^{2} \psi_{2}^{k_{n}}} \frac{1}{2n} \right)$. We include this rescaling in our simulations and empirical work but omit it throughout the remainder of the text to simplify notation.} In doing so, we are no longer ensured that $MRC \left[ Y \right]_{n}$ is positive semi-definite in finite samples. To deal with this problem, we propose an alternative formulation of the MRC below, which uses a longer pre-averaging window $k_{n}$ to avoid this step. Second, since $\hat{ \Psi}_{n}$ is a $\sqrt{n}$-estimator of $\Psi$, it will not affect the CLT of $MRC \left[ Y \right]_{n}$, as the latter converges at a slower rate. Third, our initial MRC estimator is based on synchronous data, assuming all components of the log-price vector $Y$ are observed contemporaneously. We will later extend the MRC to the non-synchronous setting.

\subsection{The central limit theorem}
We proceed with the central limit theorem for $MRC \left[ Y \right]_{n}$. As in \citet*{jacod-li-mykland-podolskij-vetter:09a}, we only
require a moment condition on $\epsilon$ to prove this result.

The notion of stable convergence is used, which we describe here. A sequence of random variables $V^{n}$ is said to converge stably in law towards $V$, where $V$ is defined on an appropriate extension $(\Omega', \mathcal{F}', P')$ of the probability space $(\Omega, \mathcal{F},P)$, if and only if for any $\mathcal{F}$-measurable, bounded random variable $W$ and any bounded, continuous function $f$, the convergence
\begin{equation*}
\lim_{n \to \infty} \mathbb{E} \left[ W f(V^{n}) \right] = \mathbb{E}' \left[ W f(V) \right]
\end{equation*}
holds.

We write this as $V^{n} \overset{d_{s}}{ \to} V$ and note that stable convergence is a slightly stronger mode of convergence than weak convergence, or convergence in law, which is the special case obtained by taking $W = 1$ \citep*[see, e.g.,][for further details on stable convergence]{renyi:63a,aldous-eagleson:78a}. \citet*{jacod-shiryaev:03a} discuss the extension of this concept to stable convergence of processes. The key reason we require the convergence in law stably is that the conditional covariance matrix in the CLT of $MRC[Y]_{n}$, $\text{avar}_{\text{MRC}_{n}}$, is a function of $\sigma$ and therefore random, and the usual convergence in law is insufficient to ensure joint convergence of the bivariate vector $(MRC[Y]_{n}, \text{avar}_{\text{MRC}_{n}})$, which we need to apply the delta method to the joint asymptotic distribution and to construct confidence intervals.

\begin{theorem}
\label{clt2}
Assume that $\mathbb{E} \left( | \epsilon^{j} |^{8} \right) < \infty$ for all $j = 1,...,d$ and $(k_n, \theta)$ satisfy Eq. \eqref{kthetarelation}. As $n \to \infty$, it holds that
\begin{equation}
\label{mainstats}
n^{1 / 4} \left( MRC \left[ Y \right]_{n} - \int_{0}^{1} \Sigma_{s} \text{\upshape{d}}s \right) \overset{d_{s}}{ \to} \sum_{j', k' = 1}^{d} \int_{0}^{1} \gamma_{s}^{jk,j'k'} \text{\upshape{d}} B_{s}^{j'k'},
\end{equation}
where $B$ is a standard $d^2$-dimensional Brownian motion defined on an extension of $(\Omega, \mathcal F, (\mathcal F_{t})_{t \geq 0}, P)$ with $B \Perp \mathcal{F}$,
\begin{equation*}
\sum_{j,m = 1}^{d} \gamma_{s}^{kl, jm} \gamma_{s}^{k'l', jm}
= \frac{2}{ \psi_{2}^{2}} \left( \Phi_{22} \theta \Lambda_{s}^{kl, k'l'} + \frac{ \Phi_{12}}{ \theta} \Theta_{s}^{kl, k'l'} + \frac{ \Phi_{11}}{ \theta^{3}} \Upsilon^{kl, k'l'} \right),
\end{equation*}
and where $\Lambda$, $\Theta$ and $\Upsilon$ are $d \times d \times d \times d$ arrays with elements
\begin{equation}
\label{ltu}
\begin{array}{l}
 \Lambda _s  = \left\{ {\Sigma _s^{kk'} \Sigma _s^{ll'}  + \Sigma _s^{kl'} \Sigma _s^{lk'} } \right\}_{k,k',l,l' = 1,...,d},  \\
  \\
 \Theta_s = \left\{ {\Sigma _s^{kk'} \Psi ^{ll'}  + \Sigma _s^{kl'} \Psi ^{k'l}  + \Sigma _s^{k'l} \Psi ^{kl'} + \Sigma _s^{ll'} \Psi ^{kk'} } \right\}_{k,k',l,l' = 1,...,d},  \\
  \\
 \Upsilon = \left\{ {\Psi ^{kk'} \Psi ^{ll'}  + \Psi ^{kl'} \Psi ^{lk'} } \right\}_{k,k',l,l' = 1,...,d}.  \\
 \end{array}
\end{equation}
\end{theorem}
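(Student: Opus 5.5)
We follow the univariate argument of \citet*{jacod-li-mykland-podolskij-vetter:09a} componentwise, using polarization. Since $\hat{\Psi}_{n}$ converges at rate $\sqrt{n}$, the bias-correction term contributes only $o_p(n^{-1/4})$, so it suffices to treat $n^{1/4}\bigl(\frac{1}{\psi_2 k_n}\sum_i \bar Y_i^n(\bar Y_i^n)' - \int_0^1\Sigma_s ds - \frac{\psi_1}{\theta^2\psi_2}\Psi\bigr)$. The factor $n/(n-k_n+2)$ and the Riemann constants $\psi_i^{k_n}$ also cause only $o(n^{-1/4})$ changes.

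\emph{Step 1 (localization and reduction).} By a standard localization we assume $a,\sigma$ bounded and, as in Jacod et al., we may take $\sigma$ to be an It\^o semimartingale. We then discretize the volatility. Split $[0,1]$ into blocks of length $p k_n/n$, separated by gaps of length $k_n/n$ (big-block/small-block technique). On each big block $J$, starting at $i_J$, we replace $\bar X_i^n$ by $\tilde X_i^n=\sigma_{i_J/n}\bar W_i^n$ and drop the drift. The error is $o_p(n^{-1/4})$; this follows from $\mathbb{E}|\sigma_t-\sigma_s|^2\le C|t-s|$ and the fact that the drift contributes $O(k_n/n)$ per pre-averaged return. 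The small blocks contribute a fraction $O(1/p)$ of the variance and are removed by letting $p\to\infty$ after $n\to\infty$.

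\emph{Step 2 (martingale CLT).} Let $\zeta_J^n$ denote $n^{1/4}$ times the block sum of $\frac{1}{\psi_2 k_n}\bigl(\tilde Y_i\tilde Y_i'-\mathbb{E}[\tilde Y_i\tilde Y_i'\mid\mathcal F_{i_J/n}]\bigr)$ minus the matching pieces of $\int\Sigma$ and $\Psi$. These $\zeta_J^n$ are martingale differences with respect to the block filtration $\mathcal{G}_J$. We check three conditions from Jacod's stable CLT for triangular arrays:
\begin{enumerate}
\item the sum of centered conditional means of the block terms is $o_p(1)$;
\item the Lindeberg condition, via fourth moments, which uses $\mathbb{E}|\epsilon|^8<\infty$ on the $\bar\epsilon$ products;
\item the orthogonality conditions $\sum_J\mathbb{E}[\zeta_J^n\,\Delta M_J\mid\mathcal G_{J-1}]\to0$ for $M=W$ and for any bounded martingale orthogonal to $W$.
\end{enumerate}
The orthogonality with respect to $W$ follows from the symmetry (odd moments) of the Gaussian components. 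Stable convergence then yields the mixed normal limit $\int\gamma\,dB$.

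\emph{Step 3 (the conditional covariance; main obstacle).} We must show
\[
\sum_J \mathbb{E}\bigl[\zeta_J^{n,kl}\zeta_J^{n,k'l'}\mid\mathcal G_{J-1}\bigr]\overset{p}{\to}\int_0^1 \frac{2}{\psi_2^2}\Bigl(\theta\Phi_{22}\Lambda_s+\frac{\Phi_{12}}{\theta}\Theta_s+\frac{\Phi_{11}}{\theta^3}\Upsilon\Bigr)^{kl,k'l'}ds.
\]
Conditionally on $\mathcal F_{i_J/n}$, the vector $\tilde X_i^n$ is Gaussian with $\mathrm{Cov}(\tilde X_i,\tilde X_{i+j})\approx\frac{k_n}{n}\phi_2(j/k_n)\Sigma$. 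The noise satisfies $\mathrm{Cov}(\bar\epsilon_i,\bar\epsilon_{i+j})\approx\frac{1}{k_n}\phi_1(j/k_n)\Psi$, and $\tilde X\Perp\bar\epsilon$.

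Expanding $\mathrm{Cov}\bigl(\tilde Y_i^k\tilde Y_i^l,\tilde Y_{i+j}^{k'}\tilde Y_{i+j}^{l'}\bigr)$ splits it into three groups: the Isserlis-type XX--XX terms, giving $\Lambda$ with $\phi_2^2$; the mixed X--$\epsilon$ terms, giving the four entries of $\Theta$ with $\phi_1\phi_2$; and the $\epsilon\epsilon$--$\epsilon\epsilon$ terms, giving $\Upsilon$ with $\phi_1^2$. The noise is only i.i.d. and not Gaussian, so its fourth cumulant appears. We must show that this cumulant term is of smaller order: it carries an extra factor $1/k_n$, because a fourth cumulant requires all four indices to coincide.

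Summing over $|j|<k_n$ turns $\sum_j\phi(j/k_n)^2$ into $2k_n\int_0^1\phi^2$, which is where the factor $2$ comes from. Collecting the powers $k_n^2/n^2$, $1/n$ and $1/k_n^2$, multiplying by $n^{1/2}/(\psi_2^2k_n^2)\cdot n$, and using $k_n=\theta\sqrt n$ yields the weights $\theta\Phi_{22}$, $\Phi_{12}/\theta$ and $\Phi_{11}/\theta^3$. Finally, Riemann-sum convergence of $\Sigma_{i_J/n}$ gives the time integral.

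The delicate bookkeeping in Step 3 is the main obstacle. It involves the edge effects in the lag sums, the index conventions behind the four $\Theta$ terms, and controlling the non-Gaussian fourth cumulant and the correlation between $\bar\epsilon_i$ and $\bar\epsilon_{i+j}$ across overlapping windows. As a first check, I would verify the computation in the scalar case $d=1$ against Jacod et al.
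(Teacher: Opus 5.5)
Your proposal is essentially the paper's proof. It uses the same big-block/small-block scheme with $\sigma$ frozen at the block starts, and the same stable martingale CLT (Theorem IX.7.28 of Jacod--Shiryaev), checked through the conditional covariance, a fourth-moment Lindeberg bound using $\mathbb{E}|\epsilon|^8<\infty$, orthogonality to $W$ by evenness, and orthogonality to martingales orthogonal to $W$. It also uses the same split of the conditional covariance into $\bar W\bar W$, $\bar W\bar\epsilon$ and $\bar\epsilon\bar\epsilon$ pieces, producing $\theta\Phi_{22}\Lambda$, $\Phi_{12}\Theta/\theta$ and $\Phi_{11}\Upsilon/\theta^{3}$.

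One correction to Step~1: the bound $\mathbb{E}|\sigma_t-\sigma_s|^2\le C|t-s|$ alone only controls the frozen-volatility centering error $n^{1/4}\sum_J\int_J(\Sigma_s-\Sigma_{i_J/n})\,ds$ up to order $\sqrt{p}$, which does not vanish as $p\to\infty$. You therefore also need $\|\mathbb{E}[\sigma_t-\sigma_s\mid\mathcal F_s]\|\le C|t-s|$, plus orthogonality of the martingale parts across blocks. Both are available because you take $\sigma$ to be an It\^o semimartingale, but note that this is an extra hypothesis beyond the c\`adl\`ag $\sigma$ of the theorem, not something localization provides.
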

\begin{proof}
See appendix. \qed
\end{proof}
Because $B \Perp \mathcal{F}$, we can write the convergence statement in Theorem \ref{clt2} as follows:
\begin{equation*}
n^{1/4} \left( MRC \left[ Y \right]_{n} - \int_{0}^{1} \Sigma_{s} \text{d}s \right) \overset{d_{s}}{ \to} MN \left( 0, \text{avar}_{\text{MRC}} \right),
\end{equation*}
where
\begin{equation}
\label{L2}
\text{avar}_{\text{MRC}} = \frac{2}{{\psi _2^2 }} \left( \Phi_{22} \theta \int_0^1 \Lambda_s \text{d}s + \frac{\Phi_{12}}{ \theta} \int_{0}^{1} \Theta_{s} \text{d}s  + \frac{{\Phi _{11} }}{{\theta ^3 }}\Upsilon \right)
\end{equation}
is the conditional covariance matrix. This means that the asymptotic distribution of $MRC \left[ Y \right]_{n}$ is mixed normal. To make use of this result to construct confidence intervals for elements of $\int_{0}^{1} \Sigma_{u} \text{d}u$ in practice, we need to estimate $\text{avar}_{\text{MRC}}$, which is addressed in section 4.\footnote{The assumption that data be equidistant is not required for the consistency to hold true. It would also apply under identical observation times (i.e., synchronous but non-equidistant data). Here, $MRC \left[ Y \right]_{n}$ is consistent with $\Delta_{i}^{n} V$ redefined as $\Delta_{i}^{n} V = V_{t_{i}} - V_{t_{i-1}}$ for any process $V$ and $k_{n} = \theta \sqrt{n} + o(n^{1 / 4})$. This is not surprising: the realised covariance also remains consistent for irregular observations (by definition). The CLT also holds, but here the variable of integration $\text{d}s$ needs to be replaced by $\text{d}H_{s}$, where $ H_{s}$ is the so-called "quadratic variation of time", see, e.g., \citet*{barndorff-nielsen-hansen-lunde-shephard:08c}.}

\subsection{Choosing $\theta$ in practice}
The $\text{avar}_{\text{MRC}}$ matrix in Theorem \ref{clt2} depends on the parameter $\theta$, or in other words the window size $k_{n}$. If the purpose is to estimate some one-dimensional parameters (such as realised covariance, regression or correlation) by real-valued functions of the MRC, it is in principle possible to minimize $\text{avar}_{\text{MRC}}$ by choosing the "best" $\theta$ for a fixed function $g$.\footnote{The optimal choice of $\theta$ will depend on the original real-valued functions of the MRC. In this sense, there is no universal optimal $\theta$.}

To illustrate this point, we focus on the estimation problem in the univariate case, $d = 1$. In this situation, the expressions reduce to
\begin{equation*}
\text{avar}_{\text{MRC}} = \frac{4}{ \psi_{2}^{2}} \left( \Phi_{22} \theta \int_0^1 \sigma_{s}^{4} \text{d}s + \frac{2 \Phi_{12}}{ \theta} \Psi^{2} \int_{0}^{1} \sigma_{s}^{2} \text{d}s  + \frac{{\Phi _{11} }}{{\theta ^3 }} \Psi^{4} \right),
\end{equation*}
where $IV = \int_{0}^{1} \sigma_{s}^{2} \text{d}s$ and $IQ = \int_{0}^{1} \sigma_{s}^{4} \text{d}s$ are called the integrated variance and integrated quarticity, respectively. Minimizing this term with respect to $\theta$ results in solving a quadratic equation. Thus, for the optimal choice of $\theta$, say $\theta^{*}$, we get $\theta^{*} = f \left(IV, IQ, \Psi \right)$. Then, we can use an iterative procedure to find an approximation of $\theta^{*}$: (i) Choose a "reasonable" value for $\theta$ and compute $\hat{IV}$, $\hat{IQ}$ and $\hat{ \Psi}$, (ii) from these compute $\hat{ \theta}^{*}$ by setting $\hat{ \theta}^{*} = f ( \hat{IV}, \hat{IQ}, \hat{ \Psi} )$ and (iii) then, assuming the values of $\hat{ \theta}^{*}$ converge, repeat this process until $\hat{ \theta}^{*}$ has stabilized.

In practice, a simple guide that informs us about how to select $\theta$ for small values of $n$ would certainly be valuable. Unfortunately, $\theta$ comes from asymptotic statistics and therefore it does not give any precise instructions on this issue. Nonetheless, some plausible range of values of $\theta$ can be inferred from previous work in this area. For example, \citet*{jacod-li-mykland-podolskij-vetter:09a} report that the univariate pre-averaged realised variance measure is fairly robust to the choice of $k_{n}$, and they suggest to take $\theta = 1/3$. \citet*{christensen-oomen-podolskij:10a} use simulations to gauge the influence of sample size and noise on the optimal choice of $\theta$. Interestingly, they show that the MSE curve of their pre-averaged quantile-based realised variance is highly asymmetric in $k_n$ and they generally prefer to use a slightly higher value of $k_{n}$ than what would be optimal. In their empirical analysis, they also show that conservative choices of $k_{n}$ helps to heavily reduce the detrimental effects of price discreteness. In the simulation section and empirical illustration below, we therefore pick conservative values of $k_n$ by choosing $\theta = 1$, which seems to work well.

\subsection{Positive semi-definite estimators}
In the previous section, we used an optimal pre-averaging window length to construct the MRC, which balances the impact of the noise with the estimation of the integrated covariance matrix. This choice leads to an optimal rate of convergence - $n^{- 1 / 4}$ - but requires that we subtract an estimate of $\Psi$ to eliminate the bias induced by noise, and the final estimator is then not positive semi-definite in general. Here, we demonstrate how positive semi-definite estimates of $\int_{0}^{1} \Sigma_{s} \text{d}s$ can be formed by increasing the bandwidth parameter $k_n$ as to kill the influence of the noise, rather than balancing it. This comes at the cost of slowing down the speed at which MRC converges to the true integrated covariation.

Now, we take:
\begin{equation} \label{deltanon}
\frac{k_{n}}{n^{1 / 2 + \delta}} = \theta + o \left( n^{- 1 / 4 + \delta / 2} \right)
\end{equation}
for some $0<\delta < 1/2$, and set
\begin{equation}
\label{MMRCpos}
MRC \left[ Y \right]_{n}^{ \delta} = \frac{n}{n - k_{n} + 2} \frac{1}{ \psi_{2} k_{n}} \sum_{i = 0}^{n - k_{n} + 1} \bar{Y}_{i}^{n} \left( \bar{Y}_{i}^{n}  \right)'.
\end{equation}
The following result shows that $MRC \left[ Y \right]_{n}^{ \delta}$ is consistent without a bias-correction.

\begin{theorem}
\label{stochnon}
Assume that $\mathbb{E} \left( | \epsilon^{j} |^{4} \right) < \infty$ for all $j = 1,...,d$ and $(k_n, \theta)$ satisfy Eq. \eqref{deltanon}. As $n \to \infty$, it holds that
\begin{equation}
\label{stochconnon}
MRC \left[ Y \right]_n^\delta \overset{p}{ \to} \int_{0}^{1} \Sigma_{s} \text{\upshape{d}}s.
\end{equation}
\end{theorem}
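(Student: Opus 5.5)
We split $\bar Y_i^n = \bar X_i^n + \bar\epsilon_i^n$ and expand each summand of Eq. \eqref{MMRCpos} as
\begin{equation*}
\bar Y_i^n(\bar Y_i^n)' = \bar X_i^n(\bar X_i^n)' + \bar X_i^n(\bar\epsilon_i^n)' + \bar\epsilon_i^n(\bar X_i^n)' + \bar\epsilon_i^n(\bar\epsilon_i^n)'.
\end{equation*}
This gives $MRC[Y]_n^\delta = A_n + C_n + C_n' + N_n$, with each piece carrying the normalisation $\frac{n}{n-k_n+2}\frac{1}{\psi_2 k_n}$. We show $A_n \overset{p}{\to}\int_0^1\Sigma_s\,\text{d}s$ and that $C_n$ and $N_n$ are $o_p(1)$. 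Since $k_n/n\to 0$ (because $\delta<1/2$), the factor $n/(n-k_n+2)\to1$ and can be ignored.

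\textbf{Signal term.} Write $\bar X_i^n = \sum_j g(j/k_n)\Delta_{i+j}^nX$. The drift contributes $O(k_n/n)$ per block, which is negligible after normalisation. For the martingale part, freeze $\sigma$ at $i/n$: $\bar X_i^n \approx \sigma_{i/n}\sum_j g(j/k_n)\Delta_{i+j}^nW$. The conditional expectation of the outer product is then $\Sigma_{i/n}\frac1n\sum_j g^2(j/k_n)\approx \Sigma_{i/n}\,k_n\psi_2/n$.

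Summing over $i$ and dividing by $\psi_2k_n$ yields a Riemann sum converging to $\int_0^1\Sigma_s\,\text{d}s$.

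The fluctuation around the conditional mean has variance of order $(1/k_n^2)\cdot n\cdot k_n\cdot(k_n/n)^2 = k_n/n\to0$, since the blocks overlap over $k_n$ indices. The error from freezing $\sigma$ is handled using càdlàg paths and a localization argument, as in the univariate proof of \citet*{jacod-li-mykland-podolskij-vetter:09a}. This is the same step as in the proof of Theorem \ref{consistency}, which only used $k_n\to\infty$ and $k_n/n\to0$, so we can invoke that argument essentially verbatim.

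\textbf{Noise term.} By summation by parts, $\bar\epsilon_i^n = -\sum_j\big(g(\tfrac{j+1}{k_n})-g(\tfrac{j}{k_n})\big)\epsilon_{(i+j)/n}$. Hence
\begin{equation*}
\mathbb{E}\big[\bar\epsilon_i^n(\bar\epsilon_i^n)'\big] = \Psi\sum_j\big(g(\tfrac{j+1}{k_n})-g(\tfrac{j}{k_n})\big)^2 = \Psi\,\psi_1^{k_n}/k_n.
\end{equation*}
Therefore
\begin{equation*}
\mathbb{E}N_n \approx \frac{1}{\psi_2k_n}\, n\, \frac{\psi_1}{k_n}\,\Psi = \frac{\psi_1}{\psi_2}\,\frac{n}{k_n^2}\,\Psi = O(n^{-2\delta})\to0.
\end{equation*}
This is exactly where the enlarged window kills the bias $\frac{\psi_1}{\theta^2\psi_2}\Psi$ of Theorem \ref{consistency}. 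Because $N_n$ is a sum of positive semidefinite matrices, convergence of its expectation to zero already gives $N_n\to0$ in $L^1$ (entrywise via the trace). The fourth-moment assumption is needed only if one prefers an $L^2$ bound.

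\textbf{Cross term.} Conditionally on $X$, $C_n$ is centred because $X\Perp\epsilon$ and $\mathbb{E}\epsilon=0$. Its conditional second moment is
\begin{equation*}
\frac{1}{k_n^2}\sum_{i,i'}\bar X_i^n(\bar X_{i'}^n)'\,\mathbb{E}\big[\bar\epsilon_i^n(\bar\epsilon_{i'}^n)'\big].
\end{equation*}
Only pairs with $|i-i'|<k_n$ contribute, each of size $O(k_n/n)\cdot O(1/k_n)$. The total is $O\big(k_n^{-2}\cdot n k_n\cdot k_n/n\cdot 1/k_n\big)=O(1/k_n)\to0$, so $C_n=o_p(1)$. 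In fact the Cauchy–Schwarz bound $|C_n|\le (A_n N_n)^{1/2}$ already suffices given the previous two steps.

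The main obstacle is the signal term's $\sigma$-freezing and Riemann-sum argument. Since Theorem \ref{consistency} has already established it under weaker window conditions, the genuinely new content reduces to the short moment computation for $N_n$.
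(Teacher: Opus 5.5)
Your proof is correct, and its skeleton matches the paper's. The paper's argument consists of the same computation $\mathbb{E}\bigl[MRC[\epsilon]_n^{\delta}\bigr] = \frac{\psi_1}{\theta^2\psi_2 n^{2\delta}}\Psi + o(n^{-1/4+\delta/2})$, which is your $\frac{\psi_1}{\psi_2}\frac{n}{k_n^2}\Psi$. It then appeals to the machinery of Theorem \ref{consistency} to obtain the centred statement $MRC[Y]_n^{\delta} - \bigl(\int_0^1\Sigma_s\,\text{d}s + \frac{\psi_1}{\theta^2\psi_2 n^{2\delta}}\Psi\bigr) \overset{p}{\to} 0$.

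You differ in how you control the noise and cross terms. The paper implicitly reuses the subsampling and variance arguments from Theorem \ref{consistency}, which is where fourth moments of $\epsilon$ naturally enter. You instead use the fact that $N_n$ is a sum of positive semi-definite matrices, so the expectation bound alone gives $L^1$ convergence via the trace, and Cauchy--Schwarz then disposes of $C_n$ given $A_n = O_p(1)$. This is cleaner, and it shows that consistency needs only second moments of the noise.

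The paper's formulation, in turn, keeps the explicit bias $\frac{\psi_1}{\theta^2\psi_2 n^{2\delta}}\Psi$ in the centring. That is exactly what is reused for Theorem \ref{cltnon}: at $\delta = 0.1$ this bias survives the $n^{1/4-\delta/2}$ scaling, and genuine fluctuation control of the noise term, not just an expectation bound, becomes necessary there.
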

\begin{proof}
See appendix. \qed
\end{proof}
In Theorem \ref{stochnon}, the properties of the noise process do not show up in the stochastic limit of Eq. \eqref{stochconnon}, because the influence of the noise is negligible by the choice of order for $k_{n}$ made in Eq. \eqref{deltanon} (refer back to Eq. \eqref{Eqn:uorder}). This has some appealing advantages. First, $MRC \left[ Y \right]_{n}^{ \delta}$ is positive semi-definite by construction. Second, although we state and prove this result in the i.i.d. noise case, Theorem \ref{stochnon} does in fact allow for more general noise dynamics than in Theorem \ref{consistency}. In particular, so long as $\mathbb{E} \left( \epsilon_{t} \mid X \right) = 0$ and $\bar{ \epsilon}_{i}^{n}$ admits asymptotic normality at the usual rate $k_{n}^{- 1 / 2}$, the theorem will hold (so we do not require any assumptions on the dependence between $X$ and $\epsilon$). Of course, the rate $k_{n}^{- 1 / 2}$ is achieved in the i.i.d. case, but there are more general cases where it also holds (e.g., for $q$-dependent and mixing processes).

To show the CLT, we require a restriction on $\delta$. This is because the bias caused by the noise, which is negligible in Theorem \ref{stochnon}, becomes more substantial when multiplying with the rate of convergence.

\begin{theorem}
\label{cltnon}
Assume that $\mathbb{E} \left(| \epsilon^{j} |^{8} \right) < \infty$ for all $j = 1,...,d$ and $(k_{n}, \theta)$ satisfy Eq. \eqref{deltanon}. As $n \to \infty$, it holds that
\begin{itemize}
\item[(i)] If $\delta > 0.1$
\begin{equation}
\label{clt1non}
n^{1/4 -\delta/2} \left( MRC \left[ Y \right]_n^\delta - \int_0^1 \Sigma _{s} \text{\upshape{d}}s  \right)
\overset{d_{s}}{ \to} MN \left( 0, \frac{2 \Phi_{22} \theta}{ \psi_2^2 } \int_0^1 \Lambda _s \text{\upshape{d}}s \right),
\end{equation}
where $(\Lambda _s)$ is defined in \eqref{ltu}.
\item[(ii)] If $\delta = 0.1$
\begin{equation} \label{clt2non}
n^{1/5} \left( MRC \left[ Y \right]_n^\delta   - \int_0^1 \Sigma_{s} \text{\upshape{d}}s  \right)
\overset{d_{s}}{ \to} MN \left( \frac{ \psi_1}{ \theta^2 \psi_2} \Psi, \frac{2 \Phi_{22} \theta}{ \psi_2^2 } \int_0^1 \Lambda _s \text{\upshape{d}}s \right).
\end{equation}
\end{itemize}
\end{theorem}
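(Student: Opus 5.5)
Our plan is to split $\bar Y_i^n = \bar X_i^n + \bar\epsilon_i^n$ and expand the outer product in \eqref{MMRCpos}:
\begin{equation*}
MRC[Y]_n^\delta = \frac{1}{\psi_2 k_n}\sum_i \bar X_i^n(\bar X_i^n)' + \frac{1}{\psi_2 k_n}\sum_i \bigl(\bar X_i^n(\bar\epsilon_i^n)' + \bar\epsilon_i^n(\bar X_i^n)'\bigr) + \frac{1}{\psi_2 k_n}\sum_i \bar\epsilon_i^n(\bar\epsilon_i^n)',
\end{equation*}
where we drop the factor $n/(n-k_n+2)=1+O(k_n/n)$, whose effect is $o(n^{-1/4+\delta/2})$ because $k_n/n = O(n^{-1/2+\delta})$ and $\delta<1/2$. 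We call the three terms $A_n$, $C_n$ and $E_n$ and treat them separately at the rate $r_n := n^{1/4-\delta/2}$. Note that $r_n^2 = n^{1/2-\delta} \asymp n/k_n$, which is the natural rate for a realised covariance built from roughly $n/k_n$ effectively independent blocks.

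\textbf{Term $A_n$.} This term is a pre-averaged realised covariance of the continuous semimartingale $X$ alone, with window $k_n \asymp n^{1/2+\delta}$. We follow the noise-free part of the proof of Theorem \ref{clt2}. First we replace $\sigma$ on each window by its value at the left endpoint; the cadlag property and the usual localisation, under which $a,\sigma$ are bounded, make this error $o_p(r_n^{-1})$. Next we use big-block/small-block (Bernstein) blocking: the summands $\bar X_i^n(\bar X_i^n)'$ are $k_n$-dependent given $\sigma$, and $\mathbb{E}[\bar X_i^n(\bar X_i^n)' \mid \mathcal F_{i/n}] \approx \psi_2 (k_n/n)\Sigma_{i/n}$. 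We then apply Jacod's stable CLT for triangular martingale arrays to the block sums. The conditional variance comes from the covariance of $\bar X_i^n$ and $\bar X_{i+j}^n$, which is $(k_n/n)\phi_2(j/k_n)\Sigma$. Summing over lags gives
\begin{equation*}
r_n^2\,\mathrm{Var} \to \frac{2\theta\Phi_{22}}{\psi_2^2}\int_0^1 \Lambda_s\,ds,
\end{equation*}
with the same constants as in Theorem \ref{clt2}. Drift terms contribute $O_p(k_n/n)$ per summand and hence a total of $O_p(\sqrt{k_n/n}) = o_p(r_n^{-1})$. Finally, the orthogonality conditions to $W$ and to bounded martingales orthogonal to $W$ give stability with a limit independent of $\mathcal F$.

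\textbf{Terms $E_n$ and $C_n$.} For $E_n$ we compute $\mathbb{E}[\bar\epsilon_i^n(\bar\epsilon_i^n)'] = \Psi\sum_j (g(j/k_n)-g((j-1)/k_n))^2 = \Psi\,\psi_1^{k_n}/k_n$, so that
\begin{equation*}
\mathbb{E}E_n \approx \frac{n}{k_n^2}\frac{\psi_1}{\psi_2}\Psi = \frac{\psi_1}{\theta^2\psi_2}\,n^{-2\delta}\,\Psi\,(1+o(1)).
\end{equation*}
Multiplying by $r_n$ gives order $n^{1/4-5\delta/2}$, which vanishes iff $\delta>0.1$ and tends to $\psi_1\Psi/(\theta^2\psi_2)$ when $\delta=0.1$. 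The $o(n^{-1/4+\delta/2})$ remainder in \eqref{deltanon} keeps the replacement of $k_n$ by $\theta n^{1/2+\delta}$ negligible here. The centred part $E_n-\mathbb{E}E_n$ has variance of order $(1/k_n^2)\cdot n k_n\cdot k_n^{-2} = n/k_n^3$, using finite fourth moments and $k_n$-dependence. Multiplied by $r_n^2 \asymp n/k_n$ this gives $n^2/k_n^4 = n^{-4\delta}\to 0$. For $C_n$ we condition on $X$, which is independent of $\epsilon$, so the term is centred. Its conditional variance is of order $(1/k_n^2)\, n k_n\,(k_n/n)(1/k_n) = 1/k_n$, and $r_n^2/k_n \asymp n/k_n^2 = n^{-2\delta}\to0$. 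The eighth-moment assumption is more than enough; it is only needed if one wants Lyapunov-type control uniformly.

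\textbf{Conclusion.} Combining the three terms, stable convergence of $r_n A_n$ together with the negligibility or deterministic limit of $r_n(C_n+E_n)$ yields (i) and (ii) via Slutsky's lemma for stable convergence. The main obstacle is the CLT for $A_n$ with the nonstandard window $k_n \gg \sqrt n$. Specifically, we must verify that the error from freezing $\sigma$ over windows of length $k_n/n = n^{-1/2+\delta}$ is $o(r_n^{-1})$. We would first attempt this under a semimartingale-type assumption on $\sigma$, or with the paper's cadlag condition via the approximation arguments of Jacod et al. Since $r_n\sqrt{k_n/n}$ stays bounded, we expect this to be attainable by the same technique as in Theorem \ref{clt2}.
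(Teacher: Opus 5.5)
Your architecture matches the paper's proof. The paper computes $\mathbb{E}[MRC[\epsilon]_n^\delta]=\frac{\psi_1}{\theta^2\psi_2}n^{-2\delta}\Psi+o(n^{-1/4+\delta/2})$ and runs the block CLT of Theorem~\ref{clt2} with this quantity added to the centering, so only the $\Lambda$-term survives in the conditional variance. It then notes that $n^{1/4-\delta/2}\cdot n^{-2\delta}$ tends to $0$ for $\delta>1/10$ and equals $1$ for $\delta=1/10$. Your $L^2$ bounds for the noise cross term and for $E_n-\mathbb{E}E_n$ (scaled variances of order $n^{-2\delta}$ and $n^{-4\delta}$) are correct and spell out what the paper leaves implicit. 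Because the noise never enters the martingale CLT in your version, those pieces need only $\mathbb{E}|\epsilon^j|^4<\infty$.

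One step fails as written: the drift. You bound its contribution to $A_n$ by $O_p(\sqrt{k_n/n})$ and call this $o_p(r_n^{-1})$. But $r_n\sqrt{k_n/n}\to\sqrt{\theta}$, so the bound is exactly of critical order, as your own closing remark that $r_n\sqrt{k_n/n}$ stays bounded already indicates. Write $X=X_0+A+M$ with $A_t=\int_0^t a_s\,ds$ and $M_t=\int_0^t\sigma_s\,dW_s$. The problematic piece is $\frac{1}{\psi_2k_n}\sum_i\bigl(\bar A_i^n(\bar M_i^n)'+\bar M_i^n(\bar A_i^n)'\bigr)$, where $\|\bar A_i^n\|\le Ck_n/n$ and $\bar M_i^n=O_p(\sqrt{k_n/n})$. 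No absolute-value estimate makes this negligible; you need cancellation.

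Replace $\bar A_i^n$ by the $\mathcal{F}_{i/n}$-measurable proxy $\sum_j g(j/k_n)\int_{(i+j-1)/n}^{(i+j)/n}\mathbb{E}[a_s\mid\mathcal{F}_{i/n}]\,ds$. The replacement error is $o(\sqrt{k_n/n})$ after summation, by martingale convergence, since $a$ is predictable. The remaining sum has conditionally centred summands that are uncorrelated at lags $\ge k_n$, so its standard deviation is $O(k_n/n)=o(r_n^{-1})$. For constant $a$ this is transparent: $\sum_i\bar M_i^n=\sum_m c_m\Delta_m^nM$ with $|c_m|\le Ck_n$, so the cross term is $O_p(k_n/n)$.

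The same critical-order issue affects your opening claim that freezing $\sigma$ costs $o_p(r_n^{-1})$ ``by the c\`{a}dl\`{a}g property''. A size bound only gives the average oscillation of $\sigma$ over windows of length $k_n/n$, which is of order $\sqrt{k_n/n}$ even for an It\^{o}-semimartingale $\sigma$. That step also needs the conditional-centering arguments of Jacod et al.\ (Lemmas 5.5--5.6 there). This is exactly what the paper imports with ``the same scheme as in the proof of Theorem~\ref{clt2}''. You rightly flag it as the main obstacle at the end, but it should not be asserted earlier on size grounds.
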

\begin{proof}
See appendix. \qed
\end{proof}
Theorem \ref{cltnon} amounts to a classical bias-variance trade-off. It shows the expected result that using a longer pre-averaging window $k_{n} = O(n^{1/2 + \delta})$ averages enough to make $MRC \left[ Y \right]_{n}^{ \delta}$ consistent without a bias-correction, but it also slows down its rate of convergence. The larger is $\delta$, the harder is this effect. Note that the asymptotic variance depends solely on the volatility process $\sigma$, while the two noise-dependent terms (a cross-term and a pure noise term) appearing in Eq. \eqref{L2} are wiped out.

The optimal choice $\delta = 0.1$ results in a $n^{- 1 / 5}$ rate of convergence, and a large sample bias of the order $n^{-1 / 5} \frac{ \psi_{1}}{ \theta^{2} \psi_{2}} \Psi$. If desired, the bias can be estimated using $\hat{ \Psi}_{n}$, but it should be small for more recent data and could be ignored.\footnote{If we do bias-correct $MRC \left[ Y \right]_{n}^{ \delta}$, the resulting estimator is then again not ensured to be positive semi-definite.}

This result bears some resemblance to that of the multivariate realised kernel of \citet*{barndorff-nielsen-hansen-lunde-shephard:08c}. A flat-top kernel estimator can be designed to converge at rate $n^{-1/4}$, but the resulting estimator may go negative. To guarantee positive semi-definiteness, the authors propose kernel corrections that are not entirely flat-top and they show this results in $n^{-1 /5}$ rate of convergence and a non-zero asymptotic mean as produced here as well.

\subsection{Mapping MRC into a realised kernel estimator}
We can indeed make a stronger link between the multi-scale RV, the kernel approach and our pre-averaging method, when estimating quadratic variation. Here, we provide some more details about this relationship. To fix ideas, we take $d = 1$ (i.e., the univariate setting) and compare only to the kernel approach. \citet*{barndorff-nielsen-hansen-lunde-shephard:08a} then show how the multi-scale RV fits into the realised kernel setting.

When we explicitly include the bias-correction and ignore finite sample issues, the $MRC \left[ Y \right]_{n}$ estimator is given by the formula
\begin{equation}
MRC \left[ Y \right]_{n} = \frac{1}{ \psi_{2} k_{n}}  \sum_{i = 0}^{n - k_{n} + 1} | \bar Y_{i}^{n} |^{2} - \frac{ \psi_{1}}{ \theta^{2} \psi_{2}} \hat{ \Psi}_{n}.
\end{equation}
Consider a flat-top kernel-based estimator with kernel weights
\begin{equation*}
k(s) = \frac{ \phi_{2}(s)}{ \psi_{2}},
\end{equation*}
where $\phi_{2}(s)$ and $\psi_{2}$ are defined as in Section \ref{preavg}. We call the function $k$ a flat-top kernel, if (i) $k(0) = 1$, (ii) $k(1) = 0$ and (iii) $k'(0) = k'(1) = 0$. We note that $k(s) = \phi_2(s) / \psi_2$ satisfies all three conditions: (i) and (ii) are trivial, while the third condition follows from the identity:
\begin{equation*}
k'(s) = - \frac{1}{ \psi_{2}} \int_s^{1} g(x) g'(x - s) \text{d}x,
\end{equation*}
and integration by parts (recall that $g(0) = g(1) = 0$).

In the next step, we map the MRC statistic into a kernel-like one as follows
\begin{equation}
MRC \left[ Y \right]_{n} = \sum_{i = 1}^{n} \delta_{0i} | \Delta_i^n Y |^2 + 2 \sum_{h = 1}^{k_n - 2} \sum_{i = 1}^{n - h} \delta_{hi} \Delta_i^n Y \Delta_{i + h}^n Y
\end{equation}
with
\begin{equation}
\delta_{0i} = \left \{
\begin{array} {lc}
\frac{1}{ \psi_{2} k_{n}} \sum_{j = 1}^i g^2 ( \frac{j}{k_{n}}) - \frac{ \psi_{1}}{ \theta^{2} \psi_{2} 2n} :& 1\leq i\leq k_n -2 \\[0.25cm]
\frac{1}{ \psi_{2} k_{n}} \sum_{j = 1}^{k_n-1} g^2 ( \frac{j}{k_{n}}) - \frac{ \psi_{1}}{ \theta^{2} \psi_{2} 2n} :& k_n-1\leq i
\leq n -k_n+2 \\[0.25cm]
\frac{1}{ \psi_{2} k_{n}} \sum_{j = 1}^{n - i +1} g^2 ( \frac{k_{n}-j}{k_{n}}) - \frac{ \psi_{1}}{ \theta^{2} \psi_{2} 2n} :&
n -k_n+3 \leq i
\leq n
\end{array}
\right.
\end{equation}
and
\begin{equation}
\delta_{hi} = \left \{
\begin{array} {lc}
\frac{1}{ \psi_{2} k_{n}} \sum_{j = 1}^i g ( \frac{j}{k_{n}}) g ( \frac{j + h}{k_{n}}) :& 1\leq i\leq k_n -h-2 \\[0.25cm]
\frac{1}{ \psi_{2} k_{n}} \sum_{j = 1}^{k_n-h-1} g ( \frac{j}{k_{n}}) g ( \frac{j + h}{k_{n}}) :& k_n-h-1\leq i \leq n -k_n+2 \\[0.25cm]
\frac{1}{ \psi_{2} k_{n}} \sum_{j = 1}^{n - i +1 } g ( \frac{k_{n} - j}{k_{n}}) g ( \frac{k_{n} - j + h}{k_{n}}) :&
n -k_n+3 \leq i
\leq n
\end{array}
\right.
\end{equation}
for $1\leq h\leq k_n-2$.

An example and some remarks are now in order.

\noindent \textbf{Example} \ Take $g(x) = \min(x, 1 - x)$ for $x\in [0,1]$, which is our canonical weight function. Then, the corresponding kernel $k(s)= \phi_2(s) / \psi_2$ is the Parzen kernel.

\noindent \textbf{Remark} \ Apart from border terms, i.e. terms close to 0 and 1, the pre-averaging estimator coincides with the kernel-based estimator using the flat-top kernel
function $k(s) = \phi_2(s)/\psi_2$. Both estimators possess the optimal rate of convergence $n^{-1/4}$, but the border terms affect their asymptotic distribution. In fact, to arrive at their central limit theorem
for the realised kernel, \citet*{barndorff-nielsen-hansen-lunde-shephard:08c} need to apply some averaging (or "jittering") to edge terms, while the pre-averaging estimator is asymptotically mixed normal "by construction".

\noindent \textbf{Remark} \ Using the definition $k(s)= \phi_2(s)/\psi_2$, we learn that for every weight function $g$ there exists a flat-top kernel $k$. However, the converse statement is not true in general. Thus, the class of flat-top kernels is larger than our class of pre-averaging functions, but this does not appear to be a noticeable disadvantage for practical applications \citep*[][for example, advocate using the Parzen kernel in practice, which corresponds to our $g$ function by the example]{barndorff-nielsen-hansen-lunde-shephard:08c}.

It is worthwhile to underscore that pre-averaging is a general approach, which can be used to estimate various characteristics of semimartingales, when they are cloaked with noise. \citet*{jacod-li-mykland-podolskij-vetter:09a}, for example, pre-average realised variance, but it has also been used in jump-robust inference in conjunction with the bipower variation statistic \citep*[e.g.,][]{podolskij-vetter:09a} or the quantile-based realised variance \citep*[e.g.,][]{christensen-oomen-podolskij:10a}. This is also the reason, why we can estimate the asymptotic conditional variance in the CLT using the same type of estimator to deliver a feasible result. Other estimation methods are typically designed to estimate quadratic variation and cannot, in general, be used to solve other estimation problems.

\subsection{Applying the MRC to non-synchronous data}
Non-synchronous trading has long been a recognized feature of multivariate high-frequency data \citep*[see, e.g.,][]{fisher:66a,lo-mackinlay:90a}. This causes spurious cross-autocorrelation amongst asset price returns sampled at regular intervals in calendar time, as new information gets built into prices at varying intensities. It is well-known that high-frequency realised covariance estimates, using for example the previous-tick method to align prices, are biased towards zero in this setting \citep*[e.g.,][]{epps:79a}.

Motivated by these shortcomings of realised covariance, a number of alternative procedures have been proposed in the literature. \citet*{scholes-williams:77a} and \citet*{dimson:79a} suggested to include leads and lags of sample autocovariances of high-frequency returns into the realised covariance to correct for stale prices. This results in a bias reduction but also increases the variance of the estimator \citep*[e.g.,][]{griffin-oomen:06a}. Importantly, the lead-lag realised covariance is generally inconsistent in the presence of noise. More recently, \citet*{zhang:08a} extends the two-scale RV to the multivariate setting and shows that it is consistent under non-synchronous prices and noise. An analytic derivation of the impact of Epps effect on realised covariance under previous-tick sampling is also given. In independent and concurrent work to ours, \citet*{barndorff-nielsen-hansen-lunde-shephard:08c} propose a multivariate realised kernel. They rely on the concept of refresh time to match prices in calendar time and show that stale pricing errors under refresh time sampling do not change the asymptotic distribution of the multivariate realised kernel.

A related approach can be taken to make the MRC robust to non-synchronous observation times. In particular, under suitable regularity conditions on the sampling times, non-synchronous trading does not alter the asymptotic distribution of the MRC, when constructing synchronous time series of prices. A multivariate time series of high-frequency returns obtained in this fashion can therefore be plugged into the asymptotic theory developed above without concern. The proof of these results are highly technical, but their validity should be clear in the light of the comparison with the multivariate realised kernel given above. Therefore, we omit the detailed proofs of these results. Instead, we conduct some simulations in section \ref{Section:simulations} to verify the correctness of these conjectures.

\subsubsection{A pre-averaged Hayashi-Yoshida estimator}
\citet*{hayashi-yoshida:05a,hayashi-yoshida:08a} develop an alternative procedure for covariance measurement in the noise-free case, which is based on the original non-synchronous data \citep*[see, e.g.,][for related work]{dejong-nijman:97a,martens:03a,palandri:06a,corsi-audrino:07a}. This estimator has the profound advantage that it does not throw away information that is typically lost using a synchronization procedure. Here, we show how this estimator can also be made robust to noise by using pre-averaging.

Given the vector of log-prices $Y = (Y^{1}, \ldots, Y^{d})'$, which is defined by the noisy diffusion model in Eq. \eqref{zsh}, we now assume that the component processes $(Y^{k})$ are observed at non-random time points $t_{i}^{(k)}$, for $i = 1, \ldots, n_{k}$, with $(t_{i}^{(k)})$ being a partition of the interval $[0,1]$ and $k = 1, \ldots, d$. In addition, we need some regularity conditions on the sampling such that all time schemes are comparable in the following way: $\max |t_{i}^{(k)} - t_{i - 1}^{(k)}| \rightarrow 0$ as $n_{k} \to \infty$, for $k = 1, \ldots, d$ and
\begin{equation}
\label{schemereg}
\max_{1 \leq i \leq n_{k}} \# \left\{ t_{j}^{(k)} \mid t_{j}^{(k)} \in \bigl[ t_{i}^{(l)}, t_{i}^{(l)} \bigr] \right\} \leq K,
\end{equation}
for $1 \leq k, l \leq d$ and some $K > 0$, where $K$ is independent of $n_{k}$, for $k = 1, \ldots, d$. The latter condition says that data from one process do not cluster in any single interval of the others. Finally, the following condition is needed
\begin{equation}
\label{schemebound}
\frac{\max|t_i^{(k)} - t_{i-1}^{(k)}|}{\min|t_i^{(k)} - t_{i-1}^{(k)}|}\leq  c~,
\end{equation}
where $c>0$ is a constant independent of $n_k$ for all $1\leq k\leq d$. This condition implies
that $n_k \max|t_i^{(k)} - t_{i-1}^{(k)}|\leq  c$, which could be assumed instead.

A consistent estimator of the integrated covariance $\int_{0}^{1} \Sigma_{s}^{kl} \text{d}s$ between assets $Y^{k}$ and $Y^{l}$ can then be constructed as follows. We set $n = \sum_{k = 1}^{d} n_{k}$ and define the statistic
\begin{equation}
HY[Y]_{n}^{(k,l)} = \frac{1}{ \left( \psi_{HY} k_{n} \right)^{2}} \sum_{i = 0}^{n_{k} - k_{n} + 1} \sum_{j = 0}^{n_{l} - k_{n} + 1} \bar{Y^{k}}_{i}^{n} \bar{Y^{l}}_{j}^{n} \mathbb{I}_{ \{ (t_{i}^{(k)}, t_{i + k_n}^{(k)}] \cap (t_j^{(l)}, t_{j+k_n}^{(l)}] \neq \emptyset \}},
\end{equation}
where $\psi_{HY} = \int_{0}^{1} g(x) \text{d}x$ and $\mathbb{I}_{ \left\{ \bullet \right\} }$ is the indicator function discarding pre-averaged returns that do not overlap in time. Hence, $HY[Y]_n^{(k,l)}$ is a pre-averaged version of the \citet*{hayashi-yoshida:05a} estimator. Note that under the previous conditions, $n$, $n_{k}$ and $n_{l}$ are of the same order and that $n$ controls the universal pre-averaging window $k_{n}$.\footnote{Our choice here implies that $k_{n}$ is identical across all pairs of asset combinations. In general, the best way of choosing $k_{n}$ depends on what is being estimated (e.g., covariance, correlation or beta). Thus, if one only cares about efficiency in pairwise covariance estimation, a better $HY[Y]_n^{(k,l)}$ estimator might be based on a local $k_n$, for example $k_{n}^{k,l} = \theta_{k,l} \sqrt{n} + o(n^{1/2})$. This would also serve to make the estimated covariances universe independent (i.e., they will not change by adding or removing assets). To make a more qualified, statistical argument on this issue, however, we need to compare an expression for the asymptotic variance of $HY[Y]_{n}^{(k,l)}$ under different ways of choosing $k_n$ (e.g., from a CLT). This is beyond the scope of this paper and will be left for future research, where we hope to shed more light on the subject.}

\begin{theorem}
\label{plimHYn} Assume that $\mathbb{E} \left( | \epsilon^j | ^{4} \right)  < \infty$ for all $j = 1, \ldots, d$, $(k_{n}, \theta)$ satisfy Eq. \eqref{kthetarelation} and $g(x) > 0$ for
$x \in (0,1)$. As $n \to \infty$, it holds that
\begin{equation}
HY[Y]_{n}^{(k,l)} \overset{p}{ \to} \int_{0}^{1} \Sigma_{s}^{kl} \text{\upshape{d}}s,
\end{equation}
for $1 \leq k, l \leq d$.
\end{theorem}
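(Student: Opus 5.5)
We split $\bar{Y^{k}}_{i}^{n} = \bar{X^{k}}_{i}^{n} + \bar{\epsilon^{k}}_{i}^{n}$ and expand $HY[Y]_{n}^{(k,l)}$ into four double sums: the pure signal part $\sum\sum \bar{X^{k}}_{i}\bar{X^{l}}_{j}\mathbb{I}_{ij}$, two cross parts, and the pure noise part $\sum\sum \bar{\epsilon^{k}}_{i}\bar{\epsilon^{l}}_{j}\mathbb{I}_{ij}$. All are normalised by $(\psi_{HY}k_n)^{-2}$. The plan is to show that the signal part converges to $\int_0^1\Sigma^{kl}_s\,\text{d}s$ and the other three are $o_p(1)$. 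Throughout, \eqref{schemereg} and \eqref{schemebound} are used in two ways. First, for each $i$ the number of $j$ with $\mathbb{I}_{ij}=1$ is $O(k_n)$. Second, all mesh sizes are of order $1/n$, with $n\asymp n_k\asymp n_l$.

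\textbf{Signal part.} We first reduce to $a=0$ and $\sigma$ bounded by localisation. We then rewrite each pre-averaged return as a weighted sum of elementary increments. Since the window increments are $X^{k}_{t^{(k)}_{i+m}}-X^{k}_{t^{(k)}_{i+m-1}}$, we have $\bar{X^{k}}_{i}=\int \big(\sum_m g(m/k_n)\mathbb{I}_{(t^{(k)}_{i+m-1},t^{(k)}_{i+m}]}(s)\big)\,\text{d}X^k_s$. Interchanging the sums gives
\[
\sum_{i,j}\bar{X^{k}}_{i}\bar{X^{l}}_{j}\mathbb{I}_{ij}=\sum_{p,q} w^{n}_{pq}\,\Delta_p X^k\,\Delta_q X^l ,
\]
where $\Delta_p X^k$ and $\Delta_q X^l$ are the elementary increments and $w^n_{pq}=\sum_{i,j} g\big(\tfrac{p-i}{k_n}\big)g\big(\tfrac{q-j}{k_n}\big)\mathbb{I}_{ij}$.

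The key deterministic step concerns overlapping elementary intervals $p,q$. For such a pair, the overlap indicator is identically $1$ for all windows $i\ni p$, $j\ni q$, except near the edges. The reason is that if both windows contain overlapping intervals, the windows themselves overlap. Hence $w^n_{pq}=\big(\sum_m g(m/k_n)\big)^2=(\psi_{HY}k_n)^2(1+o(1))$ uniformly. The Hayashi--Yoshida identity then applies: the sum over overlapping pairs of $\Delta_pX^k\Delta_qX^l$ converges to $\int_0^1\Sigma^{kl}_s\,\text{d}s$.

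Non-overlapping pairs with $w^n_{pq}\neq 0$ lie within distance $O(k_n)$ intervals of each other. Their contribution is a sum of products of increments with disjoint supports, which is a martingale-type term with mean zero. We bound its second moment by $n^{-2}\cdot\#\{\text{pairs}\}\cdot\sup|w^n_{pq}|^2/(k_n)^4$, which gives $O(n\cdot k_n\cdot n^{-2})=O(k_n/n)\to 0$. This requires the usual conditioning argument with Itô isometry, and a Riemann/continuity argument to freeze $\sigma$ over blocks of length $k_n/n$.

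\textbf{Noise and cross parts.} The noise is i.i.d. across observation times; for $k\neq l$ the noise components at distinct times are independent, and when observation times coincide they have covariance $\Psi^{kl}$. For the pure noise term we compute the mean and the variance. Each $\bar{\epsilon^k}_i=-\sum_m (g(\tfrac{m}{k_n})-g(\tfrac{m-1}{k_n}))\epsilon^k_{t_{i+m}}$ has coefficients of size $O(1/k_n)$. The mean is nonzero only through coinciding times. It is bounded by $k_n^{-2}\cdot n k_n\cdot k_n\cdot k_n^{-2}=O(n/k_n^{2})=O(1)$ before the extra cancellation, so we must be careful here. We use $\sum_m(g(\tfrac{m}{k_n})-g(\tfrac{m-1}{k_n}))=0$: summed over $j$ with the indicator being $1$ on an interval of consecutive $j$'s, the differences telescope. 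This gives $\sum_j \mathbb{E}[\bar{\epsilon^{k}}_{i}\bar{\epsilon^{l}}_{j}]\mathbb{I}_{ij}=O(k_n^{-1})$ per $i$ (only boundary terms survive). The total mean is then $O(k_n^{-2}\cdot n\cdot k_n^{-1})=O(n^{-1/2})$. The variance is bounded using fourth moments in the same manner and is $o(1)$. For the cross terms, independence of $X$ and $\epsilon$ gives conditional mean zero. Their second moment is $k_n^{-4}\sum O(k_n/n\cdot k_n^{-1})$ over $O(nk_n^2)$ correlated index pairs, which is $o(1)$. The positivity $g>0$ on $(0,1)$ ensures $\psi_{HY}>0$ and the non-degeneracy of the leading weights.

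\textbf{Main obstacle.} The hardest step is the noise bias. A naive bound gives $O(1)$, so the telescoping cancellation, which comes from the indicator being "interval-shaped" in $j$ under \eqref{schemereg}, must be exploited carefully. The same care is needed for the exact weight identity $w^n_{pq}\approx(\psi_{HY}k_n)^2$ for overlapping elementary intervals.
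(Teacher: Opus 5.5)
Your outline follows the paper's proof. It uses the same split into signal, cross and noise parts. It has the same exact weight $(\sum_m g(m/k_n))^2$ on interior pairs of overlapping elementary increments (the paper's $\bar a^n_{ij}(k,l)$), justified by the same fact that windows containing overlapping intervals must overlap. It uses $g(0)=g(1)=0$ to remove the noise bias, and the same mean-zero plus counting arguments for the remaining terms. Quoting classical Hayashi--Yoshida consistency for the overlapping signal pairs, instead of re-deriving it from \eqref{brcomp} and Riemann sums, is fine.

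The one step to repair is the noise bias, where the mechanism you state does not give the bound you claim. The block structure of $\{j:\mathbb{I}_{ij}=1\}$ follows from monotonicity of the $t^{(l)}_j$ alone; \eqref{schemereg} only controls its length. Telescoping $\sum_j\bigl(g(\tfrac{q-j}{k_n})-g(\tfrac{q-j-1}{k_n})\bigr)$ over an arbitrary block leaves a difference of two values of $g$, which is $O(1)$, not $O(k_n^{-1})$. If such leftovers occurred in every window, they would add up to a normalised bias of order $n/k_n^2\asymp 1$.

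What makes the cancellation work is the fact you already used for the signal. If the common time $t^{(l)}_q=t^{(k)}_{i+m}$ lies in window $i$, then every $l$-window containing it overlaps window $i$. So the block contains all $j$ with $1\le q-j\le k_n$, the telescoping is complete, and the sum is exactly $g(1)-g(0)=0$ (extend $g$ by $0$ outside $[0,1]$). Interior windows therefore contribute nothing. Only the $O(k_n)$ windows near $0$ and $1$, where the set of windows through $t$ is truncated, contribute, at most $O(1)$ each, giving $O(k_n^{-1})$ after the $k_n^{-2}$ normalisation. Your per-window bound is thus wrong at the edges, but your total order $n^{-1/2}$ is right. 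This is the paper's observation that the coefficient of $\epsilon^k_t\epsilon^l_t$ at an interior common time factorises as $\bigl(\sum_{j=1}^{k_n}(g(\tfrac{j}{k_n})-g(\tfrac{j-1}{k_n}))\bigr)^2=0$.

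A minor slip elsewhere: the cross term has $O(nk_n^3)$ correlated quadruples $(i,j,i',j')$, not $O(nk_n^2)$. The resulting bound $O(k_n^{-1})$ still vanishes.
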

\begin{proof}
See appendix. \qed
\end{proof}

Interestingly, Theorem \ref{plimHYn} shows the somewhat surprising result that there is no asymptotic noise-induced bias in $HY[Y]_{n}^{(k,l)}$, not even when the spacings $(t_{i}^{(k)})$ and $(t_{j}^{(l)})$ are identical. This can be seen as follows. First, under the i.i.d. structure on the noise only products of the form $\epsilon_{t_{i}^{(k)}}^{k} \epsilon_{t_{j}^{(l)}}^{l}$ with $t_{i}^{(k)} = t_{j}^{(l)} = t$ contribute to potential bias. We consider that set of points and assume that $k_{n} \leq i \leq n_{1} - k_{n}$ and $k_{n} \leq j \leq n_{2} - k_{n}$ (this is innocent, for the summands which do not fulfill this are negligible). Then, an inspection of $HY[Y]_{n}^{(k,l)}$ shows that all products
$\epsilon_{t}^{k} \epsilon_{t}^{l}$ appear with the factor
\begin{equation*}
\left( \sum_{j = 0}^{k_{n} - 1} g( \frac{j + 1}{k_{n}} ) - g( \frac{j}{k_{n}}) \right)^{2}
\end{equation*}
in front. But $\sum_{j = 0}^{k_{n} - 1} g( \frac{j + 1}{k_{n}} ) - g( \frac{j}{k_{n}}) = 0$, because $g(0) = g(1) = 0$, so these terms drop out of the summation.

While $HY[Y]_{n}^{(k,l)}$ has the advantage that it is free of prior alignment of log-prices and hence does not throw away information in the sample, it does suffer from being a pairwise estimator, which means that once we assemble all the single variance/covariance estimates into a full covariance matrix, it is not guaranteed to be positive semi-definite. Still, there are some problems in financial economics, in which it is only the accuracy of the estimator that matters and positive semi-definiteness is less important, for example in asset allocation and risk management under gross exposure constraints \citep*[e.g.][]{fan-zhang-yu:09a}. Moreover, our empirical results show that $(HY[Y]_{n}^{(k,l)})_{1 \leq k,l \leq d}$ does not fail to be positive semi-definite on a single day for the $d = 5$-dimensional vector of asset prices considered there.

\begin{proposition}
\label{orderTn}
Assume that $\mathbb{E} \left( | \epsilon^{j} |^{4} \right) < \infty$ for all $j = 1, \ldots, d$. If $(k_{n}, \theta)$ satisfy Eq. \eqref{kthetarelation} and $g(x)>0$ for $x \in (0,1)$, then
\begin{equation*}
\text{\upshape{var}} ( HY[Y]_{n}^{(k,l)} ) = O ( n^{- 1 / 2} ),
\end{equation*}
that is, $HY[Y]_{n}^{(k,l)}$ has the optimal rate of convergence.
\end{proposition}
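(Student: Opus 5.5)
We will bound the variance by decomposing the pre-averaged returns as $\bar{Y^k}_i^n = \bar{X^k}_i^n + \bar{\epsilon^k}_i^n$. This splits $HY[Y]_n^{(k,l)}$ into four double sums: $T^{XX}$, $T^{X\epsilon}$, $T^{\epsilon X}$ and $T^{\epsilon\epsilon}$. By $(a+b+c+d)^2 \le 4(a^2+b^2+c^2+d^2)$ it suffices to show $\text{var}(T^{\cdot\cdot}) = O(n^{-1/2})$ for each piece. Conditioning on $X$ and using $X \Perp \epsilon$ lets us treat the noise terms with $X$ frozen. By a standard localization we may assume $a$ and $\sigma$ are bounded.

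A counting argument on the sampling schemes underlies every estimate. By \eqref{schemereg} and \eqref{schemebound}, $n_k \asymp n_l \asymp n$. For fixed $i$, the number of $j$ with $(t_i^{(k)}, t_{i+k_n}^{(k)}] \cap (t_j^{(l)}, t_{j+k_n}^{(l)}] \neq \emptyset$ is $O(k_n)$. Consequently the indicator matrix has $O(n k_n)$ nonzero entries, each row and column carrying $O(k_n)$ of them. The prefactor is $(\psi_{HY} k_n)^{-2} \asymp n^{-1}$. Three ingredients drive all bounds:
\begin{itemize}
\item $\bar X$ and $\bar\epsilon$ are built from increments over windows of $k_n$ ticks.
\item $\text{cov}(\bar X_i, \bar X_{i'}) = 0$ unless the windows overlap, up to the drift, which is negligible.
\item $\text{cov}(\bar\epsilon_i,\bar\epsilon_{i'})$ vanishes unless $|i-i'|<k_n$. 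It is of size $O(k_n^{-1})$ because $\bar\epsilon_i^n = -\sum_j (g(\tfrac{j+1}{k_n})-g(\tfrac{j}{k_n}))\epsilon_{i+j}$ has coefficients of order $k_n^{-1}$.
\end{itemize}

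For $T^{\epsilon\epsilon}$, we expand $\text{var}$ as a fourfold sum over $(i,j,i',j')$ of $\text{cov}(\bar\epsilon^k_i\bar\epsilon^l_j, \bar\epsilon^k_{i'}\bar\epsilon^l_{j'})$. Writing everything in terms of the underlying $\epsilon_t$ and using independence across distinct times, a nonzero covariance requires pairing of noise indices, or fourth moments at coincident times, which are finite by assumption. Such a covariance is bounded by $C k_n^{-2}$ and is nonzero only if $i' $ lies within $O(k_n)$ of $i$ and $j'$ within $O(k_n)$ of $j$. The number of admissible quadruples is $O(n k_n)\cdot O(k_n^2)=O(nk_n^3)$. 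The variance is therefore $O(n^{-2} \cdot nk_n^3 \cdot k_n^{-2}) = O(k_n/n)=O(n^{-1/2})$.

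The mean of $T^{\epsilon\epsilon}$ is zero up to border terms, as explained after Theorem \ref{plimHYn}. Only the variance matters here anyway.

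The cross terms are handled the same way with one factor of size $k_n^{-1}$ replaced by $\text{var}(\bar X)=O(k_n/n)$. This gives $O(n^{-2}\cdot nk_n^3\cdot k_n^{-1}\cdot k_n n^{-1}) = O(k_n^2/n^2)\cdot k_n/ \ldots$, which in fact is smaller, $O(n^{-1/2})$ at worst.

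For $T^{XX}$ we use Itô calculus. We write $\bar{X^k}_i^n = \int g_n^{(k,i)}(s)\,dX^k_s$, where $g_n^{(k,i)}$ is a step function bounded by $\sup g$ and supported on $(t_i^{(k)},t_{i+k_n}^{(k)}]$. Then $T^{XX}=\int\int G_n(s,u)\,dX^k_s\,dX^l_u$, which splits into a diagonal part $\int G_n(s,s)\,d\langle X^k,X^l\rangle_s$ and a martingale double integral. The diagonal part is essentially deterministic given $\sigma$ and converges to the target. Its fluctuation is controlled like a Riemann sum, with the error coming from the variation of $\Sigma$ over $O(k_n/n)$ windows; to get $O(n^{-1/2})$ there we use càdlàg boundedness and, if needed, an Itô-semimartingale assumption on $\sigma$. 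For the off-diagonal part, the Itô isometry gives a second moment of order $\int\int G_n(s,u)^2\,ds\,du$. Since $|G_n| \le C k_n^{-2}\cdot(\#\text{overlapping pairs covering } (s,u)) = O(1)$ on a set of area $O(k_n/n)$, the bound is $O(k_n/n)=O(n^{-1/2})$.

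The main obstacle is the combinatorics of the fourfold sum in $T^{\epsilon\epsilon}$ under irregular, non-synchronous grids. There one must verify that the number of nonvanishing covariance quadruples really is $O(nk_n^3)$ rather than larger. We would handle this by mapping each grid to index time via \eqref{schemebound}: every interval of length $k_n/n$ contains a comparable number of ticks from each scheme, which reduces the problem to the synchronous bookkeeping.
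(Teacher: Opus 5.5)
Your treatment of the noise-bearing pieces $T^{\epsilon\epsilon}$, $T^{X\epsilon}$ and $T^{\epsilon X}$ is the paper's own argument. The paper applies it to all of $\bar Y$ at once: it counts $O(nk_n^3)$ correlated quadruples via \eqref{schemereg}, bounds each covariance by $O(n^{-1})$ because $\bar Y=O_p(n^{-1/4})$, and multiplies by the squared prefactor $k_n^{-4}\asymp n^{-2}$.

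The genuine difference is $T^{XX}$. There you replace counting by It\^{o}'s product formula, splitting it into $D_n=\int_0^1 G_n(s,s)\Sigma^{kl}_s\,ds$ plus a stochastic double integral whose second moment is at most $C\iint G_n^2=O(k_n/n)$. This is more careful than the paper. The paper asserts that products over non-overlapping windows are asymptotically uncorrelated, but this does not hold unconditionally when $\sigma$ is random: those correlations add up to $\text{var}(\int_0^1\Sigma^{kl}_s\,ds)=O(1)$. So the counting argument tacitly treats $\sigma$ as deterministic. Your split isolates exactly that part as the target and needs no independence between $\sigma$ and $W$. The paper's route, in turn, buys brevity and a uniform treatment of all four pieces.

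To make your version clean, three fixes are needed.

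First, say explicitly that what you bound is $E[(HY[Y]_n^{(k,l)}-\int_0^1\Sigma^{kl}_s\,ds)^2]$. Your opening reduction claims to bound $\text{var}(T^{XX})$, which is not small under random $\sigma$.

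Second, $D_n$ is not a Riemann sum. For $s$ farther than $Ck_n/n$ from $0$ and $1$, every pair of windows containing $s$ overlaps. Hence $G_n(s,s)=(\psi_{HY}k_n)^{-2}\bigl(\sum_{m=1}^{k_n-1}g(m/k_n)\bigr)^2=1+O(k_n^{-1})$ is constant there. It follows that $|D_n-\int_0^1\Sigma^{kl}_s\,ds|=O(k_n^{-1}+k_n/n)$ pathwise, using only boundedness of $\sigma$. So no It\^{o}-semimartingale assumption on $\sigma$ is needed, and adding one would weaken the result.

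Third, your cross-term arithmetic gives $k_n^3/n^2\asymp n^{-1/2}$. That is the same order as $T^{\epsilon\epsilon}$, not smaller.
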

\begin{proof}
See appendix. \qed
\end{proof}

Proposition \ref{orderTn} shows that the rate of convergence associated with $HY[Y]_{n}^{(k,l)}$ is $n^{-1/4}$. A proof of the much stronger result, the CLT for $HY[Y]_{n}^{(k,l)}$, will be given in a companion paper to this one \citep*[see][]{christensen-podolskij-vetter:10a}. In the empirical section, we gauge the properties of this estimator on actual data and find that the pre-averaged version performs very well.

\section{An estimator of the asymptotic covariance matrix}
\label{jetzt} In order to make Theorem \ref{clt2} and \ref{cltnon} feasible, we need to estimate the asymptotic covariance matrix $\text{avar}_{\text{MRC}}$, as it appears in Eq. \eqref{L2}. Here, we give an explicit estimator of $\text{avar}_{\text{MRC}}$. More precisely, we present an estimator of the asymptotic covariance matrix of the vectorized statistic in Eq. \eqref{mainstats}.

First, we set
\begin{equation}
\chi_{i}^{n} = \text{vec} \Bigl( \bar{Y}_{i}^{n} ( \bar{Y}_{i}^{n})' \Bigr),
\end{equation}
where $\text{vec}(\cdot)$ is the vectorisation operator that stacks columns of a matrix below one another. Next, we define the statistic
\begin{equation*}
V_n(g)= \sum_{i=0}^{n-k_n+1} \chi_i^n (\chi_i^n)' - \frac{1}{2} \sum_{i=0}^{n-2k_n+1}
\Big( \chi_i^n (\chi_{i+k_n}^n)' + \chi_{i+k_n}^n (\chi_i^n)' \Big).
\end{equation*}
We should note that $V_n(g)$ depends on both the bandwidth parameter $\theta$ and the pre-averaging function $g$, and that it is positive semi-definite by construction. Moreover, for any $1\leq k,k',l,l'\leq d$, we get the following convergence
\begin{equation*}
V_n^{(k-1)d + k', (l-1)d + l'}(g) \overset{p}{ \to} a_B(g,\theta ) \int_0^1 \Lambda_u^{kk',ll'} \text{d}u + a_M(g,\theta ) \int_0^1 \Theta_u^{kk',ll'} \text{d}u
+a_N(g,\theta ) \Upsilon^{kk',ll'},
\end{equation*}
where $a_B(g,\theta )=\theta^2 \psi_{2}^2$, $a_M(g,\theta )= \psi_{1} \psi_{2}$ and $a_N(g,\theta ) =
\frac{ \psi_{1}^2}{ \theta^2}$ \citep*[the proof of this result is achieved by using arguments alike the ones presented in the proof of Theorem \ref{consistency}; see][for further details]{kinnebrock-podolskij:08b}.

What needs to be estimated is
\begin{equation*}
\text{avar}_{\text{MRC}} = \frac{2}{\psi _2^2 }\left( \Phi _{22} \theta  \int_0^1 \Lambda _u \text{d}u  +
\frac{\Phi_{12}}{\theta }\int_0^1 \Theta _u \text{d}u  + \frac{\Phi _{11}}{\theta ^3}\Upsilon  \right),
\end{equation*}
with all constants referring to a given function $g_0$. Suppose that $g_{0}(x) = \min(x,1 - x)$ as above. Now, we take three different functions $g_1$, $g_2$ and $g_3$ that satisfy all the conditions of $g_0$, and which are chosen such that the matrix
\begin{equation*}
A(g_1,g_2,g_3)= \left( \begin{array} {ccc}
a_B(g_1,\theta ) & a_M(g_1,\theta ) & a_N(g_1,\theta ) \\
a_B(g_2,\theta ) & a_M(g_2,\theta ) & a_N(g_2,\theta ) \\
a_B(g_3,\theta ) & a_M(g_3,\theta ) & a_N(g_3,\theta )
\end{array}
\right)
\end{equation*}
is invertible. We can then construct a weight vector
\begin{equation}
\label{Eqn:Cofg}
C(g_1,g_2,g_3)= \Big(\frac{2 \Phi _{22} \theta}{\psi _2^2 }, \frac{2 \Phi _{12} }{\psi _2^2 \theta}, \frac{2 \Phi _{11} }{\psi _2^2 \theta^3}  \Big) A^{-1} (g_1,g_2,g_3).
\end{equation}
Finally, we form the statistic
\begin{equation}
\label{Hn2}
\widehat{\text{avar}}_{\text{MRC},n} = C^{(1)}(g_1,g_2,g_3) V_n(g_1) + C^{(2)}(g_1,g_2,g_3) V_n(g_2) + C^{(3)}(g_1,g_2,g_3) V_n(g_3),
\end{equation}
where $V_n(g_k)$ are the above estimators associated with the functions $g_k, k = 1,2,3$. Then, it holds that
\begin{equation*}
\widehat{\text{avar}}_{\text{MRC},n}^{(k-1)d + k', (l-1)d + l'} \overset{p}{ \to} \text{avar}_{\text{MRC}}^{kk',ll'},
\end{equation*}
for any $1\leq k,k',l,l'\leq d$.\footnote{The weights $\Big(\frac{2 \Phi _{22} \theta}{\psi _2^2 }, \frac{2 \Phi _{12} }{\psi _2^2 \theta}, \frac{2 \Phi _{11} }{\psi _2^2 \theta^3}  \Big)$ in Eq. \eqref{Eqn:Cofg} are the coefficients in front of $(\int_0^1 \Lambda_u \text{d}u, \int_0^1 \Theta _u \text{d}u, \Upsilon)$ in the expression for the asymptotic variance of MRC. This choice results in a consistent estimator of $\text{avar}_{\text{MRC}}$. More generally, one can estimate any linear combination of $(\int_0^1 \Lambda_u \text{d}u, \int_0^1 \Theta _u \text{d}u, \Upsilon)$ by choosing an appropriate weight vector in Eq. \eqref{Eqn:Cofg}. For example, the multivariate version of the integrated quarticity, $\int_0^1 \Lambda_u \text{d}u$, can be estimated by plugging the linear combination $(1,0,0) A^{-1} (g_1,g_2,g_3)$ into Eq. \eqref{Hn2}.}

There are various classes of functions from which the $g_k$ can be selected, for example $g(x) = x^{a} \left( 1 - x \right)^{b}$ with $a,b \geq  1$ or $g(x) = \sin(c \pi x)$ for integer $c$.
If one manages to find a set of functions $g_k$ for which the coefficients $C^{(k)}(g_1,g_2,g_3)$ are all positive, $k = 1, 2$ and $3$, then the statistic in Eq. \eqref{Hn2} is guaranteed to be positive semi-definite, because it is a linear combination of positive semi-definite statistics $V_n(g_k)$ using positive weights $C^{(k)}(g_1,g_2,g_3)$. It appears to be quite hard to find such a combination in practice, and so far we did not succeed at this. Nonetheless, $\widehat{\text{avar}}_{\text{MRC},n}$ remains a consistent estimator.

\section{Asymptotic theory for covariance, regression and correlation}
The results in section 3 and 4 can be applied in order to compute confidence intervals for some functionals of $\int_{0}^{1} \Sigma_{u} \text{d}{u}$ that are important in practice, such as covariance, regression and correlation. For the $i$th and $j$th asset, these quantities are given by
\begin{equation} \label{covcorr}
\int_0^1 \Sigma_u^{ij} \text{d}u, \quad \beta ^{(ji)}  = \frac{ \int_0^1 \Sigma^{ij}_u \text{d}u}{ \int_0^1 \Sigma ^{ii}_u \text{d}u},
\quad \rho^{(ji)} = \frac{ \int_0^1 \Sigma ^{ij}_u \text{d}u}{ \sqrt {\int_0^1 \Sigma ^{ii}_u \text{d}u  \int_0^1 \Sigma ^{jj}_u \text{d}u}}.
\end{equation}
Theorem \ref{consistency}, \ref{stochnon} or \ref{plimHYn} can be invoked to provide consistent
estimates of $\int_0^1 \Sigma_u^{ij} \text{d}u$, $\beta ^{(ji)}$ and $\rho^{(ji)}$, e.g. for $MRC \left[Y \right]_{n}$
\begin{equation*}
MRC \left[ Y \right]_{n}^{i, j} \overset{p}{ \to} \int_0^1 \Sigma_u^{ij} \text{d}u, \quad \hat \beta^{(ji)}_n = \frac{MRC \left[ Y \right]_{n}^{i, j}}{
MRC \left[ Y \right]_{n}^{i, i}} \overset{p}{ \to} \beta ^{(ji)},
\end{equation*}
\begin{equation}
\label{estcorr}
\hat \rho^{(ji)}_n = \frac{MRC \left[ Y \right]_{n}^{i, j}}{ \sqrt{MRC \left[ Y \right]_{n}^{i, i} MRC  \left[ Y \right]_{n}^{j, j}}} \overset{p}{ \to} \rho^{(ji)}
\end{equation}
for any $1\leq i,j\leq d$. The estimators in \eqref{estcorr} are called modulated realised covariance, regression and correlation, respectively. In the next theorem we present the associated feasible central limit theorems, which follow
from Theorem \ref{clt2} and the delta method for stable convergence.
\begin{theorem} \label{cltcov}
Assume that $\mathbb{E} \left( | \epsilon^{j} |^{8} \right) < \infty$ for all $j = 1,...,d$ and $(k_n, \theta)$ satisfy Eq. \eqref{kthetarelation}. As $n \to \infty$, it holds that
\begin{eqnarray}
&& \frac{n^{1 / 4} \Big( MRC \left[ Y \right]_{n}^{i, j} - \int_0^1 \Sigma_u^{ij} \text{\upshape{d}}u \Big)}{\widehat{\text{avar}}_{\text{MRC},n}^{(i-1)d + j, (i-1)d + j}}
\overset{d}{ \to} N(0,1), \\[0.25cm]
&& \frac{n^{1/4} (\hat \beta_n^{(ji)} - \beta^{(ji)} )}{ \sqrt{ \left( MRC  \left[ Y \right]_{n}^{i,i} \right)^{- 2}  g^{(ji)}_n } }
\overset{d}{ \to} N(0,1), \\[0.25cm]
&& \frac{n^{1/4} (\hat \rho ^{(ji)}_n - \rho^{(ji)})}{ \sqrt{\left( MRC \left[ Y \right]_{n}^{i, i} MRC \left[ Y \right]_{n}^{j, j} \right)^{-1} h^{(ji)}_{n}}} \overset{d}{ \to} N(0,1),
\end{eqnarray}
where $\widehat{\text{avar}}_{\text{MRC},n}$ is given in \eqref{Hn2} and $g^{(ji)}_n$ and $h^{(ji)}_n$ are defined by
\begin{equation*}
g^{(ji)}_n= \Big( 1, - \hat \beta_n^{(ji)} \Big) \Gamma_n \Big( 1, - \hat \beta_n^{(ji)} \Big)'~, \qquad
h^{(ji)}_n= \Big( - \frac{1}{2} \hat \beta_n^{(ji)}, 1,  - \frac{1}{2} \hat \beta_n^{(ij)}\Big)
\overline \Gamma_n \Big( - \frac{1}{2} \hat \beta_n^{(ji)}, 1,  - \frac{1}{2} \hat \beta_n^{(ij)}\Big)'
\end{equation*}
with
\begin{eqnarray*}
\Gamma_n = \left( \begin{array} {cc}
\widehat{\text{avar}}_{\text{MRC},n}^{(i-1)d + j, (i-1)d + j} & \widehat{\text{avar}}_{\text{MRC},n}^{(i-1)d + j, (i-1)d + i} \\
\bullet & \widehat{\text{avar}}_{\text{MRC},n}^{(i-1)d + i, (i-1)d + i}
\end{array}
\right)~, \\ [2.0 ex]
\overline \Gamma_n = \left( \begin{array} {ccc}
\widehat{\text{avar}}_{\text{MRC},n}^{(i-1)d + i, (i-1)d + i} & \widehat{\text{avar}}_{\text{MRC},n}^{(i-1)d + j, (i-1)d + i} & \widehat{\text{avar}}_{\text{MRC},n}^{(i-1)d + i, (j-1)d + j} \\
\bullet & \widehat{\text{avar}}_{\text{MRC},n}^{(i-1)d + j, (i-1)d + j} & \widehat{\text{avar}}_{\text{MRC},n}^{(i-1)d + j, (j-1)d + j} \\
\bullet & \bullet & \widehat{\text{avar}}_{\text{MRC},n}^{(j-1)d + j, (j-1)d + j}
\end{array}
\right).
\end{eqnarray*}
\end{theorem}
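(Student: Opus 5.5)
The three statements in Theorem \ref{cltcov} follow from Theorem \ref{clt2}, the consistency of $\widehat{\text{avar}}_{\text{MRC},n}$ from Section \ref{jetzt}, and a delta method for stable convergence. The essential input is that stable convergence of $Z_n := n^{1/4}\,\text{vec}\bigl(MRC[Y]_n - \int_0^1 \Sigma_u \text{d}u\bigr)$ to $Z \sim MN(0,\text{avar}_{\text{MRC}})$ is joint with any $\mathcal{F}$-measurable limit. Concretely, if $U_n \overset{p}{\to} U$ with $U$ $\mathcal{F}$-measurable, then $(Z_n,U_n) \overset{d_s}{\to} (Z,U)$. This is a standard property of stable convergence (e.g. \citet*{jacod-shiryaev:03a}): for bounded Lipschitz $f$, we have $\mathbb{E}[Wf(Z_n,U_n)] - \mathbb{E}[Wf(Z_n,U)] \to 0$, and the second term converges by stability after approximating $f(\cdot,U)$ by finite sums of products $h(U)k(\cdot)$. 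I apply this with $U_n = \widehat{\text{avar}}_{\text{MRC},n}$, which converges in probability to $\text{avar}_{\text{MRC}}$ entrywise by the result stated after \eqref{Hn2}. I also apply it with $U_n = MRC[Y]_n$ itself, whose limit $\int_0^1\Sigma_u\text{d}u$ follows from Theorem \ref{consistency} together with $\hat\Psi_n \overset{p}{\to}\Psi$.

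\textbf{Covariance.} Take the $((i-1)d+j)$th coordinate of $Z_n$. Its stable limit is mixed normal with conditional variance $\text{avar}_{\text{MRC}}^{(i-1)d+j,(i-1)d+j}$, which is $\mathcal{F}$-measurable. I assume this variance is a.s. positive, which holds when $\Psi$ is positive definite because of the $\Upsilon$ term. Dividing by the square root of the consistent estimator, via the joint stable convergence and the continuous mapping theorem, turns the mixed normal into $N(0,1)$. I read the denominator in the first display as the square root of the variance estimate, consistent with the other two displays.

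\textbf{Regression and correlation.} The functions $\beta = x_{ij}/x_{ii}$ and $\rho = x_{ij}/\sqrt{x_{ii}x_{jj}}$ are $C^1$ near the true value. The required condition is $\int\Sigma^{ii}>0$ a.s., which follows from positive definiteness of $\Sigma$ or can be imposed. I use a first-order Taylor expansion with a random intermediate point, which converges in probability, so that
\[
n^{1/4}(\hat\beta_n-\beta)=\nabla\beta(\hat x)\, Z_n + o_p(1).
\]
The gradient is $(\int\Sigma^{ii})^{-1}(1,-\beta)$ in the coordinates $(ij,ii)$. For $\rho$ the gradient is $(\int\Sigma^{ii}\int\Sigma^{jj})^{-1/2}(-\tfrac12\beta^{(ji)},1,-\tfrac12\beta^{(ij)})$, after rescaling the $ii$- and $jj$-components. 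Joint stable convergence then gives mixed normal limits with conditional variances $\nabla^\top \text{avar}\,\nabla$. These are exactly the population versions of $(MRC^{i,i})^{-2}g_n^{(ji)}$ and $(MRC^{i,i}MRC^{j,j})^{-1}h_n^{(ji)}$, with $\Gamma_n$ and $\overline\Gamma_n$ the relevant sub-blocks of $\widehat{\text{avar}}_{\text{MRC},n}$. The symmetry $\text{avar}^{(i-1)d+j,\cdot}=\text{avar}^{(j-1)d+i,\cdot}$ handles the index choice. Since $\hat\beta_n$, $MRC^{i,i}$ and $\Gamma_n$ all converge in probability, studentizing yields $N(0,1)$ as in the covariance case.

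\textbf{Main obstacle.} The only delicate point is the joint stable convergence with estimated, random normalizers, together with nondegeneracy of the limiting conditional variances. This is where I would spend care: proving the stable Slutsky lemma above and checking that the quadratic forms in $\text{avar}_{\text{MRC}}$ are a.s. positive. Positivity follows because the $\Upsilon$ contribution is a positive definite form in $\Psi$ evaluated at a nonzero gradient vector.
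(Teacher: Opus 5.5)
Your proposal is correct and follows the paper's own route: the paper justifies Theorem \ref{cltcov} in one line as a consequence of Theorem \ref{clt2}, the consistency of $\widehat{\text{avar}}_{\text{MRC},n}$ stated in Section \ref{jetzt}, and the delta method for stable convergence. Your gradients reproduce $g_n^{(ji)}$ and $h_n^{(ji)}$, and your reading of the first denominator as a square root is right. One small precision in your nondegeneracy remark: if the gradient is written as $\text{vec}(A)$, the $\Upsilon$-form equals $\frac12\|\Psi^{1/2}(A+A')\Psi^{1/2}\|_F^2$, which vanishes for antisymmetric $A$. So the condition you actually need is that $A$ has nonzero symmetric part, and this holds for all three gradients here (taking $i\neq j$ for $\beta$ and $\rho$).
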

All the required terms are easy to compute, so it is rather simple to implement the estimators.

\section{Simulation study}
\label{Section:simulations}
We now demonstrate the finite sample accuracy of some of the asymptotic results developed above using simulations. The design of our Monte Carlo study, which we briefly describe here, is identical to the analysis conducted in \citet*{barndorff-nielsen-hansen-lunde-shephard:08c}.

Specifically, to simulate log-prices we consider the following bivariate stochastic volatility model
\begin{equation}
\text{d}X_{t}^{(i)} = a^{(i)} \text{d}t + \rho^{(i)} \sigma_{t}^{(i)} \text{d}B_{t}^{(i)} + \sqrt{1 - [\rho^{(i)}]^2} \sigma_{t}^{(i)} \text{d}W_{t}, \quad \text{ for } i = 1,2,
\end{equation}
where $B^{(i)}$ and $W$ are independent Brownian motions. In this model, the term $\rho^{(i)} \sigma_{t}^{(i)} \text{d}B_{t}^{(i)}$ is an idiosyncratic component, while $\sqrt{1 - [\rho^{(i)}]^2} \sigma_{t}^{(i)} \text{d}W_{t}$ is a common factor.

The spot volatility is modeled as $\sigma_{t}^{(i)} = \exp (\beta_0^{(i)} + \beta_1^{(i)} \varrho_{t}^{(i)})$ with an Ornstein-Uhlenbeck specification for $\varrho_{t}^{(i)}$: $\text{d} \varrho_{t}^{(i)} = \alpha^{(i)} \varrho_{t}^{(i)} \text{d}t + \text{d}B_{t}^{(i)}$. This implies that there is perfect correlation between the innovations of $\rho^{(i)} \sigma_{t}^{(i)} \text{d}B_{t}^{(i)}$ and $\sigma_{t}^{(i)}$, while it is $\rho^{(i)}$ between the increments of $X_{t}^{(i)}$ and $\varrho_{t}^{(i)}$. Finally, the magnitude of correlation between the two underlying price processes $X_{t}^{(1)}$ and $X_{t}^{(2)}$ is $\sqrt{1 - [\rho^{(1)}]^2} \sqrt{1 - [\rho^{(2)}]^2}$. The reported results are based on the following configuration of parameters for both processes: $(a^{(i)}, \beta_0^{(i)}, \beta_1^{(i)}, \alpha^{(i)}, \rho^{(i)}) = (0.03, -5/16, 1(8,-1/40,-0.3)$, so that $\beta_0^{(i)} = [\beta_1^{(i)}]^{2} / [2 \alpha^{(i)}]$. We note that this particular choice of parameters also means that the volatility process has been normalized, in the sense that $\mathbb{E} ( \int_0^1 [\sigma_s^{(i)}]^2 \text{d}s) = 1$.

We simulate 1,000 paths of this model over the interval $[0,1]$. Motivated by our empirical data, we let $[0,1]$ represent 6.5 hours worth of trading, which is then further decomposed into $N = 23,400$ subintervals of equal length $1 / N$ ($N$ denoting the number of seconds in 6.5 hours). In constructing noisy prices $Y^{(i)}$, we first generate a complete high-frequency record of $N$ equidistant observations of the efficient price $X^{(i)}$ using a standard Euler scheme. The initial values for the $\varrho_{t}^{(i)}$ processes at each simulation run is drawn randomly from the stationary distribution of $\varrho_{t}^{(i)}$, which is $\varrho_{t}^{(i)} \sim N(0,[-2\alpha^{(i)}]^{-1})$.\footnote{Note that the Ornstein-Uhlenbeck process admits an exact discretization \citep*[see, e.g.,][]{glassermann:04a}. We use that result here to avoid discretization errors in approximating the continuous time distribution of $\text{d} \varrho^{(i)}$ over discrete time steps of size $\Delta t = 1 / N$.}

We add simulated microstructure noise $Y^{(i)} = X^{(i)} + \epsilon^{(i)}$ by taking
\begin{equation}
\epsilon^{(i)} \mid \{ \sigma, X \} \ \overset{\text{i.i.d}}{\sim} \ N(0, \omega^2) \quad \text{with} \quad \omega^{2} = \gamma^2 \sqrt{ \frac{1}{N} \sum_{j=1}^N \sigma_{j / N}^{(i)4}}.
\end{equation}
This choice again follows \citet*{barndorff-nielsen-hansen-lunde-shephard:08c} and means that the variance of the noise process increases with the level of volatility of $X^{(i)}$, as documented by \citet*{bandi-russell:06a}. $\gamma^{2}$ takes the values 0, 0.001, 0.01, which covers scenarios with no noise through low-to-high levels of noise.

Finally, we extract irregular, non-synchronous data from the complete high-frequency record using Poisson process sampling to generate actual observation times, $\{ t_{j}^{(i)} \}$. In particular, we consider two independent Poisson processes with intensity parameter $\lambda = \left( \lambda_1, \lambda_2 \right)$. Here $\lambda_{i}$ denotes the average waiting time (in seconds) for new data from process $Y^{(i)}$, so that an average day will have $N / \lambda_{i}$ observations of $Y^{(i)}, i = 1,2$. We vary $\lambda_{1}$ through $(3, 5, 10, 30, 60)$ to capture the influence of liquidity on the performance of our estimators, and we set $\lambda_{2} = 2 \lambda_{1}$ such that on average $Y^{(2)}$ refreshes at half the pace of $Y^{(1)}$.

\begin{table}[H]
\setlength{\tabcolsep}{0.25cm}
\renewcommand{\arraystretch}{0.25}
\begin{center}
\caption{Simulation results}
\label{table:simulation}
\medskip
\begin{small}
\begin{tabular}{lrrrrrrrrrrrrr}
\hline
\multicolumn{5}{l}{\emph{Panel A: Integrated covariance}}\\
 & \multicolumn{2}{c}{$Cov^{15m}$}
 & \multicolumn{2}{c}{$Cov^{1m}$}
 & \multicolumn{2}{c}{$MRC[Y]_{n=RT}$} & \multicolumn{2}{c}{$MRC[Y]_{n=RT}^{\delta=0.1}$} & \multicolumn{2}{c}{$HY[Y]_{n}^{(k,l)}$}\\
$\xi^2 = 0$ & bias & rmse & bias & rmse & bias & rmse & bias & rmse & bias & rmse\\
$\lambda = (3,6)$ & -0.015 & 0.159 & -0.052 & 0.103 & -0.007 & 0.156 & -0.018 & 0.222 & -0.044 & 0.278\\
$\lambda = (5,10)$ & -0.025 & 0.170 & -0.090 & 0.147 & -0.010 & 0.167 & -0.023 & 0.233 & -0.055 & 0.296\\
$\lambda = (10,20)$ & -0.044 & 0.174 & -0.178 & 0.266 & -0.017 & 0.207 & -0.027 & 0.270 & -0.069 & 0.338\\
$\lambda = (30,60)$ & -0.128 & 0.255 & -0.387 & 0.547 & -0.028 & 0.259 & -0.038 & 0.326 & -0.101 & 0.426\\
$\lambda = (60,120)$ & -0.228 & 0.367 & -0.505 & 0.702 & -0.038 & 0.300 & -0.047 & 0.370 & -0.117 & 0.512\\
$\xi^2 = 0.001$\\
$\lambda = (3,6)$ & -0.019 & 0.166 & -0.054 & 0.122 & -0.008 & 0.148 & -0.018 & 0.220 & -0.044 & 0.275\\
$\lambda = (5,10)$ & -0.025 & 0.174 & -0.089 & 0.161 & -0.011 & 0.176 & -0.022 & 0.239 & -0.053 & 0.298\\
$\lambda = (10,20)$ & -0.046 & 0.188 & -0.183 & 0.292 & -0.016 & 0.198 & -0.028 & 0.267 & -0.071 & 0.345\\
$\lambda = (30,60)$ & -0.125 & 0.246 & -0.383 & 0.538 & -0.030 & 0.263 & -0.036 & 0.329 & -0.099 & 0.431\\
$\lambda = (60,120)$ & -0.226 & 0.368 & -0.506 & 0.705 & -0.039 & 0.295 & -0.043 & 0.366 & -0.122 & 0.518\\
$\xi^2 = 0.01$\\
$\lambda = (3,6)$ & -0.014 & 0.350 & -0.048 & 0.519 & -0.007 & 0.147 & -0.019 & 0.216 & -0.044 & 0.276\\
$\lambda = (5,10)$ & -0.019 & 0.350 & -0.070 & 0.485 & -0.010 & 0.174 & -0.021 & 0.237 & -0.054 & 0.295\\
$\lambda = (10,20)$ & -0.056 & 0.346 & -0.192 & 0.604 & -0.017 & 0.202 & -0.029 & 0.264 & -0.069 & 0.341\\
$\lambda = (30,60)$ & -0.116 & 0.365 & -0.392 & 0.682 & -0.029 & 0.243 & -0.040 & 0.323 & -0.099 & 0.430\\
$\lambda = (60,120)$ & -0.225 & 0.494 & -0.508 & 0.767 & -0.039 & 0.324 & -0.042 & 0.386 & -0.117 & 0.523\\
\emph{Panel B: Integrated correlation}\\
$\xi^2 = 0$ & bias & rmse & bias & rmse & bias & rmse & bias & rmse & bias & rmse\\
$\lambda = (3,6)$ & -0.016 & 0.028 & -0.069 & 0.070 & -0.001 & 0.017 & -0.002 & 0.026 & -0.012 & 0.051\\
$\lambda = (5,10)$ & -0.026 & 0.037 & -0.119 & 0.121 & -0.002 & 0.020 & -0.003 & 0.030 & -0.014 & 0.059\\
$\lambda = (10,20)$ & -0.051 & 0.060 & -0.237 & 0.239 & -0.002 & 0.024 & -0.004 & 0.036 & -0.020 & 0.076\\
$\lambda = (30,60)$ & -0.157 & 0.167 & -0.514 & 0.517 & -0.003 & 0.034 & -0.008 & 0.047 & -0.032 & 0.115\\
$\lambda = (60,120)$ & -0.287 & 0.299 & -0.672 & 0.674 & -0.006 & 0.044 & -0.013 & 0.061 & 0.041 & 1.328\\
$\xi^2 = 0.001$\\
$\lambda = (3,6)$ & -0.141 & 0.149 & -0.438 & 0.440 & -0.001 & 0.018 & -0.005 & 0.027 & -0.012 & 0.050\\
$\lambda = (5,10)$ & -0.146 & 0.154 & -0.463 & 0.465 & -0.002 & 0.020 & -0.005 & 0.030 & -0.014 & 0.060\\
$\lambda = (10,20)$ & -0.170 & 0.179 & -0.528 & 0.530 & -0.003 & 0.025 & -0.007 & 0.037 & -0.019 & 0.077\\
$\lambda = (30,60)$ & -0.257 & 0.266 & -0.657 & 0.659 & -0.005 & 0.035 & -0.012 & 0.050 & -0.032 & 0.118\\
$\lambda = (60,120)$ & -0.368 & 0.378 & -0.738 & 0.739 & -0.007 & 0.046 & -0.017 & 0.062 & -0.033 & 0.458\\
$\xi^2 = 0.01$\\
$\lambda = (3,6)$ & -0.559 & 0.571 & -0.813 & 0.815 & -0.002 & 0.026 & -0.024 & 0.039 & -0.011 & 0.051\\
$\lambda = (5,10)$ & -0.564 & 0.574 & -0.817 & 0.819 & -0.003 & 0.030 & -0.027 & 0.044 & -0.014 & 0.063\\
$\lambda = (10,20)$ & -0.579 & 0.590 & -0.831 & 0.833 & -0.005 & 0.037 & -0.034 & 0.055 & -0.020 & 0.079\\
$\lambda = (30,60)$ & -0.620 & 0.632 & -0.851 & 0.853 & -0.007 & 0.053 & -0.044 & 0.074 & -0.035 & 0.123\\
$\lambda = (60,120)$ & -0.657 & 0.667 & -0.859 & 0.861 & -0.011 & 0.067 & -0.054 & 0.091 & 0.010 & 1.296\\
\emph{Panel C: Integrated beta}\\
$\xi^2 = 0$ & bias & rmse & bias & rmse & bias & rmse & bias & rmse & bias & rmse\\
$\lambda = (3,6)$ & -0.020 & 0.092 & -0.090 & 0.128 & -0.000 & 0.065 & -0.002 & 0.098 & -0.026 & 0.179\\
$\lambda = (5,10)$ & -0.034 & 0.100 & -0.159 & 0.219 & -0.001 & 0.076 & -0.002 & 0.112 & -0.030 & 0.214\\
$\lambda = (10,20)$ & -0.069 & 0.143 & -0.316 & 0.427 & -0.003 & 0.092 & -0.003 & 0.135 & -0.043 & 0.266\\
$\lambda = (30,60)$ & -0.212 & 0.316 & -0.693 & 0.926 & -0.001 & 0.129 & -0.005 & 0.178 & -0.069 & 0.385\\
$\lambda = (60,120)$ & -0.384 & 0.538 & -0.902 & 1.190 & -0.001 & 0.170 & -0.008 & 0.222 & 0.015 & 2.503\\
$\xi^2 = 0.001$\\
$\lambda = (3,6)$ & -0.188 & 0.277 & -0.586 & 0.777 & 0.000 & 0.067 & -0.004 & 0.099 & -0.026 & 0.177\\
$\lambda = (5,10)$ & -0.195 & 0.297 & -0.623 & 0.832 & -0.002 & 0.076 & -0.004 & 0.112 & -0.028 & 0.217\\
$\lambda = (10,20)$ & -0.226 & 0.325 & -0.716 & 0.956 & -0.003 & 0.093 & -0.007 & 0.136 & -0.039 & 0.263\\
$\lambda = (30,60)$ & -0.354 & 0.511 & -0.903 & 1.200 & -0.004 & 0.131 & -0.010 & 0.183 & -0.068 & 0.389\\
$\lambda = (60,120)$ & -0.495 & 0.680 & -1.005 & 1.330 & -0.006 & 0.169 & -0.014 & 0.215 & -0.066 & 0.829\\
$\xi^2 = 0.01$\\
$\lambda = (3,6)$ & -0.747 & 1.010 & -1.093 & 1.450 & 0.000 & 0.091 & -0.031 & 0.116 & -0.025 & 0.185\\
$\lambda = (5,10)$ & -0.758 & 1.033 & -1.096 & 1.442 & -0.002 & 0.114 & -0.035 & 0.134 & -0.032 & 0.220\\
$\lambda = (10,20)$ & -0.772 & 1.047 & -1.123 & 1.501 & -0.009 & 0.137 & -0.045 & 0.159 & -0.045 & 0.289\\
$\lambda = (30,60)$ & -0.839 & 1.149 & -1.154 & 1.530 & -0.006 & 0.181 & -0.059 & 0.215 & -0.076 & 0.380\\
$\lambda = (60,120)$ & -0.888 & 1.203 & -1.169 & 1.559 & -0.005 & 0.222 & -0.065 & 0.239 & -0.055 & 0.787\\
\hline
\end{tabular}\medskip
\end{small}
\begin{footnotesize}
\parbox{.95\textwidth}{\emph{Note}. This table shows the results of the simulation analysis.}
\end{footnotesize}
\end{center}
\end{table}

In Table \ref{table:simulation}, we report the results of the simulations. As the results for estimating the variance components of the $2 \times 2$ covariance matrix are as expected compared to prior work, we give focus here towards estimating the integrated covariance, correlation and beta. Also, because the refresh time sampled MRC estimators perform somewhat better than the previous-tick based MRC estimators, we only report the results for the MRC estimators based on refresh time sampling ($n = RT$).\footnote{All unreported results are available upon request.} In the three panels of the table, we therefore provide the bias and root mean squared error (rmse) of the various estimators in terms of estimating the integrated covariance, correlation and beta. We compare our results to a standard realised covariance sampled at either a 1-minute or 15-minute frequency.

Looking at the table, we see that the MRC estimators are very efficient across all scenarios of noise and non-synchronous trading considered here, matching or outperforming the standard realised covariance. The $HY[Y]^{(k,l)}$ estimator is less efficient, owing largely to a larger finite sample bias in this estimator. We will study the possibilities of making finite sample adjustments to $HY[Y]^{(k,l)}$, elsewhere.

Above, we conjectured without a proof that the MRC was robust to stale prices and retained its rate of convergence under non-synchronous trading. If this is to be true, and ignoring finite sample biases, we should then expect the rmse of the two MRC estimators to decrease at rate $n^{-1/4}$ and $n^{-1/5}$. As can be seen from the table, this is exactly what we find. For example, when estimating the integrated covariance and going from $\lambda = (60,120)$ to $\lambda = (3,6)$, i.e. an approximate twenty-fold increase in sample size, the rmse of $MRC[Y]_n$ decreases roughly by half and the rmse of $MRC[Y]_n^{\delta}$ by slightly less than a half, which is consistent with the rates given above.

All in all, the simulation results show that the estimators proposed in this paper are very good at estimating the integrated covariance, correlation and beta across a wide range of noise and liquidity scenarios, and that the asymptotic predictions given above are also reasonable guides to their finite sample behavior.

\section{Empirical Illustration}
To illustrate some empirical features of the pre-averaging theory developed above, we retrieved high-frequency data for a five-dimensional vector of assets from Wharton Research Data Services (WRDS). We picked four equities at random from the S\&P 500 constituents list as of July 1, 2009. We then added a 5th element, namely the S\&P 500 Depository Receipt (ticker symbol SPY), which is an exchange-traded fund that tracks the large-cap segment of the U.S. stock market. As such, it can be viewed as generating market-wide index returns. The four remaining stocks are the following (with ticker symbol and industry classification in parenthesis): Bristol-Myers Squibb (BMY, health care), Lockheed Martin (LMT, industrials), Oracle (ORCL, information technology) and Sara Lee (SLE, consumer staples), thus representing a broad category of industries. We use both trades and quotes data for the sample period that covers the whole of 2006, which results in 251 trading days.

\begin{table}[H]
\setlength{\tabcolsep}{0.50cm}
\begin{center}
\caption{Descriptive statistics and number of data before and after filtering}
\label{table:descriptive-statistics}
\medskip
\begin{small}
\begin{tabular}{lrrrrr}
\hline
Stock & BMY & LMT & ORCL & SLE & SPY\\ \hline
Exchange & N & N & Q & N & P\\[0.25cm]
\emph{Panel A: Transaction data}\\
Raw trades & 1085420 & 811607 & 8153413 & 533517 & 7685215\\
\hspace*{0.50cm} Corrected/Abnormal/Zeros &    111 &     61 &  13806 &     85 &   2584\\
\hspace*{0.50cm} Time aggregation & 231826 & 139181 & 6599286 &  78039 & 6000898\\ \cline{2-6}
\#Trades & 853483 & 672365 & 1540321 & 455393 & 1681733\\
Intensity &   3400 &   2679 &   6137 &   1814 &   6700\\
Noise ratio, $\gamma$ &  0.363 &  0.336 &  0.484 &  0.656 &  0.202\\
\\
\emph{Panel B: Quotation data}\\
Raw quotes & 5402607 & 3245315 & 23411495 & 3208830 & 17536447\\
\hspace*{0.50cm} Negative/Wide/Zeros &    643 &   3917 &   2623 &    604 &    256\\
\hspace*{0.50cm} Time aggregation & 2547054 & 1075914 & 20050224 & 979299 & 12851464\\ \cline{2-6}
\#Quotes & 2854910 & 2165484 & 3358648 & 2228927 & 4684727\\
Intensity &  11374 &   8627 &  13381 &   8880 &  18664\\
Avg. spread (in cents) &  1.273 &  2.389 &  1.017 &  1.215 &  1.575\\
Noise ratio, $\gamma$ &  0.205 &  0.219 &  0.203 &  0.310 &  0.109\\
\hline
\end{tabular}\medskip
\end{small}
\begin{footnotesize}
\parbox{.95\textwidth}{\emph{Note}. This table reports some descriptive statistics and liquidity measures for the selection of stocks included in our empirical application. We show the exchange from which data are extracted. The exchange code is: N = NYSE, Q = NASDAQ and P = Pacific. Raw trades/quotes is the total number of data available from these exchanges during the trading session, while \# trades/quotes is the total sample remaining after filtering the data. Intensity is the average number of data pr. day, while the noise ratio is defined in \citet*{oomen:06a}.}
\end{footnotesize}
\end{center}
\end{table}

Table \ref{table:descriptive-statistics} reports some descriptive statistics for our universe of stocks and sample period. As can be seen, these equities display varying degrees of liquidity with ORCL and SPY being the most liquid, while LMT and SLE are the least liquid. Also reported in the table is the univariate noise ratio statistic, $\gamma$, which is a noise-to-signal measure that describes the level of microstructure noise to integrated variance \citep*[see, e.g.,][for further details on the noise ratio]{oomen:06a}. Generally speaking, there is a tendency for more frequently traded companies to contain less microstructure noise, the notable exception being the transaction data for ORCL.

\subsection{Filtering procedures}
As a preliminary step, we subdued the sample data to some cleaning procedures. Pre-cleaning high-frequency data is necessary, because the raw data has many invalid observations (e.g., data with misplaced decimal points, or trades that are reported out-of-sequence). Our filter is roughly identical to that used by \citet*{barndorff-nielsen-hansen-lunde-shephard:08c} with some minor differences. Here, we briefly describe the filtering rules we employ.

\medskip

\noindent \textbf{Trades and quotes}: The following rules are applied to both trades and quotes data. a) We keep data from a single exchange: Pacific for SPY and primary exchange for the 4 remaining equities, see Table \ref{table:descriptive-statistics}, b) we delete data with time stamps outside the regular exchange opening hours from 9:30am to 4:00pm, c) we delete rows with a transaction price, bid or ask quote of zero, and d) we aggregate data with identical time stamp using volume-weighted average prices (using total transaction volume or quoted bid and ask volume, respectively).

\medskip

\noindent \textbf{Trades only}: We delete entries with a correction indicator $\neq$ 0 or with abnormal sales condition.

\medskip

\noindent \textbf{Quotes only}: We delete quotes with negative spreads and rows where the quoted spread exceeds 10 times the median spread for that day.

\medskip

\noindent Table \ref{table:descriptive-statistics} also reports how many observations that are lost by passing these filters through the data. It should be noted that the "Trades Only" and "Quotes Only" filters generally tend to reduce the sample by only a very small fraction.

\subsection{High-frequency covariance analysis}
Here, we inspect the outcome of applying the estimators introduced above, after which we look at transforms of the covariance matrix. As a comparison, we also compute the standard realised covariance from 15-min, 5-min, and 15-sec previous-tick data.

We implement both the $MRC \left[ Y \right]_{n}$ and $MRC \left[ Y \right]_{n}^{ \delta}$ estimators of section 2 and 3.4, respectively. Recall that $MRC \left[ Y \right]_{n}$ converges at rate $n^{-1 / 4}$, it needs to be corrected for bias, and as a result is not guaranteed to be positive semi-definite. We base $MRC \left[ Y \right]_{n}^{\delta}$ on $\delta = 0.1$, among many plausible choices, which results in a $n^{-1 / 5}$ rate of convergence, a small finite sample bias that we omit correcting and, hence, positive semi-definiteness by construction. Two sampling schemes are used, calendar time and refresh time sampling, which yields a total of four combinations. We use pre-averaging windows found as $k_{n}= \lfloor \theta n^{ \delta'} \rfloor$, where $\theta = 1$ and $\delta' = \left(0.5, 0.6 \right)$ for our two choices. The selection of $\theta$ follows the conservative rule discussed above. We set $n = 390$ for the calendar time-based estimators, which is 1-minute sampling, while $n$ is determined automatically by the data for the refresh time sampling scheme.

\begin{table}[H]
\setlength{\tabcolsep}{0.35cm}
\renewcommand{\arraystretch}{0.80}
\begin{center}
\caption{Average of high-frequency covariance matrix estimates}
\label{Table:avg-covar-mtx-estim}
\medskip
\begin{small}
\begin{tabular}{lrrrrrrrrrrrrr}
\hline
 & \multicolumn{5}{c}{$MRC[Y]_{n=CT(390)}$} &  & \multicolumn{5}{c}{$MRC[Y]_{n=CT(390)}^{\delta=0.1}$}\\
 & BMY & LMT & ORCL & SLE & SPY &  & BMY & LMT & ORCL & SLE & SPY\\
BMY &  1.465 &  0.203 &  0.259 &  0.154 &  0.233 &  &  1.396 &  0.195 &  0.259 &  0.152 &  0.231\\
LMT &  0.201 &  0.975 &  0.232 &  0.149 &  0.247 &  &  0.194 &  0.905 &  0.222 &  0.158 &  0.235\\
ORCL &  0.255 &  0.228 &  1.804 &  0.143 &  0.316 &  &  0.256 &  0.218 &  1.718 &  0.151 &  0.316\\
SLE &  0.144 &  0.149 &  0.136 &  0.955 &  0.162 &  &  0.144 &  0.156 &  0.143 &  0.911 &  0.167\\
SPY &  0.228 &  0.249 &  0.307 &  0.154 &  0.317 &  &  0.228 &  0.237 &  0.309 &  0.162 &  0.310\\
\\[-0.50cm]
 & \multicolumn{5}{c}{$MRC[Y]_{n=RT}$} &  & \multicolumn{5}{c}{$MRC[Y]_{n=RT}^{\delta=0.1}$}\\
 & BMY & LMT & ORCL & SLE & SPY &  & BMY & LMT & ORCL & SLE & SPY\\
BMY &  1.394 &  0.197 &  0.226 &  0.134 &  0.220 &  &  1.397 &  0.198 &  0.249 &  0.141 &  0.227\\
LMT &  0.193 &  0.955 &  0.224 &  0.134 &  0.242 &  &  0.201 &  0.924 &  0.222 &  0.147 &  0.241\\
ORCL &  0.196 &  0.198 &  1.756 &  0.128 &  0.297 &  &  0.228 &  0.220 &  1.726 &  0.139 &  0.310\\
SLE &  0.117 &  0.114 &  0.105 &  0.878 &  0.147 &  &  0.131 &  0.134 &  0.122 &  0.898 &  0.157\\
SPY &  0.200 &  0.237 &  0.243 &  0.120 &  0.310 &  &  0.216 &  0.249 &  0.285 &  0.138 &  0.311\\
\\[-0.50cm]
 & \multicolumn{5}{c}{$HY$} &  & \multicolumn{5}{c}{$HY[Y]_{n}^{(k,l)}$}\\
 & BMY & LMT & ORCL & SLE & SPY &  & BMY & LMT & ORCL & SLE & SPY\\
BMY &  1.160 &  0.093 &  0.097 &  0.103 &  0.111 &  &  1.417 &  0.192 &  0.255 &  0.144 &  0.227\\
LMT &  0.049 &  0.761 &  0.075 &  0.072 &  0.097 &  &  0.202 &  0.902 &  0.223 &  0.147 &  0.238\\
ORCL &  0.032 &  0.036 &  1.922 &  0.077 &  0.112 &  &  0.240 &  0.227 &  1.903 &  0.135 &  0.309\\
SLE &  0.029 &  0.026 &  0.013 &  1.053 &  0.095 &  &  0.134 &  0.137 &  0.127 &  0.900 &  0.159\\
SPY &  0.041 &  0.051 &  0.040 &  0.023 &  0.253 &  &  0.219 &  0.250 &  0.292 &  0.143 &  0.309\\
\\[-0.50cm]
 & \multicolumn{5}{c}{$Cov^{15s}$} &  & \multicolumn{5}{c}{$Cov^{avg(5m,15m)}$}\\
 & BMY & LMT & ORCL & SLE & SPY &  & BMY & LMT & ORCL & SLE & SPY\\
BMY &  1.823 &  0.088 &  0.118 &  0.091 &  0.120 &  &  1.499 &  0.192 &  0.288 &  0.149 &  0.221\\
LMT &  0.095 &  0.854 &  0.084 &  0.059 &  0.094 &  &  0.191 &  0.953 &  0.209 &  0.152 &  0.221\\
ORCL &  0.095 &  0.086 &  4.043 &  0.069 &  0.162 &  &  0.261 &  0.206 &  1.936 &  0.153 &  0.306\\
SLE &  0.056 &  0.049 &  0.043 &  1.947 &  0.072 &  &  0.146 &  0.148 &  0.142 &  1.008 &  0.161\\
SPY &  0.111 &  0.118 &  0.117 &  0.057 &  0.293 &  &  0.219 &  0.233 &  0.290 &  0.152 &  0.303\\
\hline
\end{tabular}\medskip
\end{small}
\begin{footnotesize}
\parbox{.95\textwidth}{\emph{Note}. This table reports average covariance matrix estimates. In all subpanels, the numbers in the upper diagonal (including diagonal elements) are based on transaction prices, while the lower diagonal is based on mid-quote data. $Cov^{avg(5m,15m)}$ is a simple time series average of the realised covariance computed from 5- and 15-minute returns.}
\end{footnotesize}
\end{center}
\end{table}

In Table \ref{Table:avg-covar-mtx-estim}, we report the sample covariance matrix estimates averaged across the 251 days. This table is constructed in the usual way, displaying the results based on transaction prices in the upper diagonal (including the main diagonal), while the strict lower diagonal elements are the corresponding results based on quotation data. Consistent with prior literature, we see from the table that the standard realised covariance is suffering from Epps effect, when sampling runs quickly. All estimated covariance terms lie in the positive region, but for $Cov^{15s}$ they are heavily compressed towards zero. This is less of a concern for $Cov^{avg(5m,15m)}$, which should tend to capture the average level of the covariance structure well, while not being seriously influenced by microstructure frictions and Epps effect. Turning next to the estimators proposed in this paper, we note that the time series average of both MRC versions and the noise-robust HY estimator are in line with that produced by $Cov^{avg(5m,15m)}$, showing that they appear free of any systematic bias. The \citet*{hayashi-yoshida:05a} estimator produces a strong downwards bias in the covariance estimates, when it is applied directly to noisy and irregular high-frequency data. This reaffirms previous empirical work \citep*[see, e.g.,][]{barndorff-nielsen-hansen-lunde-shephard:08c,griffin-oomen:06a,voev-lunde:07a}, so these results are not surprising or novel. The pre-averaged version $HY[Y]_{n}^{(k,l)}$, however, does a much better job and tends to agree with the average level of other noise-robust estimators.

Finally, we turn to the issue of positive semi-definiteness. As noted above, $MRC \left[ Y \right]_{n}$ and $HY[Y]_{n}^{(k,l)}$ are not guaranteed to possess this property. Nonetheless, they do not fail to be positive semi-definite on a single instance across our sample period. This is true for both the transaction and quotation data, and both combinations of the bias-corrected MRC estimator. Thus, while theoretically a concern, this problem does not appear to occur frequently in practice, although the conclusion might change for other data sets.

\subsection{Analysing realised beta}
We now focus on estimating $\beta^{(ji)}$ by $\hat{ \beta}_{n}^{(ji)}= MRC \left[ Y \right]_{n}^{i, j} / MRC \left[ Y \right]_{n}^{i, i}$, where we take $i = \text{SPY}$ and form regressions by using $j = \text{BMY, LMT, ORCL, SLE}$. This type of regression, where individual equity covariances with the market are regressed onto a market-wide realised variance measure, is important in financial economics, for example within the conditional CAPM \citep*[see, e.g.,][]{jagannathan-wang:96a,lettau-ludvigson:01a}, since only systematic risk should be rewarded with expected excess returns.

\begin{figure}[ht!]
\begin{center}
\caption{MRC-based beta.
\label{Figure:beta-mrc-vs-rc}}
\begin{tabular}{cc}
\footnotesize{Panel A: BMT vs. SPY} & \footnotesize{Panel B: LMT vs. SPY} \\
\includegraphics[height=6cm,width=0.45\textwidth]{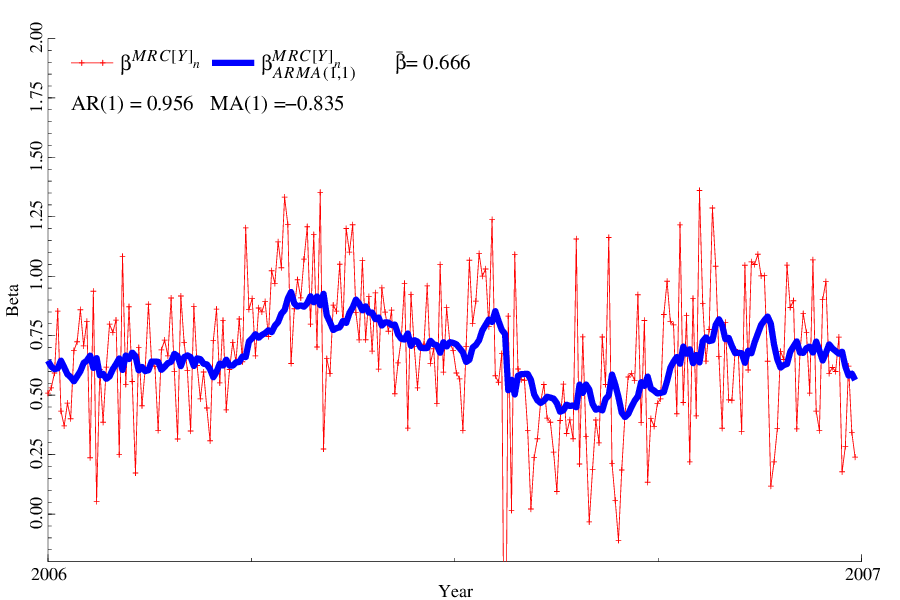} &
\includegraphics[height=6cm,width=0.45\textwidth]{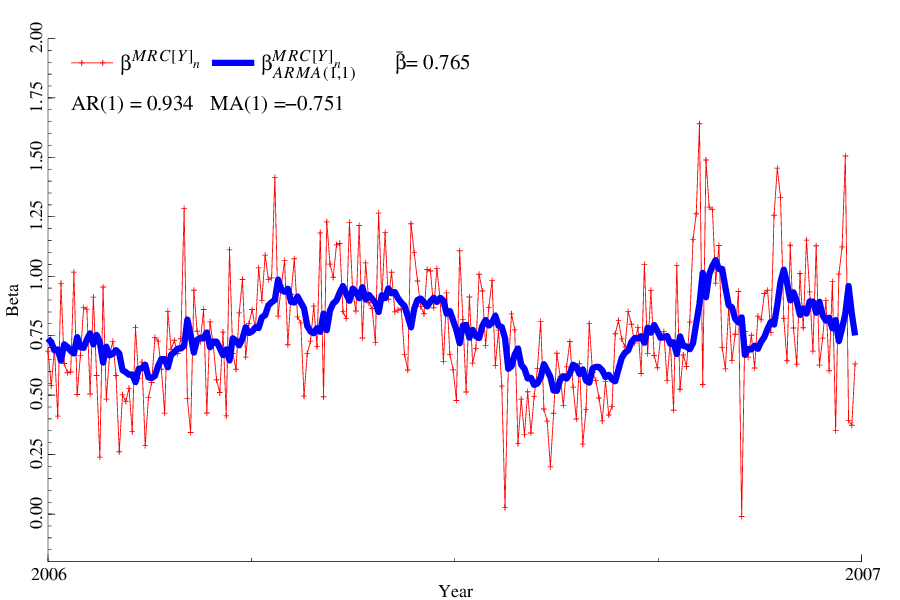} \\
\footnotesize{Panel C: ORCL vs. SPY} & \footnotesize{Panel D: SLE vs. SPY} \\
\includegraphics[height=6cm,width=0.45\textwidth]{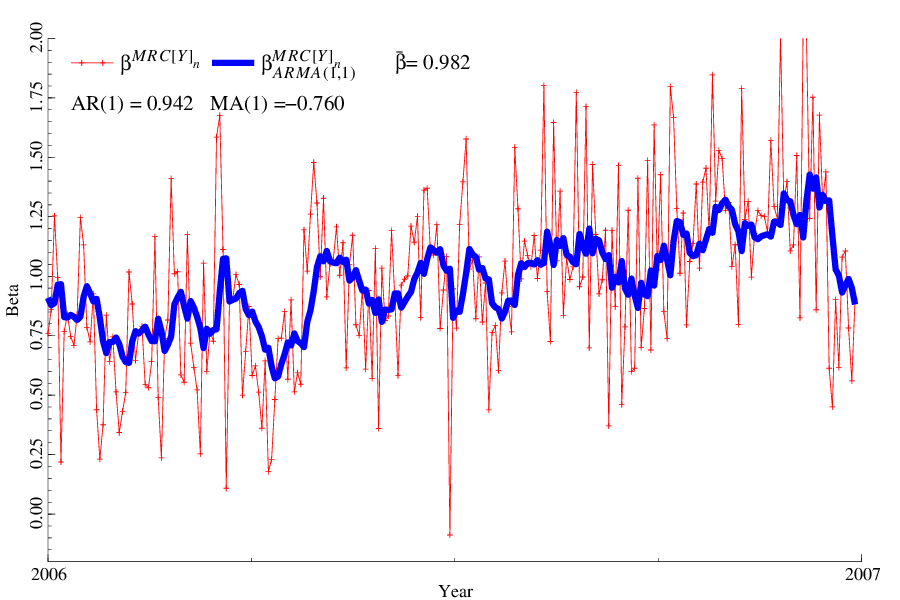} &
\includegraphics[height=6cm,width=0.45\textwidth]{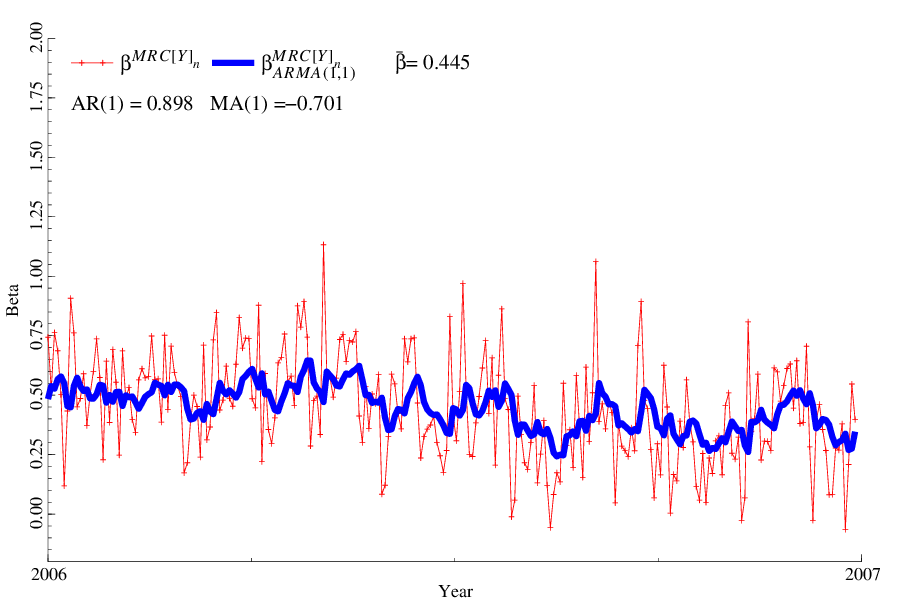} \\
\end{tabular}
\begin{footnotesize}
\parbox{0.94\textwidth}{\emph{Note}. $\beta^{MRC \left[ Y \right]_{n}}$ is the time series of daily $MRC \left[ Y \right]_{n}$-based beta estimates, using transaction prices and refresh time sampling for various asset combinations (in subpanels). $\beta_{ARMA(1,1)}^{MRC \left[ Y \right]_{n}}$ are fitted values from an ARMA(1,1) filter, with estimated autoregressive parameter AR(1) and moving average parameter MA(1). The sample mean MRC beta is reported as $\bar{ \beta}$.}
\end{footnotesize}
\end{center}
\end{figure}

In Figure \ref{Figure:beta-mrc-vs-rc}, we plot the MRC-based betas from transaction prices and refresh time sampling. The corresponding plots from the other estimators proposed in this paper are qualitatively similar. As in \citet*{barndorff-nielsen-hansen-lunde-shephard:08c}, we smooth the daily beta estimates by passing them through an ARMA(1,1) filter. The figure shows that beta is time-varying and predictable, and that it tends to fluctuate around its mean level. Evidently, the estimated processes exhibit substantial memory with autoregressive roots at 0.90 or higher  \citep*[see also][who study the persistence of quarterly realised beta estimated from daily asset returns]{andersen-bollerslev-diebold-wu:06a}.

\section{Concluding remarks}
In this paper, we present a simple solution, based on applying pre-averaging to financial high-frequency data, to the problem of how to estimate the multivariate ex-post integrated covariance matrix, possibly in the simultaneous presence of market microstructure noise and non-synchronous trading.

A modulated realised covariance (MRC) estimator is introduced. The MRC bears close resemblance to a standard realised covariance, being a sum of outer products of high-frequency returns, but it relies instead on pre-averaging to reduce the harmful impact of microstructure noise. We study the properties of this new estimator by showing its consistency and asymptotic mixed normality under mild conditions on the dynamics of the price process. As shown in the paper, the MRC can be configured to possess an optimal rate of convergence or to guarantee positive semi-definite covariance matrix estimates. In the presence of non-synchronous trading, we also outline how to modify the MRC by using an imputation scheme (for example the previous-tick rule or refresh time sampling) to match high-frequency prices in time. An MRC constructed on the back of such artificial returns will again be consistent for the integrated covariance.

Another novelty developed in this paper is a pre-averaged version of the \citet*{hayashi-yoshida:05a} estimator that can be implemented directly on the raw noisy \emph{and} non-synchronous observations, without any prior alignment of prices. We also show the consistency of this estimator and derive a rate for its variance, but otherwise we defer further theoretical analysis of finite sample improvements and asymptotic properties of the pre-averaged Hayashi-Yoshida estimator to future research \citep*[see][]{christensen-podolskij-vetter:10a}.

We demonstrate with a set of simulations that these newly proposed estimators can bring substantial efficiency gains with them compared to a standard realised covariance in the realistic setting with both microstructure noise and non-synchronous trading. Furthermore, an empirical illustration highlights their applicability to real high-frequency data. We therefore look forward to future applications of these estimators, including an investigation of their informational content about future volatility. Being able to produce good forecasts of future volatility is paramount in financial economics. Thus, as an example, it could be interesting to apply our estimators in the context of portfolio choice to calculate their economic value, akin to \citet*{fleming-kirby-ostdiek:01a,fleming-kirby-ostdiek:02a}, \citet*{bandi-russell:06a} and others. However, generally they should also be useful in many other areas, including asset- and option pricing or risk management.

\pagebreak

\appendix

\section{Proofs}

In the following, we assume that the processes $a$ and $\sigma$ are bounded. This is without loss of generality and can be justified by a standard localization procedure (see, e.g., \citet*{barndorff-nielsen-graversen-jacod-podolskij-shephard:06a}). Moreover, we denote constants by $C$, or $C_{p}$ if they depend on an additional parameter $p$. The main parts of the proofs are based upon \citet*{podolskij-vetter:09a} and \citet*{jacod-li-mykland-podolskij-vetter:09a}. \\[0.50cm]
\textit{Proof of Theorem \ref{consistency}:} Due to the triangular equality $[V, W]= \frac{1}{4} ([V + W, V + W] - [V - W, V - W])$, it suffices to prove the univariate case $d = 1$ (i.e. all processes are 1-dimensional). We use the decomposition
\begin{equation}
\label{decomp1} MRC \left[ Y \right]_{n} = \frac{1}{k_{n}} \sum_{l = 0}^{k_{n} - 1} MRC \left[ Y \right]_{n}^{l} - \frac{ \psi_{1}}{ \theta^{2} \psi_{2}} \hat{ \Psi}_{n},
\end{equation}
with
\begin{equation*}
MRC \left[ Y \right]_{n}^{l} = \frac{1}{ \theta \psi_{2} \sqrt{n}} \sum_{j = 0}^{[n / k_{n}] - 1} | \bar Y_{l + j k_{n}}^{n}|^2.
\end{equation*}
Notice that, for any $l = 0, \ldots, k_n - 1$, the summands in the definition of $MRC \left[ Y \right]_{n}^{l}$ are asymptotically uncorrelated. This type of estimators have been discussed in \citet*{podolskij-vetter:09a} and we can deduce by the methods presented therein (see the proof of Theorem 1) that
\begin{equation*}
MRC \left[ Y \right]_{n}^{l} \overset{p}{ \to} \int_0^1 \sigma _s^2 \text{d}s + \frac{ \psi_{1}}{ \psi_{2} \theta^{2}} \Psi,
\end{equation*}
where the convergence holds uniformly in $l$ (due to the boundedness of the processes $a$ and $\sigma$). On the other hand
we have that
\begin{equation*}
\hat{ \Psi}_{n} = \frac{1}{2n} \sum_{i = 1}^{n} | \Delta_{i}^{n} Y |^2 \overset{p}{ \to} \Psi.
\end{equation*}
This implies the convergence
\begin{equation*}
MRC \left[ Y \right]_{n} \overset{p}{ \to} \int_{0}^{1} \sigma_{s}^{2} \text{d}s,
\end{equation*}
which completes the proof. \qed \\[0.50cm]
\textit{Proof of Theorem \ref{clt2}:} Here we apply the "big blocks \& small blocks"-technique used in
\citet*{jacod-li-mykland-podolskij-vetter:09a}. The role of the small blocks (which will be asymptotically
negligible) is to ensure the asymptotic independence of the big blocks. More precisely, we choose an integer $p$, set
\begin{equation*}
a_i(p) = i(p+1)k_n \qquad \mbox{ and } \qquad b_i(p) = i(p+1)k_n + pk_n~,
\end{equation*}
and let
$A_i(p)$ denote the set of integers $l$ satisfying $a_i(p) \leq l < b_i(p)$ and $B_i(p)$ the
integers satisfying $b_i(p) \leq l < a_{i+1}(p)$. We further define $j_n(p)$ to be the largest integer $j$ such that $b_j(p) \leq n$ holds, which gives the identity
\begin{equation} \label{jnp}
j_n(p) = \Big\lfloor  \frac{n}{k_n (p+1)} \Big\rfloor -1.
\end{equation}
Moreover, we use the notation $i_n(p) = (j_n(p)+1)(p+1)k_n$.

Next, we introduce the random variable
\begin{equation} \label{approx}
\bar{Y}^{n}_{i, m} = \sum_{j = 1}^{k_n-1} g\Big( \frac{j}{k_n} \Big) (\sigma_{\frac mn} \Delta_{i+j}^n W + \Delta_{i+j}^n \epsilon)~,
\end{equation}
which can be interpreted as an approximation of some $\bar{Y}_{j}^{n}$. Moreover, we set
\begin{equation}
\Upsilon_{j,m}^n = \bar{Y}^{n}_{i, m} \left( \bar{Y}^{n}_{i, m} \right)' - \mathbb{E} \left[ \bar{Y}^{n}_{i, m} \left( \bar{Y}^{n}_{i, m} \right)'
|\mathcal F_{\frac{m}{n}} \right],
\end{equation}
and define
\begin{equation*}
\tilde{Y}_{j}^{n} =
\begin{cases}
\Upsilon_{j,a_i(p)}^n, &j \in A_i(p) \\
\Upsilon_{j,b_i(p)}^n, &j \in B_i(p) \\
\Upsilon_{j,i_n(p)}^n, &j \geq i_n(p)
\end{cases}
\end{equation*}
as well as
\begin{equation*}
\zeta(p,1)_j^n = \sum_{l=a_j(p)}^{b_j(p) -1} \tilde{Y}_l^{n}, \qquad \zeta(p,2)_j^n = \sum_{l=b_j(p)}^{a_{j+1}(p)-1} \tilde{Y}_l^{n}.
\end{equation*}
Notice that $\zeta(p,1)_j^n$ contains $pk_n$ summands ("big block") whereas $\zeta(p,2)_j^n$ contains $k_n$ summands ("small block").  Finally, we set
\begin{equation*}
\begin{array}{l}
M(p)^n = n^{-\frac 12} \sum_{j=0}^{j_n(p)} \zeta(p,1)_j^n, \qquad
N(p)^n = n^{-\frac 12} \sum_{j=0}^{j_n(p)} \zeta(p,2)_j^n, \qquad
C(p)^n = n^{-\frac 12} \sum_{j=i_n(p)}^{n} \tilde{Y}_j^{n}
\end{array}
\end{equation*}
and note that
\begin{equation}
\label{marting}
\mathbb{E} \left[ \zeta(p,1)_j^n | \mathcal F_{\frac{a_j(p)}{n}} \right] = 0 = \mathbb{E} \left[ \zeta(p,2)_j^n | \mathcal F_{\frac{b_j(p)}{n}} \right]
\end{equation}
by construction.

Now, by the same approximations as presented in \citet*{jacod-li-mykland-podolskij-vetter:09a} (see the identity (5.14), Lemma 5.5 and Lemma 5.6 therein) we get that
\begin{equation}
\label{ident1}
n^{1/4} \left( MRC \left[ Y \right]_{n} - \int_0^1 \Sigma_s \text{d}s \right) = \frac{n^{\frac{1}{4}}}{\theta \psi_{2}}(M(p)^n + N(p)^n +
C(p)^n) + R(p)_n
\end{equation}
where the last three summands satisfy the convergence
\begin{equation}
\label{ident2} \lim_{p \rightarrow \infty} \limsup_{n \rightarrow \infty} P( ||n^{ \frac{1}{4}} N(p)^n || +
||n^{ \frac{1}{4}} C(p)^n|| + ||R(p)^n ||> \delta ) = 0
\end{equation}
for any $\delta >0$. Notice that the term $R(p)_n$ stands for the approximation error in Eq. \eqref{approx}.

In the next lemma we show the stable convergence $\frac{n^{\frac{1}{4}}}{\theta \psi_{2}} M(p)^n \overset{d_{s}}{ \to} U(p)$ (for any fixed $p$). On
the other hand, we will see that, as $p\rightarrow \infty $, $U(p) \overset{p}{ \to} U$, where $U$ is the limiting variable defined in Theorem \ref{clt2}. By combining this with Eqs. (\ref{ident1}-\ref{ident2}) we obtain the assertion of Theorem \ref{clt2}.

\begin{lemma} \label{lemstab}
If the assumptions of Theorem \ref{clt2} are satisfied we obtain (for any fixed $p$)
\begin{equation*}
\frac{n^{ \frac{1}{4}}}{ \theta \psi_{2}} M(p)^n \overset{d_{s}}{ \to} U(p) = \sum_{j', k' = 1}^{d} \int_0^1 {\gamma_s^{jk,j'k'}(p) \text{\upshape{d}}B_s^{j'k'}},
\end{equation*}
and
\begin{eqnarray*}
&&\sum_{j,m=1}^{d}\gamma_s^{kl,jm} (p) \gamma_s^{k'l',jm} (p) =A_s^{kl,k'l'}= \frac{2}{\psi _2^2 }\left(  \frac{ \theta p}{p + 1}  \Lambda_s^{kl,k'l'}
\int_0^1 \Big(1 - \frac{u}{p} \Big)\phi_2^2(u) \text{\upshape{d}}u  \right. \\[1.5 ex]
&&\left. +
\frac{p}{ \theta(p+1)} \Theta_s^{kl,k'l'} \int_0^1 \Big(1 - \frac{u}{p} \Big)\phi_1(u) \phi_2(u) \text{\upshape{d}}u +
\frac{p}{ \theta^{3}(p + 1)}\Upsilon^{kl,k'l'} \int_0^1 \Big(1 - \frac{u}{p} \Big)\phi_1^2(u) \text{\upshape{d}}u \right),
\end{eqnarray*}
where the processes $\Lambda_s$, $\Theta_s$ and $\Upsilon$ are given in Theorem \ref{clt2}.
\end{lemma}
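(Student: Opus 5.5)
The plan is to apply the standard martingale-difference-array criterion for stable convergence (Jacod's theorem, as used in \citet*{jacod-li-mykland-podolskij-vetter:09a}) to the triangular array
\[
\xi_j^n = \frac{n^{1/4}}{\theta\psi_2} n^{-1/2}\,\zeta(p,1)_j^n, \qquad j=0,\ldots,j_n(p),
\]
adapted to the filtration $\mathcal G_j^n=\mathcal F_{a_{j+1}(p)/n}$. By \eqref{marting}, each $\xi_j^n$ is $\mathcal G_{j}^n$-measurable with $\mathbb E[\xi_j^n\mid\mathcal G_{j-1}^n]=0$, since $a_j(p)=b_{j-1}(p)+k_n$ and the big block only involves increments after $a_j(p)/n$. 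It then suffices to verify four conditions.

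First, the conditional covariance condition: $\sum_j \mathbb E[\xi_j^n(\xi_j^n)'\mid\mathcal G_{j-1}^n]\overset{p}{\to}\int_0^1 A_s\,\text{d}s$. Second, the Lyapunov condition: $\sum_j\mathbb E[\|\xi_j^n\|^4\mid\mathcal G_{j-1}^n]\overset{p}{\to}0$. Third, orthogonality to $W$: $\sum_j\mathbb E[\xi_j^n\Delta W(j)\mid\mathcal G_{j-1}^n]\overset{p}{\to}0$, where $\Delta W(j)$ is the increment of $W$ over block $j$. Fourth, orthogonality to every bounded martingale $N$ orthogonal to $W$: $\sum_j\mathbb E[\xi_j^n\Delta N(j)\mid\mathcal G_{j-1}^n]\overset{p}{\to}0$.

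For the Lyapunov condition, note that $\bar Y^n_{i,m}=O_p(n^{-1/4})$ with moments up to order $8$ controlled by the assumption $\mathbb E|\epsilon^j|^8<\infty$. Each big block is then a sum of $pk_n$ terms of size $n^{-1/2}$, so $\xi_j^n=O(n^{1/4}n^{-1/2}pk_n n^{-1/2})=O_p(n^{-1/4})$. The sum of fourth moments is therefore $O(j_n(p)\,n^{-1})=O(n^{-1/2})\to0$.

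For the third condition, the key fact is that, conditionally on $\mathcal F_{a_j(p)/n}$, $\sigma$ is frozen at $a_j(p)/n$. Hence $\zeta(p,1)_j^n$ is an even function of the pair (Brownian increments, noise), because it is quadratic minus its mean. Since $\epsilon$ is independent, the product with the odd functional $\Delta W(j)$ has zero conditional expectation. The fourth condition follows the same route as in \citet*{jacod-li-mykland-podolskij-vetter:09a}: the Brownian part is handled by martingale representation (a quadratic Wiener functional is orthogonal to $N$), and the noise part by the product structure of $P=P^0\otimes P^1$.

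The main work, and the expected obstacle, is the covariance computation. Fix a block and freeze $\sigma=\sigma_{a_j/n}$. I would compute, for $l,l'\in A_j(p)$, the conditional covariance
\[
\operatorname{cov}\!\left(\bar Y^{n,k}_{l}\bar Y^{n,l}_{l},\ \bar Y^{n,k'}_{l'}\bar Y^{n,l'}_{l'}\right).
\]
Since $\bar Y$ is a sum of a Gaussian part and an independent noise part, the computation proceeds in three steps.

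In the first step, I use Isserlis/Wick-type identities for Gaussian fourth moments, together with the noise fourth moments. The non-Gaussian cumulant of the noise contributes only $O(k_n^{-3})$ per pair and is negligible.

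In the second step, I compute the cross-covariances of the pre-averaged components. The Brownian part gives
\[
\mathbb E\!\left[\bar W_l\bar W_{l'}\right]\approx \frac{k_n}{n}\,\Sigma\,\phi_2\!\left(\frac{|l-l'|}{k_n}\right),
\]
and, after summation by parts in $g$, the noise part gives
\[
\mathbb E\!\left[\bar\epsilon_l\bar\epsilon_{l'}'\right]\approx\frac{1}{k_n}\,\Psi\,\phi_1\!\left(\frac{|l-l'|}{k_n}\right).
\]

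In the third step, I assemble these into products. This yields the three terms
\[
\Lambda\,\phi_2^2,\qquad \Theta\,\phi_1\phi_2,\qquad \Upsilon\,\phi_1^2,
\]
with the scalings $k_n^2/n^2$, $1/n$, and $k_n^{-2}$ respectively.

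It remains to sum over $l,l'$ in a block of length $pk_n$. Writing $|l-l'|=uk_n$, the number of such pairs is about $2pk_n\cdot k_n(1-u/p)\,\text{d}u$, which produces the factors $\int_0^1(1-u/p)\,\phi(u)\,\text{d}u$; the support is $[0,1]$ because $\phi_i$ vanishes outside $[0,1]$. Summing over the $j_n(p)\approx n/((p+1)k_n)$ blocks turns into a Riemann sum in $s$, which yields the prefactor $p/(p+1)$. Finally, I multiply by $n^{1/2}n^{-1}/(\theta\psi_2)^2$ and use $k_n=\theta\sqrt n$ to recover the powers $\theta$, $\theta^{-1}$, $\theta^{-3}$. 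Getting the Wick index bookkeeping right, so that $\Theta$ has exactly its four terms, and controlling the Riemann approximation uniformly via the càdlàg property of $\sigma$ are the delicate points.
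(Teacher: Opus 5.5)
Your proposal is correct and is essentially the paper's proof. The shared elements are:
\begin{itemize}
\item the same four conditions of Jacod--Shiryaev Theorem IX.7.28;
\item the same evenness argument for orthogonality to $W$;
\item an appeal to Jacod et al.\ for orthogonality to $N$ (the paper also cites them for the Lyapunov condition, which you bound directly instead);
\item the same conditional covariance computation: the paper splits $\zeta(p,1)_j^n$ into Brownian, noise and cross parts $v(p,1)_j^n, v(p,2)_j^n, v(p,3)_j^n$, which reproduces your Wick bookkeeping with block factors $2pk_n^4/n^2$, $2p$ and $2pk_n^2/n$.
\end{itemize}

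One small refinement concerns orthogonality to $W$. It is cleaner to argue term by term: the pure Brownian part is even in $W$, the pure noise part is independent of $\Delta W$, and the cross part is linear in the mean-zero noise. A joint sign flip of $(W,\epsilon)$ would implicitly require a symmetric noise law, which is not assumed.
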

Notice that $\sum_{j,m=1}^{d}\gamma_s^{kl,jm} (p) \gamma_s^{k'l',jm} (p) \overset{p}{ \to} \sum_{j,m = 1}^{d}\gamma_s^{kl,jm}  \gamma_s^{k'l',jm} $ ($1\leq k,k',l,l'\leq d$), where $\gamma _s$ is defined
in Theorem \ref{clt2}. From this we deduce the convergence $U(p) \overset{p}{ \to} U$. \\ \\
\textit{Proof of Lemma \ref{lemstab}:} Due to Theorem IX 7.28 in \citet*{jacod-shiryaev:03a} the following conditions need to be shown (for all $1\leq k,k',l,l'\leq d$)
\begin{eqnarray}
\label{limit}
&& \frac{ n^{-1 / 2}}{ \theta^{2} \psi_{2}^{2}} \sum_{j = 0}^{j_n(p)} \mathbb{E} \left[ \zeta(p,1)_j^{n,kl} \zeta(p,1)_j^{n,k'l'}|\mathcal F_{\frac{a_j(p)}{n}} \right] \overset{p}{ \to} \int_0^1 A_u^{kl,k'l'} \text{d}u, \\[0.25cm]
\label{small} && n^{-1} \sum_{j=0}^{j_n(p)} \mathbb{E} \left[|| \zeta(p,1)_j^n||^4| \mathcal F_{\frac{a_j(p)}{n}} \right] \overset{p}{ \to} 0, \\[0.25cm]
\label{orth1} && n^{-1 / 4} \sum_{j=0}^{j_n(p)} \mathbb{E} \left[ \zeta(p,1)_j^{n,kl} \Delta W(p)_j^{n,k'} | \mathcal F_{ \frac{a_j(p)}{n}} \right] \overset{p}{ \to} 0, \\[0.25cm]
\label{orth} && n^{-1 / 4} \sum_{j=0}^{j_n(p)} \mathbb{E} \left[ \zeta(p,1)_j^{n,kl} \Delta N(p)_j^n |\mathcal F_{\frac{a_j(p)}{n}} \right] \overset{p}{ \to} 0,
\end{eqnarray}
where $\Delta V(p)_j^n = V_{n / b_j(p)} - V_{n / a_j(p)}$ for any process $V$ and Eq. \eqref{orth} holding for any 1-dimensional bounded martingale $N$ being orthogonal to $W$. For proving Eqs. \eqref{small} and \eqref{orth}, it is no restriction to assume that $d = 1$. Then these conditions
are already shown in \citet*{jacod-li-mykland-podolskij-vetter:09a} (Lemma 5.7). On the other hand, the functional $\zeta(p,1)_j^{n}$ is even in $W$. Since $W$ and $\epsilon$ are independent, we readily deduce that
\begin{equation*}
\mathbb{E} \left[ \zeta(p,1)_j^{n,kl} \Delta W(p)_j^{n,k'} \mid \mathcal F_{\frac{a_j(p)}{n}} \right] = 0,
\end{equation*}
which implies the condition in Eq. \eqref{orth1}. Hence, we are left to proving Eq. \eqref{limit}.

First, notice the identity
\begin{equation*}
\bar V_i^n  = \sum_{j = 1}^{k_{n}} g \left( \frac{j}{k_{n}} \right) \Delta_{i + j}^{n} V =
- \sum_{j = 0}^{k_n - 1} \left( g \left( \frac{j + 1}{k_{n}} \right) - g \left( \frac{j}{k_{n}} \right) \right) V_{ \frac{i + j}{n}}.
\end{equation*}
The second equality is useful for the computation of the moments of $\bar \epsilon_i^n$. By the smoothness assumption on the function
$g$ and the above identity we obtain the approximations ($1\leq k,l\leq d$)
\begin{equation} \label{expectation}
\mathbb{E} [ \overline{W}^{n,k}_{j} \overline{W}^{n,l}_{j'}]=\delta_{kl} \frac{k_{n}}{n} \psi_2 \bigg( \frac{|j-j'|}{k_{n}} \bigg) + O(n^{-1}), \quad
\mathbb{E} [ \overline{\epsilon }^{n,k}_{j} \overline{\epsilon }^{n,l}_{j'}]= \frac{ \Psi^{kl}}{k_{n}} \psi_1 \bigg( \frac{|j-j'|}{k_{n}} \bigg) + O(n^{-1})
\end{equation}
for $|j-j'|<k_n$, whereas the above expectations vanish when $|j-j'|\geq k_n$ (here $\delta_{kl}$ denotes the Kronecker symbol).
Next, we introduce the decomposition
\begin{equation*}
\zeta(p,1)_j^n=v(p,1)_j^n + v(p,2)_j^n +v(p,3)_j^n~,
\end{equation*}
where the terms $v(p,1)_j^n$, $v(p,2)_j^n$ and $v(p,3)_j^n$ are given by
\begin{eqnarray*}
v(p,1)_j^n &=& \sum_{l=a_j(p)}^{b_j(p) -1} \sigma_{ \frac{a_j(p)}{n}} \overline{W}^{n}_l
\Big( \sigma_{ \frac{a_j(p)}{n}} \overline{W}^{n}_l \Big)' -
\mathbb{E}  \left[\sigma_{ \frac{a_j(p)}{n}} \overline{W}^{n}_l
\Big(\sigma_{ \frac{a_j(p)}{n}} \overline{W}^{n}_l \Big)'|\mathcal F_{ \frac{a_j(p)}{n}} \right ], \\[0.25cm]
v(p,2)_j^n &=& \sum_{l=a_j(p)}^{b_j(p) -1} \overline{\epsilon}^{n}_l
\Big(\overline{\epsilon}^{n}_l \Big)' -
\mathbb{E}  \left[\overline{\epsilon}^{n}_l
\Big(\overline{\epsilon}^{n}_l \Big)' \right ], \\[0.25cm]
v(p,3)_j^n &=& \sum_{l=a_j(p)}^{b_j(p) -1} \sigma_{ \frac{a_j(p)}{n}}\overline{W}^{n}_l \Big(\overline{\epsilon}^{n}_l \Big)'+
\overline{\epsilon}^{n}_l \Big(\sigma_{ \frac{a_j(p)}{n}} \overline{W}^{n}_l \Big)'.
\end{eqnarray*}
By a straightforward calculation (and Eq. \eqref{expectation}) we obtain for all $1\leq k,l,k'l'\leq d$
\begin{eqnarray*}
\mathbb{E} [v(p,1)_j^{n,kl} v(p,1)_j^{n,k'l'}|\mathcal F_{ \frac{a_j(p)}{n}}] &=& \frac{2pk_n^4}{n^2}
\Lambda_{ \frac{a_j(p)}{n}}^{kl,k'l'} \int_0^1 \Big(1 - \frac{u}{p} \Big)\phi_2^2(u) \text{d}u + o_p(1), \\[0.25cm]
\mathbb{E} [v(p,2)_j^{n,kl} v(p,2)_j^{n,k'l'}|\mathcal F_{ \frac{a_j(p)}{n}}] &=&
2p \Upsilon^{kl,k'l'}  \int_0^1 \Big(1 - \frac{u}{p} \Big)\phi_1^2(u) \text{d}u + o_p(1), \\[0.25cm]
\mathbb{E} [v(p,3)_j^{n,kl} v(p,3)_j^{n,k'l'}|\mathcal F_{ \frac{a_j(p)}{n}}] &=&
\frac{2pk_n^2}{n} \Theta_{ \frac{a_j(p)}{n}}^{kl,k'l'} \int_0^1 \Big(1 - \frac{u}{p} \Big)\phi_1(u) \phi_2(u) \text{d}u + o_p(1),
\end{eqnarray*}
where the approximation holds uniformly in $j$. Now recall that $j_n(p) = \Big\lfloor  \frac{n}{k_n (p+1)} \Big\rfloor -1$.
Consequently, by Riemann integrability we deduce that
\begin{equation*}
\frac{n^{-\frac{1}{2}}}{\theta^2 \psi_2^2} \sum_{j=0}^{j_n(p)} \mathbb{E} [\zeta(p,1)_j^{n,kl} \zeta(p,1)_j^{n,k'l'} | \mathcal F_{ \frac{a_j(p)}{n}}]
\overset{p}{ \to} \int_0^1 A_u^{kl,k'l'} \text{d}u,
\end{equation*}
which completes the proof of Lemma \ref{lemstab}. \qed \\[0.50cm]

\textit{Proof of Theorem \ref{stochnon} and \ref{cltnon}:} Recall that $\frac{k_n}{n^{1/2 + \delta}} = \theta + o(n^{-1 / 4 + \delta / 2})$. A
straightforward calculation shows that
\begin{equation*}
\mathbb{E} \left[ MRC \left[ \, \epsilon \, \right]_{n}^{ \delta} \right] = \frac{ \psi_1}{ \theta^2 \psi_2 n^{2 \delta}} \Psi + o(n^{-1 / 4 + \delta / 2}).
\end{equation*}
By similar methods as presented in the proof of Theorem \ref{consistency} we deduce that
\begin{equation*}
MRC \left[ Y \right]_{n}^{ \delta} - \left( \int_{0}^{1} \Sigma_{s} \text{d}s + \frac{ \psi_{1}}{ \theta^{2} \psi_{2} n^{2 \delta}} \Psi \right) \overset{p}{ \to} 0.
\end{equation*}
Hence, we obtain the convergence in probability
\begin{equation*}
MRC \left[ Y \right]_{n}^{ \delta} \overset{p}{ \to} \int_{0}^{1} \Sigma_{s} \text{d}s,
\end{equation*}
which implies the assertion of Theorem \ref{stochnon}. Following the same scheme as demonstrated in the proof of Theorem \ref{clt2} we get
\begin{equation*}
n^{1/4 -\delta / 2} \left( MRC \left[ Y \right]_{n}^{ \delta} - \left( \int_{0}^{1} \Sigma_{s} \text{d}s + \frac{ \psi_{1}}{ \theta^{2} \psi_{2} n^{2 \delta}} \Psi \right) \right) \overset{d_{s}}{ \to} MN \left( 0, \frac{2 \Phi_{22} \theta}{ \psi_{2}^{2}} \int_{0}^{1} \Lambda_{s} \text{d}s \right),
\end{equation*}
since $\frac{ \psi_1}{ \theta^2 \psi_2 n^{2 \delta}} \Psi$ is an appropriate centering for the noise term in this case. Now
\begin{equation*}
\frac{ \psi_{1} n^{1 / 4 - \delta / 2}}{ \theta^{2} \psi_{2} n^{2 \delta}} \Psi  \overset{p}{ \to} 0
\end{equation*}
for $\delta >1/10$, whereas
\begin{equation*}
\frac{ \psi_{1} n^{1 / 4 - \delta / 2}}{ \theta^{2} \psi_{2} n^{2 \delta}} \Psi   = \frac{ \psi_{1}}{ \theta^{2} \psi_{2}} \Psi
\end{equation*}
for $\delta = 1 / 10$. Hence, Theorem \ref{cltnon} follows. \qed \\[0.50cm]
\textit{Proof of Theorem \ref{plimHYn}:} First, we start with the decomposition
\begin{eqnarray*}
HY[Y]_{n}^{(k,l)} &=& \frac{1}{ \left( \psi_{HY} k_{n} \right)^{2}}
\Big( \sum_{i,j} \bar{X^{k}}_{i}^{n} \bar{X^{l}}_{j}^{n} \mathbb{I}_{ \{ (t_{i}^{(k)}, t_{i+k_n}^{(k)}] \cap (t_{j}^{(l)}, t_{j+k_n}^{(l)}] \neq \emptyset \}} \nonumber \\[1.5 ex]
&+&\sum_{i,j}
(\bar{X^{k}}_i^n \bar{\epsilon^{l}}_j^n + \bar{\epsilon^{k}}_i^n \bar{X^{l}}_j^n)
\mathbb{I}_{ \{ (t_{i}^{(k)}, t_{i+k_n}^{(k)}] \cap (t_{j}^{(l)}, t_{j+k_n}^{(l)}] \neq \emptyset \}} \\[1.5 ex]
&+&\sum_{i,j}
\bar{\epsilon^{k}}_i^n \bar{\epsilon^{l}}_j^n
\mathbb{I}_{ \{ (t_{i}^{(k)}, t_{i+k_n}^{(k)}] \cap (t_{j}^{(l)}, t_{j+k_n}^{(l)}] \neq \emptyset \}} \Big) =: HY[Y]_{n}^1+ HY[Y]_{n}^2+ HY[Y]_{n}^3.
\end{eqnarray*}
As $X$ and $\epsilon$ are independent, it follows that $\mathbb{E}\left( HY[Y]_{n}^{2} \right) = 0$, and a simple computation shows that
\begin{equation*}
\text{var}(HY[Y]_{n}^2) \to 0.
\end{equation*}
Thus, $HY[Y]_{n}^2 \overset{p}{ \to} 0$. Next, we consider the term $HY[Y]_{n}^3$. This expression can be further decomposed as
\begin{eqnarray*}
HY[Y]_{n}^3 = \frac{1}{ \left( \psi_{HY} k_{n} \right)^{2}} \Big( \sum_{t_i\in J_{k,l}} a_{i}^n(k,l) \epsilon_{t_i}^k \epsilon_{t_i}^l + \sum_{t_i\in J_{k,l}^c} b_{i}^n(k,l) \epsilon_{t_i}^k \epsilon_{t_i}^l + \sum_{ (i,j)\in F_{k,l}} c_{ij}^n(k,l)
\epsilon_{t_i^{(k)}}^k \epsilon_{t_j^{(l)}}^l 1_{\{t_i^{(k)} \not= t_j^{(l)}\}}
\Big), \nonumber
\end{eqnarray*}
for certain numbers $a_{i}^n(k,l)$, $b_{i}^n(k,l)$ and $c_{ij}^n(k,l)$. Here $J_{k,l}$ denotes the set of common
points of $(t_i^{(k)})_{k_n\leq i\leq n_k-k_n}$ and $(t_i^{(l)})_{k_n\leq i\leq n_l-k_n}$, and
$J_{k,l}^c$ denotes the set of all common
points of $(t_i^{(k)})_{1\leq i\leq n_k}$ and $(t_i^{(l)})_{1\leq i\leq n_l}$ excluded $J_{k,l}$. The set $F_{k,l}$
is given by $$F_{k,l} =\{(i,j)|~\exists r,s ~\mbox{with}~r\leq i\leq r+k_n, s\leq j\leq s+k_n,
(t_{r}^{(k)}, t_{r+k_n}^{(k)}] \cap (t_{s}^{(l)}, t_{s+k_n}^{(l)}] \neq \emptyset \}.$$
Since $\bar{\epsilon^{k}}_i^n = -\sum_{j=1}^{k_n} \Big(g\Big(\frac{j}{k_n}\Big) - g\Big(\frac{j-1}{k_n}\Big) \Big)
\epsilon^{k}_{t_{i+j}^{(k)}}$ and $g$ is piecewise differentiable it holds that
\begin{equation*}
|b_{i}^n(k,l)| + |c_{ij}^n(k,l)| \leq C.
\end{equation*}
A straightforward computation shows that
\begin{equation*}
a_{i}^n(k,l) = \left( \sum_{j=1}^{k_n} g\Big(\frac{j}{k_n}\Big) - g\Big(\frac{j-1}{k_n}\Big) \right)^2= (g(1) - g(0))^2=0~,
\end{equation*}
because $g(0)=g(1)=0$. Thus, the first summand in the decomposition of $HY[Y]_{n}^3$ disappears, which is absolutely crucial for the proof. On the other hand, $\sharp J_{k, l}^{c} \leq C k_{n}$ which means that
\begin{equation*}
\frac{1}{ \left( \psi_{HY} k_{n} \right)^{2}} \sum_{t_i\in J_{k,l}^c} b_{i}^n(k,l) \epsilon_{t_i}^k \epsilon_{t_i}^l \overset{p}{ \to} 0.
\end{equation*}
Finally, note that the summands $\epsilon_{t_i^{(k)}}^k \epsilon_{t_j^{(l)}}^l \mathbb{I}_{\{t_i^{(k)} \not= t_j^{(l)}\}}$ have expectation $0$ and are mutually uncorrelated. Since $\sharp F_{k, l}\leq Cn k_n$ this implies that
\begin{equation*}
\frac{1}{ \left( \psi_{HY} k_{n} \right)^{2}} \sum_{(i,j)\in F_{k,l}} c_{ij}^n(k,l)
\epsilon_{t_i^{(k)}}^k \epsilon_{t_j^{(l)}}^l \mathbb{I}_{\{t_i^{(k)} \not= t_j^{(l)}\}} \overset{p}{ \to} 0.
\end{equation*}
Hence,
\begin{equation*}
HY[Y]_{n}^3 \overset{p}{ \to} 0.
\end{equation*}
Now we consider the term $HY[Y]_{n}^1$. We decompose
\begin{eqnarray}
HY[Y]_{n}^1 &=& \frac{1}{ \left( \psi_{HY} k_{n} \right)^{2}}
\Big( \sum_{(i,j)\in I_{k,l}} \bar{a}_{ij}^n(k,l)
\Delta_i^{n_k} X^{k} \Delta_j^{n_l} X^{l}
\mathbb{I}_{ \{ (t_{i-1}^{(k)}, t_{i}^{(k)}] \cap (t_{j-1}^{(l)}, t_{j}^{(l)}] \neq \emptyset \}} \nonumber \\[1.5 ex]
\label{decom} &+&\sum_{(i,j)\in I_{k,l}^c} \bar{b}_{ij}^n(k,l)
\Delta_i^{n_k} X^{k} \Delta_j^{n_l} X^{l}
\mathbb{I}_{ \{ (t_{i-1}^{(k)}, t_{i}^{(k)}] \cap (t_{j-1}^{(l)}, t_{j}^{(l)}] \neq \emptyset \}} \\[1.5 ex]
&+&\sum_{ (i,j)\in F_{k,l}} \bar{c}_{ij}^n(k,l)
\Delta_i^{n_k} X^{k} \Delta_j^{n_l} X^{l}
\mathbb{I}_{ \{ (t_{i-1}^{(k)}, t_{i}^{(k)}] \cap (t_{j-1}^{(l)}, t_{j}^{(l)}] = \emptyset \}}
\Big)~, \nonumber
\end{eqnarray}
for some constants $\bar{a}_{ij}^n(k,l), \bar{b}_{ij}^n(k,l), \bar{c}_{ij}^n(k,l)$, $I_{k,l}=\{ (i,j):~k_n\leq i\leq n_k-k_n,~
k_n\leq j\leq n_l-k_n\}$ and $I_{k,l}^c = \{ (i,j):~1\leq i\leq n_k,~
1\leq j\leq n_l\} - I_{k,l}$. Notice that all "border terms" are collected in the second summand whereas all terms with empty intersection of the intervals are in the third summand (in fact, we will see that both are negligible).

Notice that
\begin{equation*}
|a_{ij}^n(k,l)| + |b_{ij}^n(k,l)| + |c_{ij}^n(k,l)| \leq Cn.
\end{equation*}
Furthermore, we have
\begin{equation*}
\mathbb{E} \left[ |\Delta_i^{n_k} X^{k} \Delta_j^{n_l} X^{l} | \right] \leq C \sqrt{(t_{i}^{(k)} - t_{i-1}^{(k)})(t_{j}^{(l)} - t_{j-1}^{(l)})}
\end{equation*}
and $\sharp (I_{k,l}^c \cap \{(i,j):~(t_{i-1}^{(k)},
t_{i}^{(k)}] \cap (t_{j-1}^{(l)}, t_{j}^{(l)}] \neq \emptyset\}) \leq Ck_n$ by Eq. \eqref{schemereg}. This implies by Eq. \eqref{schemebound} that
\begin{equation*}
\frac{1}{ \left( \psi_{HY} k_{n} \right)^{2}} \sum_{(i,j)\in I_{k,l}^c} \bar{b}_{ij}^n(k,l)
\Delta_i^{n_k} Y^{k} \Delta_j^{n_l} Y^{l} \mathbb{I}_{ \{ (t_{i-1}^{(k)}, t_{i}^{(k)}] \cap (t_{j-1}^{(l)}, t_{j}^{(l)}] \neq \emptyset \}} = O_p(n^{-1/2})~,
\end{equation*}
and thus the second summand in Eq. \eqref{decom} is negligible.

Now, recall that $\Delta_i^{n_k} X^{k}$ can be replaced with $(\sigma_s\Delta_i^{n_k} W)^k$ for any $s$ with $t_i^{(k)} - s = O(k_n/n)$ ($1 \leq k \leq d$) without changing the first-order asymptotics \citep*[see, e.g.,][]{podolskij-vetter:09a}.
Notice also that the terms $\Delta_i^{n_k} W^{k} \Delta_j^{n_l} W^{l}
\mathbb{I}_{ \{ (t_{i-1}^{(k)}, t_{i}^{(k)}] \cap (t_{j-1}^{(l)}, t_{j}^{(l)}] = \emptyset \}}$ are mutually uncorrelated.
Hence,
\begin{equation*}
\frac{1}{ \left( \psi_{HY} k_{n} \right)^{2}}
\sum_{ (i,j)\in F_{k,l}} \bar{c}_{ij}^n(k,l) \Delta_i^{n_k} X^{k} \Delta_j^{n_l} X^{l}
\mathbb{I}_{ \{ (t_{i-1}^{(k)}, t_{i}^{(k)}] \cap (t_{j-1}^{(l)}, t_{j}^{(l)}] = \emptyset \}} = o_p(1).
\end{equation*}
Finally, consider the first summand in Eq. \eqref{decom}. As above, we approximate $\Delta_i^{n_k} X^{k} \Delta_j^{n_l} X^{l}$ by
\begin{equation*}
X_{i,j}^n (k,l) = (\sigma_{t_{i-1}^{(k)} \wedge t_{j-1}^{(l)}}\Delta_i^{n_k} W )^k
(\sigma_{t_{i-1}^{(k)} \wedge t_{j-1}^{(l)}}\Delta_j^{n_l} W )^l
\end{equation*}
whenever $(t_{i-1}^{(k)}, t_{i}^{(k)}] \cap (t_{j-1}^{(l)}, t_{j}^{(l)}] \neq \emptyset$, and set
\begin{equation*}
\overline{HY}[X]_{n}^{(k,l)} = \frac{1}{ \left( \psi_{HY} k_{n} \right)^{2}}
\sum_{(i,j)\in I_{k,l}} \bar{a}_{ij}^n(k,l) X_{i,j}^n (k,l)
\mathbb{I}_{ \{ (t_{i-1}^{(k)}, t_{i}^{(k)}] \cap (t_{j-1}^{(l)}, t_{j}^{(l)}] \neq \emptyset \}}.
\end{equation*}
Note that the quantities $HY[Y]_{n}^{(k,l)}$ and $\overline{HY}[X]_{n}^{(k,l)}$ have the same first order asymptotics. Then, it can be shown that
\begin{equation*}
\bar{a}_{ij}^n(k,l) = \left( \sum_{h=1}^{k_{n} - 1} g \left( \frac{h}{k_n} \right) \right)^2 = k_n^2 \left( \int_0^1 g(x) \text{d}x \right)^{2} + o \left( k_{n}^{2} \right).
\end{equation*}
Next, note that
\begin{equation}
\label{brcomp}
\mathbb{E} \left[ \Delta_i^{n_k} W^a \Delta_j^{n_l} W^b
\mathbb{I}_{ \{ (t_{i-1}^{(k)}, t_{i}^{(k)}] \cap (t_{j-1}^{(l)}, t_{j}^{(l)}] \neq \emptyset \}}| \mathcal{F}_{t_{i-1}^{(k)} \wedge t_{j-1}^{(l)}} \right]
= \delta_{ab} \left[\left( t_{i}^{(k)} \wedge t_{j}^{(l)} \right) - \left( t_{i-1}^{(k)} \vee t_{j-1}^{(l)} \right) \right].
\end{equation}
Eq. \eqref{brcomp} and Riemann integrability then delivers the convergence
\begin{equation*}
\frac{1}{ \left( \psi_{HY} k_{n} \right)^{2}}
\sum_{(i,j)\in I_{k,l}} \bar{a}_{ij}^n(k,l)
\mathbb{E} \left[ X_{i,j}^n (k,l)
\mathbb{I}_{ \{ (t_{i-1}^{(k)}, t_{i}^{(k)}] \cap (t_{j-1}^{(l)}, t_{j}^{(l)}] \neq \emptyset \}}|
\mathcal{F}_{t_{i-1}^{(k)} \wedge t_{j-1}^{(l)}} \right] \overset{p}{ \to} \int_{0}^{1} \Sigma_s^{kl} \text{d}s.
\end{equation*}
By usual martingale arguments we have that
\begin{equation*}
\overline{HY}[X]_{n}^{(k,l)} - \frac{1}{ \left( \psi_{HY} k_{n} \right)^{2}}
\sum_{(i,j)\in I_{k,l}} \bar{a}_{ij}^n(k,l)
\mathbb{E} \left[ X_{i,j}^n (k,l)
\mathbb{I}_{ \{ (t_{i-1}^{(k)}, t_{i}^{(k)}] \cap (t_{j-1}^{(l)}, t_{j}^{(l)}] \neq \emptyset \}}|
\mathcal{F}_{t_{i-1}^{(k)} \wedge t_{j-1}^{(l)}} \right] \overset{p}{ \to} 0.
\end{equation*}
On the other hand, $HY[Y]_{n}^{(k,l)} - \overline{HY}[X]_{n}^{(k,l)} \overset{p}{ \to} 0$.
Thus, collecting terms produces
\begin{equation*}
HY[Y]_{n}^{(k,l)} \overset{p}{ \to} \int_{0}^{1} \Sigma_s^{kl} \text{d}s,
\end{equation*}
which completes the proof. \qed \\[0.50cm]
\textit{Proof of Proposition \ref{orderTn}}: The terms
$\bar{Y^k}_i^n \bar{Y^l}_j^n \mathbb{I}_{\{ (t_i^{(k)}, t_{i+k_n}^{(k)}] \cap (t_j^{(l)}, t_{j+k_n}^{(l)}] \}}$
and $\bar{Y^k}_r^n \bar{Y^l}_s^n \mathbb{I}_{\{ (t_r^{(k)}, t_{r+k_n}^{(k)}] \cap (t_s^{(l)}, t_{s+k_n}^{(l)}] \}}$ are (asymptotically) uncorrelated when the intervals $(t_i^{(k)}, t_{i+k_n}^{(k)}]$ and $(t_j^{(l)}, t_{j+k_n}^{(l)}]$
do not intersect with $(t_r^{(k)}, t_{r+k_n}^{(k)}]$ and $(t_s^{(l)}, t_{s+k_n}^{(l)}]$. Thus, due to the assumption in Eq. \eqref{schemereg}, there are $O(n k_n^3)$ correlated terms, and each covariance has order $O(n^{-1})$ (cf. Eqn \eqref{Eqn:uorder}). This implies that $\text{var}(HY[Y]_n^{(k,l)})= O(n^{-1/2})$. \qed

\pagebreak

\ifx\undefined\BySame
\newcommand{\BySame}{\leavevmode\rule[.5ex]{3em}{.5pt}\ }
\fi \ifx\undefined\textmd
\newcommand{\textmd}[1]{{\sc #1}}
\fi \ifx\undefined\emph
\newcommand{\emph}[1]{{\em #1\/}}
\fi

\bibliographystyle{rfs}
\renewcommand{\baselinestretch}{1.0}

\end{document}